\documentclass[11pt]{article}

\usepackage[a4paper,margin=1in]{geometry}
\usepackage{amsmath,amssymb,amsthm,mathtools}
\usepackage{enumitem}
\usepackage{microtype}
\usepackage{xcolor}
\usepackage[colorlinks=true,linkcolor=blue!55!black,citecolor=blue!55!black,urlcolor=blue!55!black]{hyperref}
\usepackage[nameinlink,capitalize,noabbrev]{cleveref}
\allowdisplaybreaks[2]
\hypersetup{pdftitle={Spectral width and polynomial degree in perfect state transfer},
  pdfauthor={Xingkun Song}}

\newtheorem{theorem}{Theorem}[section]
\newtheorem{lemma}[theorem]{Lemma}
\newtheorem{proposition}[theorem]{Proposition}
\newtheorem{corollary}[theorem]{Corollary}
\theoremstyle{definition}
\newtheorem{example}[theorem]{Example}
\theoremstyle{remark}
\newtheorem{remark}[theorem]{Remark}

\newcommand{\C}{\mathbb C}
\newcommand{\R}{\mathbb R}
\newcommand{\Z}{\mathbb Z}
\newcommand{\T}{\mathsf T}
\newcommand{\supp}{\operatorname{supp}}
\newcommand{\spec}{\operatorname{Spec}}
\newcommand{\ee}{\mathrm e}
\newcommand{\ii}{\mathrm i}

\title{Spectral width and polynomial degree in perfect state transfer}
\author{
  Xingkun Song$^{1,2}$\thanks{Email:
  \href{mailto:xksong@126.com}{xksong@126.com}}\\[2ex]
  {\small $^{1}$ School of Mathematics and Statistics, Qinghai Minzu University,}\\
  {\small Xining, Qinghai 810007, P.R. China}\\[1ex]
  {\small $^{2}$ Qinghai Institute of Applied Mathematics,}\\
  {\small Xining, Qinghai 810007, P.R. China}
}

\date{}

\begin{document}

\maketitle

\begin{abstract}
We study the minimum time for perfect state transfer under polynomial
Hamiltonians with bounded degree and spectral width.  For a strongly
cospectral pair and width bound $W$, the optimum, when finite, is an
integer multiple of $\pi/W$, determined by integer interpolation with
prescribed parities.  For equally spaced supported eigenvalues with alternating
signs, we give degree bounds under which every minimizer is affine,
and sharp asymptotics for each fixed exact degree.  Near-minimizing
phase polynomials satisfy a quantitative Chebyshev stability estimate.  We determine the optimal
transfer time for every degree bound on hypercubes of odd prime
dimension.  For complementary vertices of $J(2m,m)$, the optimal time
at fixed spectral width grows exponentially in $m$ throughout an
interval of feasible degrees.  We also construct polynomial
Hamiltonians showing that every feasible degree $m-t$ with $t=o(m)$
admits subexponential transfer time.
\end{abstract}

\noindent\textbf{Keywords.}
Perfect state transfer; polynomial Hamiltonian; spectral width;
integer interpolation; Johnson graph; hypercube.

\noindent\textbf{MSC 2020.} 05C50; 15A18; 81P45.

\section{Introduction}

Let $H$ be a real symmetric matrix indexed by a finite set $V$.
Perfect state transfer (PST) from $a$ to $b$ at time $\tau>0$ means
\[
 \exp(-\ii\tau H)e_a=\gamma e_b,\qquad |\gamma|=1.
\]
Throughout the paper, transfer pairs consist of distinct vertices.
Write $H=\sum_{\lambda\in\Phi}\lambda E_\lambda$, where
$\Phi=\spec(H)$, and define
\[
 \Phi_a=\supp_H(e_a)=\{\lambda:E_\lambda e_a\ne0\},\qquad
 \R[H]=\{p(H):p\in\R[x]\}.
\]
The vertices $a,b$ are \emph{strongly cospectral} if their
supports coincide and
$E_\lambda e_b=\sigma_\lambda E_\lambda e_a$, with
$\sigma_\lambda\in\{1,-1\}$, on the common support.

The adjacency model of state transfer arose in quantum communication through
spin networks~\cite{Bose2003,ChristandlEtAl2004,ChristandlEtAl2005}.
Its spectral and arithmetic foundations were developed by
Godsil~\cite{Godsil2011Periodic,Godsil2012,Godsil2012When};
see also~\cite{Kay2010}.  Polynomial Hamiltonians already occur in
long-range spin networks~\cite{JafarizadehSufiani2008},
next-to-nearest-neighbor models~\cite{ChristandlVinetZhedanov2017}, and
weighted graphs in the Johnson scheme~\cite{VinetZhan2020}.
Strong cospectrality and its characterization by polynomial symmetries
are studied in~\cite{GodsilSmith2024,Monterde2022}.

For any real-valued function $f$ on $\Phi$, functional calculus gives
\[
 f(H)=\sum_{\lambda\in\Phi}f(\lambda)E_\lambda.
\]
Since $\Phi$ is finite, interpolation gives $f(H)=p(H)$ for a real
polynomial $p$.  Song~\cite[Proposition~2.2]{SongPowersParity} gives
necessary and sufficient phase conditions for PST under $f(H)$, recalled
in \cref{lem:spectral-pst}; that paper also classifies the
exponents for which a matrix power admits PST.  We study the minimum
transfer time under bounds on spectral width and polynomial degree.

\subsection{Spectral width and integer interpolation}

For a real symmetric matrix $K$, write
\[
 W_a(K)=\max\supp_K(e_a)-\min\supp_K(e_a),\qquad
 W(K)=\max\spec(K)-\min\spec(K).
\]
We call $W_a(K)$ the local spectral width at $a$ and $W(K)$ the
global spectral width, abbreviated to local width and global width.
If $K$ has PST from $a$ to $b$ at time $\tau$, then
$\tau W_a(K)\geq\pi$;
see~\cite{NessAlbertiSagi2022}.

Fix distinct strongly cospectral vertices $a,b$ for $H$ and a width
bound $W>0$.  Write
\[
 m_a(x)=\prod_{\lambda\in\Phi_a}(x-\lambda),\qquad
 r_-(\lambda)=\frac{1-\sigma_\lambda}{2}\quad(\lambda\in\Phi_a),
\]
where $\deg r_-<|\Phi_a|$, and put $\delta=\deg r_-$.
The \emph{local degree} of $p$ is the degree of its unique representative
of degree less than $|\Phi_a|$ modulo $m_a$.  Let $T_d(W)$ denote the minimum transfer time among
polynomial Hamiltonians of local degree at most $d$ and local width
at most $W$, with $T_d(W)=\infty$ if no such Hamiltonian admits PST.

\begin{theorem}\label{thm:optimal-transfer-time}
The minimum transfer time in $\R[H]$ subject to
$W_a(p(H))\leq W$, or to $W(p(H))\leq W$, is
$\pi/W$.  The polynomials attaining this time under the local
width constraint are precisely
\begin{equation}\label{eq:all-optimal-polynomials}
 p\equiv c+\varepsilon W r_-\pmod{m_a},
 \qquad c\in\R,\quad\varepsilon\in\{1,-1\}.
\end{equation}
Such a polynomial also satisfies the global width constraint if and
only if its values at unsupported eigenvalues lie between $c$ and
$c+\varepsilon W$.
Each of the two supported eigenvalues of a minimizing $p(H)$ has
spectral weight $1/2$ at $a$.
\end{theorem}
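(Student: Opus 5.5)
We rely on the phase criterion recalled in \cref{lem:spectral-pst}: since $a,b$ are strongly cospectral for $H$ and $p(H)$ has the same eigenprojections restricted to $\Phi_a$ (possibly merged), $p(H)$ has PST from $a$ to $b$ at time $\tau$ iff there is $\theta\in\R$ with $\tau p(\lambda)\equiv\theta+\pi r_-(\lambda)\pmod{2\pi}$ for all $\lambda\in\Phi_a$. This depends on $p$ only modulo $m_a$. The proof then has four steps: a lower bound, the characterization of minimizers, the global-width condition, and the weight computation.

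\textbf{Lower bound.} Since $a\ne b$ and $e_b=\sum\sigma_\lambda E_\lambda e_a$, some $\sigma_\lambda=-1$. Also some $\sigma_\mu=+1$, namely the one attached to $\lambda_{\max}$ of $H$ if that eigenvalue is supported; more robustly, $e_a\ne -e_b$. Hence $r_-$ takes both values $0$ and $1$ on $\Phi_a$. Pick $\lambda,\mu\in\Phi_a$ with $r_-(\lambda)=1$ and $r_-(\mu)=0$. The criterion gives $\tau(p(\lambda)-p(\mu))\equiv\pi\pmod{2\pi}$, so $\tau|p(\lambda)-p(\mu)|\ge\pi$. Since $\lambda,\mu$ lie in $\supp_{p(H)}(e_a)=p(\Phi_a)$, we get $|p(\lambda)-p(\mu)|\le W_a(p(H))\le W$, hence $\tau\ge\pi/W$. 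The same bound holds under the global constraint because $W_a\le W$. Equality is attained by $p=Wr_-$: its supported values are $0$ and $W$, and $\tau=\pi/W$ satisfies the criterion with $\theta=0$. Also $p=Wr_-$ meets the global constraint once we pick a representative of the residue class whose values at unsupported eigenvalues lie in $[0,W]$. Lagrange interpolation on all of $\Phi$ allows this, since only the class mod $m_a$ matters. So the minimum is $\pi/W$ in both settings.

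\textbf{Characterization of minimizers.} This is the main step. Suppose $\tau=\pi/W$ and $W_a(p(H))\le W$. Every supported value of $p$ lies in an interval of length $\le W$. For $r_-(\lambda)=1$ and $r_-(\mu)=0$, the argument above forces $|p(\lambda)-p(\mu)|=W$ exactly, since it is an odd multiple of $W$ and at most $W$. Fix one such pair, and set $c=p(\mu)$ and $\varepsilon=\operatorname{sign}(p(\lambda)-p(\mu))$. Every other $\mu'$ with $r_-=0$ satisfies $p(\mu')\equiv c\pmod{2W}$ and lies within $W$ of $p(\lambda)=c+\varepsilon W$, so $p(\mu')=c$. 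Symmetrically, every $\lambda'$ with $r_-=1$ satisfies $p(\lambda')\equiv c+\varepsilon W \pmod{2W}$ and lies within $W$ of $c$, so $p(\lambda')=c+\varepsilon W$. Thus $p=c+\varepsilon Wr_-$ on $\Phi_a$, which is \eqref{eq:all-optimal-polynomials}. Conversely, every such $p$ works by the attainment check above.

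\textbf{Global width and spectral weights.} For a minimizer, $\spec p(H)=p(\Phi)$ contains both $c$ and $c+\varepsilon W$. So $W(p(H))\le W$ iff all of $p(\Phi)$, and in particular the values at unsupported eigenvalues, lie in the closed interval between them. Finally, $p(H)$ has exactly two supported eigenvalues. Their projections applied to $e_a$ are $P_+e_a=\sum_{\sigma_\lambda=1}E_\lambda e_a=(e_a+e_b)/2$ and $P_-e_a=(e_a-e_b)/2$. Since $e_a\perp e_b$, each has squared norm $1/2$, which is the claimed weight. The only delicate point is the justification that both parities occur on $\Phi_a$, which follows from $e_b\ne\pm e_a$.
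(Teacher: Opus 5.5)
Your route is the paper's. The phase criterion of \cref{lem:spectral-pst} makes opposite-class values differ by an odd multiple of $\pi/\tau$, which gives $\tau W\ge\pi$. At $\tau=\pi/W$ the supported values collapse to two levels. The global-width and weight statements then follow as in the paper; your identity $P_\pm e_a=(e_a\pm e_b)/2$ is \eqref{eq:half-weights} in vector form.

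There is one misjustified inference, in the step you yourself call the main one. From $p(\mu')\equiv c\pmod{2W}$ and $|p(\mu')-(c+\varepsilon W)|\le W$ you cannot conclude $p(\mu')=c$, because $c+2\varepsilon W$ satisfies both conditions. Likewise, $p(\lambda')\equiv c+\varepsilon W\pmod{2W}$ and $|p(\lambda')-c|\le W$ still allow $p(\lambda')=c-\varepsilon W$. You compared each value with the value from the \emph{opposite} sign class, which does not pin it down.

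The comparison has to be with the value in the \emph{same} class. The value $p(\mu')$ lies within $W$ of $p(\mu)=c$, since all supported values lie in an interval of length at most $W$. It is also congruent to $c$ modulo $2W$, so it equals $c$. Equivalently, that interval must be exactly the closed interval between $c$ and $c+\varepsilon W$. This is the paper's sentence that two values in the same sign class differ by an integer multiple of $2W$ and are therefore equal. With this one-line repair, which uses only the interval fact you had already stated, your proof is complete.

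A minor point: your aside that the eigenvalue $\lambda_{\max}$ of $H$ carries sign $+1$ is false for general real symmetric $H$. For $H=-A(P_2)$ the top eigenvector is $(1,-1)$. Nothing depends on it, since your fallback $e_b\ne\pm e_a$ is the correct justification. Also, in the lower bound it is $p(\lambda),p(\mu)$, not $\lambda,\mu$, that lie in $\supp_{p(H)}(e_a)=p(\Phi_a)$.
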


Enumerate
$\Phi_a=\{\lambda_0,\ldots,\lambda_{s-1}\}$ and write
$\sigma_j=\sigma_{\lambda_j}$.  For $0\leq d<s$, set
\begin{align}
 \mathcal V_d
 &=\{(q(\lambda_j)-q(\lambda_0))_{j=0}^{s-1}:
                                q\in\R[x],\ \deg q\leq d\},
 \label{eq:eval-space}\\
 \mathcal Z_d
 &=\{z\in\mathcal V_d\cap\Z^s:
                 z_j\equiv(1-\sigma_j\sigma_0)/2\pmod2\}.
 \label{eq:parity-lattice}
\end{align}
The zero polynomial is allowed in the definition of $\mathcal V_d$.
All vectors in $\mathcal V_d$ have $z_0=0$.  Elements of $\mathcal Z_d$
are integer phase data, normalized at $\lambda_0$.  The degree of
phase data means the degree of their unique interpolating polynomial
of degree less than $s$ on the supported eigenvalues.  A degree bound is feasible when
$\mathcal Z_d\ne\varnothing$.  Define
\[
 R_d=\min_{z\in\mathcal Z_d}(\max_jz_j-\min_jz_j),
\]
with $R_d=+\infty$ when $\mathcal Z_d$ is empty.

\begin{theorem}\label{thm:degree-width}
For $0\leq d<|\Phi_a|$,
\begin{equation}\label{eq:degree-optimum}
 T_d(W)=\frac{\pi R_d}{W}.
\end{equation}
If $\mathcal Z_d\ne\varnothing$, its minimum range is attained and
$R_d$ is a positive integer.  Moreover,
\[
 R_d=1\quad\Longleftrightarrow\quad d\geq\delta.
\]
Thus PST with local degree at most $d$ is possible exactly when
$\mathcal Z_d\ne\varnothing$; if $d<\delta$, its time is at least
$2\pi/W$.
\end{theorem}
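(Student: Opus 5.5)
We reduce everything to the phase conditions of \cref{lem:spectral-pst}. Under $p(H)$, the matrix $\exp(-\ii\tau p(H))e_a$ equals $\gamma e_b$ exactly when, for every $\lambda\in\Phi_a$,
\[
\ee^{-\ii\tau p(\lambda)}=\gamma\sigma_\lambda .
\]
Strong cospectrality of $a,b$ is what makes this criterion apply. Normalizing at $\lambda_0$, the condition becomes
\[
\tau\bigl(p(\lambda_j)-p(\lambda_0)\bigr)\in\pi\Z,\qquad \frac{\tau\bigl(p(\lambda_j)-p(\lambda_0)\bigr)}{\pi}\equiv\frac{1-\sigma_j\sigma_0}{2}\pmod 2 .
\]
Set $z_j=\tau(p(\lambda_j)-p(\lambda_0))/\pi$. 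Only the values of $p$ on $\Phi_a$ matter, and these are the values of the reduction of $p$ modulo $m_a$. Hence if $p$ has local degree at most $d$, then $z=(\tau/\pi)\,(q(\lambda_j)-q(\lambda_0))_j$ with $\deg q\le d$. Because $\mathcal V_d$ is a linear subspace, this gives $z\in\mathcal V_d\cap\Z^s$ with the prescribed parities, that is, $z\in\mathcal Z_d$.

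Conversely, take any $z\in\mathcal Z_d$ and any $\tau>0$. Choose $q$ of degree at most $d$ with $q(\lambda_j)-q(\lambda_0)=z_j$, and let $p=(\pi/\tau)q$. This $p$ has PST at time $\tau$. Its local width is $(\pi/\tau)(\max z-\min z)$, since $z_0=0$ is among the values. So PST with local degree at most $d$ exists iff $\mathcal Z_d\ne\varnothing$. Moreover, the local width of any PST polynomial at time $\tau$ equals $\pi(\max z-\min z)/\tau$ for the associated $z$. The constraint $W_a\le W$ therefore reads $\tau\ge\pi(\max z-\min z)/W$, with equality achievable for each $z$ by the scaling above. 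Taking the infimum over $\mathcal Z_d$ gives \eqref{eq:degree-optimum}, provided the minimum range is attained.

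Attainment is easy because ranges of integer vectors are nonnegative integers, so the minimum over a nonempty set is attained. The range is nonzero because $a\ne b$. If $z=0$, every $\sigma_j\sigma_0=1$, so $E_\lambda e_b=\sigma_0E_\lambda e_a$ for all $\lambda\in\Phi_a$. Since $e_a$ lies in the span of its supported components, this gives $e_b=\sigma_0 e_a$, which is impossible. Hence $R_d$ is a positive integer.

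The main step is the characterization $R_d=1\iff d\ge\delta$. A vector with range $1$ and $z_0=0$ takes values in $\{0,1\}$ or in $\{-1,0\}$. Both are integers, so the parity condition forces $z_j\equiv(1-\sigma_j\sigma_0)/2$. Consequently $z_j=1$ exactly when $\sigma_j=-\sigma_0$, or $z_j=-1$ exactly when $\sigma_j=-\sigma_0$. In every case $z$ is, up to sign, the evaluation vector of $r_-$ or of $1-r_-$ on $\Phi_a$, shifted so that $z_0=0$; that is, $z=\pm(r_-(\lambda_j)-r_-(\lambda_0))$. Now $r_-$ and $1-r_-$ have the same local degree $\delta$ when $\delta\ge1$. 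If $\delta=0$, then $r_-$ is constant, and it must be $0$ because $r_-\equiv1$ would give $e_b=-e_a$. Since interpolation on $s$ points of degree less than $s$ is unique, this $z$ lies in $\mathcal V_d$ iff $d\ge\delta$. This yields the equivalence, and the bound $T_d(W)\ge2\pi/W$ for $d<\delta$ follows. The delicate points are the ruling out of $\delta=0$ with the all-minus sign, and the observation that the two sign choices give the same degree.
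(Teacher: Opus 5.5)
Your proof is correct and is essentially the paper's argument: you extract normalized integer phase data $z\in\mathcal Z_d$ whose range is $(\tau/\pi)W_a(p(H))$, then rescale an interpolant of a minimizing $z$ to width $W$. For $R_d=1\iff d\ge\delta$, the paper simply cites the classification $p\equiv c+\varepsilon Wr_-$ from \cref{thm:optimal-transfer-time}, while you redo that classification directly on range-one integer vectors, which amounts to the same thing.

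One correction to your closing remarks: $\delta=0$ cannot occur at all. $r_-\equiv0$ would give $e_b=e_a$, just as $r_-\equiv1$ gives $e_b=-e_a$; this is exactly your own ``both signs occur'' argument from the attainment step. So your claim that $r_-$ ``must be $0$'' is wrong. Were that case possible, your equivalence would in fact fail, because your formula $z=\pm(r_-(\lambda_j)-r_-(\lambda_0))$ would then produce only $z=0$, which has range zero.
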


For $H=A(G)$, one has $\delta\geq\operatorname{dist}_G(a,b)$ by
\cref{cor:distance-degree}, with $\delta=n-1$ for the endpoints of
$P_n$.  Moreover, if $\deg p\leq d$, then $(p(A))_{xy}=0$ whenever
$\operatorname{dist}_G(x,y)>d$.

For the hypercube $Q_p$, with $p$ an odd prime, every degree bound $1\leq d<p$ gives minimum transfer time
$p\pi/W$, while the degree bound $p$ gives $\pi/W$; see
\cref{thm:prime-cube}.  At even distance $D$ in a bipartite graph,
strong cospectrality and support size $D+1$ instead imply
$T_{D-1}(W)=2\pi/W$; see \cref{thm:graph-costs}.

\subsection{Equally spaced spectra}

In this subsection, suppose
\[
 \Phi_a=\{\theta+hj:0\leq j\leq n\},\qquad h\ne0,
 \qquad \sigma_j=\sigma_0(-1)^j.
\]
Throughout, $\log$ denotes the natural logarithm unless a base is specified.
Write $L_d=\operatorname{lcm}(1,\ldots,d)$ for $d\geq1$.
For $0<u<1$, put
\begin{equation*}
 \psi(u)=(1+u)\log(1+u)-(1-u)\log(1-u)-2u\log(4u),
\end{equation*}
and let $\rho_0\in(1/2,1)$ be its unique zero in that interval;
numerically, $\rho_0\approx0.6298694$.
Write $a_0=-\psi'(\rho_0)>0$.

For $1\leq d\leq n$, also set
\[
 \mathcal D_{n,d}=\frac{4}{n+1}
       \frac{\binom{n+d+1}{2d+1}}{\binom{2d}{d}}.
\]

\begin{theorem}\label{thm:arithmetic-width}
For $1\leq d\leq n$,
\begin{equation}\label{eq:arithmetic-width}
 R_d\geq
 \max_{\substack{1\leq m\leq n\\m\text{ odd}}}
 \frac{m}{\gcd(m,L_d)}.
\end{equation}
Furthermore, $R_d=n$ and all minimizing local representatives are
\begin{equation}\label{eq:affine-minimizers}
 q(x)=c\pm\frac{W}{n|h|}x,\qquad c\in\R,
\end{equation}
provided either $n\geq3d$, or
\begin{equation}\label{eq:cube-finite-rigidity}
 n\geq6,\qquad d>n/3,\qquad4\mathcal D_{n,d}>n^2.
\end{equation}
For every $\eta>0$, the same conclusion holds for all sufficiently
large $n$ whenever
\begin{equation}\label{eq:cube-log-rigidity}
 1\leq d\leq\rho_0n-\left(\frac{3}{a_0}+\eta\right)\log n.
\end{equation}
\end{theorem}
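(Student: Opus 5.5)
Reduce everything to integer phase data $z_j$ on the nodes $j=0,\dots,n$, via \cref{thm:degree-width}. With $\sigma_j=\sigma_0(-1)^j$, membership $z\in\mathcal Z_d$ means three things: $z_0=0$, $z_j\equiv j\pmod 2$, and $z$ agrees on $\{0,\dots,n\}$ with a polynomial $Z(j)$ of degree at most $d$ in $j$ (the affine change of variable $x=\theta+hj$ preserves degree). Integer-valued polynomials of degree at most $d$ are exactly $\sum_{k\le d}c_k\binom{j}{k}$ with $c_k\in\Z$. A polynomial of degree at most $d$ taking integer values at $n+1\ge d+1$ consecutive integers has this form, since the $c_k$ are iterated forward differences at $0$.

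\emph{The arithmetic bound.} Fix an odd $m\le n$, and write $g=\gcd(m,L_d)$, so $m/g$ is odd. Each $k!$ with $k\le d$ divides $L_d^{\,k}$ up to primes dividing $L_d$. Hence, for every prime $p\mid m/g$ with $p\nmid L_d$, the polynomial $Z$ has $p$-integral coefficients, and $Z(j+m')\equiv Z(j)\pmod{m'}$ for each prime power $m'\,\|\,m/g$ coprime to $L_d$. Summing over those prime powers gives $Z(m)\equiv Z(0)=0\pmod{m/g}$. Since $m$ is odd, $Z(m)$ is odd, so it is a nonzero multiple of $m/g$. Thus $\max z-\min z\ge |Z(m)-Z(0)|\ge m/g$. (If $p\mid L_d$ then $p\le d$, and such primes are exactly what $g$ removes.) This gives \eqref{eq:arithmetic-width}.

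\emph{Rigidity $R_d=n$.} The affine data $z_j=\pm j$ gives $R_d\le n$, and it corresponds to \eqref{eq:affine-minimizers} after rescaling by $W/(n|h|)$. Conversely, suppose $z\in\mathcal Z_d$ has range $r\le n$. If $n\ge 3d$, take the largest odd $m\le n$ coprime to $L_d$ when one exists; otherwise combine several odd $m$. The cleaner route is to show directly that every prime power $p^e\le n$ with $p>d$ forces $z_{j+p^e}\equiv z_j \pmod{p^e}$, and then use a Chebyshev-type extremal argument. A degree-$d$ polynomial bounded by $r$ on $\{0,\dots,n\}$ has leading forward differences bounded via the discrete Chebyshev (Gram) extremal problem. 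Its $d$-th difference $\Delta^dZ$ is the constant $d!\,c_d$, and $|\Delta^d Z|\le 2^{d}r$. Subtract $\pm j$. The difference $Y=z\mp j$ has even values and range at most $2n$, and the affine case corresponds to $Y\equiv 0$.

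The decisive inequality I will use is the discrete extremal bound. For a polynomial of degree at most $d$ with values in an interval of length $r$ on $n+1$ equally spaced nodes, the top coefficient in the binomial basis is bounded by roughly $r/\mathcal D_{n,d}$ times a constant. Here $\mathcal D_{n,d}$ comes from the discrete Chebyshev polynomial: $\binom{n+d+1}{2d+1}/\binom{2d}{d}$ is its normalization. The even integer coefficients of $Y$ must then vanish one degree at a time from the top, which gives $Y\equiv0$. Condition \eqref{eq:cube-finite-rigidity} is exactly the inequality making a nonzero even top coefficient impossible at width $n$ (the $n^2$ coming from range $\le 2n$ and parity). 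When $n\ge3d$, the arithmetic bound handles it instead: it gives $R_d\ge$ the largest odd $m\le n$ coprime to primes $\le d$, and together with equality analysis this should force $z_j=\pm j$.

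\emph{Asymptotics and the main obstacle.} The hardest step is the asymptotic regime \eqref{eq:cube-log-rigidity}. There I would estimate $\log\mathcal D_{n,d}$ with $d=un$ by Stirling. Its exponential rate is $n\psi(u)/2$ up to sign conventions, and it vanishes at $\rho_0$. Expanding near $\rho_0$ gives the rate $\approx a_0(\rho_0 n-d)$. The polynomial prefactors in $\mathcal D_{n,d}$ (the $4/(n+1)$ and the Stirling square roots) together with the $n^2$ threshold contribute $n^{3}$, which is absorbed by $(3/a_0+\eta)\log n$. The condition $4\mathcal D_{n,d}>n^2$ then holds for large $n$ on that range, and the range $d\le n/3$ is covered by $n\ge3d$. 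I expect the genuine difficulty to be proving the discrete Chebyshev extremal inequality itself with the exact constant $\mathcal D_{n,d}$. I would first try Lagrange interpolation of $\Delta^d$ at the $d+1$ extremal nodes of the discrete Chebyshev polynomial, giving an alternating sum of weights.
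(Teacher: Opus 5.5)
Your proof of \eqref{eq:arithmetic-width} has a gap, and the rigidity part is missing both of its key estimates.

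\textbf{The arithmetic bound.} Your argument only shows that the part of $m$ made of primes $p>d$ divides $Z(m)-Z(0)$. The parenthetical claim that $g=\gcd(m,L_d)$ removes every prime $p\le d$ is false. For each $p\le d$ it removes only $p^{\min(v_p(m),\lfloor\log_p d\rfloor)}$. Take $n=m=9$ and $d=7$: then $9/\gcd(9,L_7)=3$, so the theorem asserts $R_7\ge3$. Your argument gives nothing here, since $3\le d$ and $Z$ (which may contain $\binom x3$) is not $3$-integral. The missing step is \cref{lem:integer-shift}. From $i\binom mi=m\binom{m-1}{i-1}$ one gets $m/\gcd(m,i)\mid\binom mi$. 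Since $i\mid L_d$, this gives $m/\gcd(m,L_d)\mid\binom mi$ for $1\le i\le d$. Vandermonde's identity $\binom{t+m}k-\binom tk=\sum_{i\ge1}\binom mi\binom t{k-i}$ then transfers the divisibility to every integer combination $\sum_{k\le d}c_k\binom xk$.

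\textbf{Rigidity.} First, your claim that the arithmetic bound settles the case $n\ge3d$ is wrong.
\begin{itemize}
\item For $n=25$, $d=8$ (so $L_8=840$), the maximum in \eqref{eq:arithmetic-width} is $23$, because $25/\gcd(25,840)=5$.
\item For $n=9$, $d=3$, the maximum is $7$.
\end{itemize}
No equality analysis can reach $R_d=n$ from bounds like these. What you need is a real-variable bound for each exact degree $2\le k\le n/3$. The top Newton coefficient $c_k=\Delta^kf(0)$ is a nonzero even integer, because mod $2$ the data agree with $j$. The $k$th difference with step $s=\lfloor n/k\rfloor$ gives $c_ks^k=\sum_i(-1)^{k-i}\binom ki f(is)$. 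Hence $R\ge s^k/2^{k-2}$, which exceeds $n$ once $s\ge3$ (\cref{lem:nonlinear-range}). Your step-one bound $|\Delta^dZ|\le2^dr$ has no factor $s^k$ and is vacuous. Also, in the binomial basis $\Delta^dZ=c_d$, not $d!\,c_d$.

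Second, under \eqref{eq:cube-finite-rigidity} the hypothesis concerns degree $d$ only, but admissible data may have any exact degree $k\in[2,d]$. Eliminating coefficients ``one degree at a time'' therefore needs $4\mathcal D_{n,k}>n^2$ for every such $k>n/3$, with the step-$s$ bound covering $k\le n/3$. The paper gets this from
\[
\frac{\mathcal D_{n,k+1}}{\mathcal D_{n,k}}=\frac{(n+1)^2-(k+1)^2}{16(k+1)^2-4}<1\qquad(k\ge\lfloor n/3\rfloor).
\]
The same monotonicity extends your asymptotic estimate at the threshold $d_n$ to all smaller $d$. Note also that $\log\mathcal D_{n,d}=n\psi(d/n)-\log n+O(1)$, with no factor $1/2$.

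Third, the extremal inequality you leave open is $R^2\ge c_k^2\mathcal D_{n,k}$, i.e.\ $|c_k|\le R/\sqrt{\mathcal D_{n,k}}$, not $R/\mathcal D_{n,k}$. The natural route to exactly this constant is an $L^2$ estimate, not an $L^\infty$ Chebyshev one (\cref{lem:discrete-range}):
\begin{itemize}
\item Set $G=\Delta^\ell\bigl[x^{\underline\ell}(x-N-1)^{\underline\ell}\bigr]$; summation by parts shows $G$ is orthogonal to all lower degrees on $\{0,\dots,N\}$.
\item Vandermonde's convolution gives $\sum G^2=(2\ell)!(\ell!)^2\binom{N+\ell+1}{2\ell+1}$.
\item Compare this with $\sum(F-b)^2\le(N+1)R^2/4$, where $b$ is the midpoint of the extreme values.
\end{itemize}
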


In particular, for each fixed $0<\beta<\rho_0$, every degree bound
$1\leq d\leq\beta n$ admits only affine minimizers for all sufficiently
large $n$.  When $\Phi_a=\spec(H)$, local and global width agree,
so all statements in this subsection also hold under global width.

For hypercubes, $h=-2$ and the full spectrum is supported.  The first
strict improvement over the affine transfer time therefore requires
a degree of at least $(\rho_0-o(1))n$.  Throughout the degree range
\eqref{eq:cube-log-rigidity}, every minimizing Hamiltonian has only
nearest-neighbor couplings and a scalar diagonal term.  Whether
$\rho_0$ is the asymptotic threshold remains open.

For $2\leq k\leq n$, let
$S_{n,k}(W)$ be the minimum transfer time with local width at most $W$
and local degree exactly $k$, and set
\[
 C_k^*=\frac{1}{2^{2k-3}k!}.
\]
Let $\T_k(\cos t)=\cos(kt)$.  For a degree-$k$ polynomial $f_n$ with
$f_n(j)\in\Z$ and $f_n(j)\equiv j\pmod2$ on $0\leq j\leq n$, write
\[
 R_n=\max_j f_n(j)-\min_j f_n(j),\qquad
 b_n=\frac{\max_j f_n(j)+\min_j f_n(j)}2,\qquad
 F_n(t)=\frac{2(f_n(nt)-b_n)}{R_n}.
\]

\begin{theorem}\label{cor:exact-degree-growth}
For each fixed $k\geq2$,
\begin{equation}\label{eq:exact-degree-growth}
 S_{n,k}(W)=\frac{\pi}{W}
 \bigl(C_k^*n^k+O_k(n^{k-1})\bigr),
\end{equation}
and, for every $n\geq2$,
\begin{equation}\label{eq:quadratic-exact-degree}
 S_{n,2}(W)=\frac{\pi}{W}\left\lceil\frac n2\right\rceil^2.
\end{equation}
There is $C_k>0$ such that, for all sufficiently large $n$ and
$0\leq\varepsilon\leq1/4$,
\begin{equation}\label{eq:chebyshev-stability}
 R_n\leq(1+\varepsilon)C_k^*n^k
 \quad\Longrightarrow\quad
 \min_{\epsilon=\pm1}
 \|\epsilon F_n-\T_k(2\,\cdot-1)\|_{\infty,[0,1]}
 \leq C_k(\varepsilon+n^{-1}).
\end{equation}
Consequently, if $R_n\sim C_k^*n^k$, there are signs $\epsilon_n$ such that
\begin{equation}\label{eq:chebyshev-profile}
 \epsilon_nF_n(t)\longrightarrow\T_k(2t-1)
 \quad\text{uniformly on }[0,1].
\end{equation}
\end{theorem}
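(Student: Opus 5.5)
By the argument behind \cref{thm:degree-width}, restricted to phase data of \emph{exact} degree $k$, I expect $S_{n,k}(W)=\frac{\pi}{W}R^{(k)}_n$. Here $R^{(k)}_n$ is the minimum range of integer phase data $z$ of exact degree $k$ with $z_j\equiv(1-\sigma_j\sigma_0)/2\equiv j\pmod 2$. I take this exact-degree version of the reduction as the first step; it needs checking, since \cref{thm:degree-width} is stated only for degree bounds.

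Substituting $x=\theta+hj$ does not change degrees. So $R^{(k)}_n$ is the minimum of $\max_j f(j)-\min_j f(j)$ over polynomials $f$ of exact degree $k$ with $f(j)\in\Z$ and $f(j)\equiv j\pmod 2$ for $0\le j\le n$. Every such $f$ has the form $f(j)=j+2g(j)$ with $g$ integer-valued on $\{0,\dots,n\}$. Expanding in the binomial basis, $g=\sum_{i\le k}c_i\binom{j}{i}$ with $c_i\in\Z$. Since $k\ge2$, $c_k\ne0$, so the leading coefficient $a$ of $f$ satisfies $|a|\ge 2/k!$.

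For the \textbf{lower bound}, I use the Chebyshev extremal property on $[0,n]$. A degree-$k$ polynomial with leading coefficient $a$ has oscillation on $[0,n]$ at least $2|a|(n/4)^k\cdot 2=|a|n^k/2^{2k-2}$. With $|a|\ge 2/k!$ this is at least $C_k^*n^k$. To pass from the interval to the integer points, let $R$ be the discrete range. Markov's inequality, combined with Lagrange interpolation at roughly equispaced nodes (which bounds $\|f\|_{[0,n]}$ by $O_k(R)$), gives $|f'|\le C_kR/n$. The continuous oscillation therefore exceeds $R$ by at most $O_k(R/n)$, which yields $R\ge C_k^*n^k(1-O_k(1/n))$.

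For the \textbf{upper bound}, I start from the target $P(j)=\frac{n^k}{2^{2k-1}k!}\cdot 2\,\T_k(2j/n-1)$, scaled so that $P$ has leading coefficient $2/k!$ and range $C_k^*n^k$. I write $(P(j)-j)/2$ in the binomial basis; its top coefficient is exactly $1$. I then round the lower coefficients to the nearest integers to get $g$, and set $f=j+2g$. Since $|\binom{j}{i}|\le n^i$, the rounding error is $O_k(n^{k-1})$, and this gives \eqref{eq:exact-degree-growth}.

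For \eqref{eq:quadratic-exact-degree} I compute exactly. Here $f(j)=\pm(j^2-2cj)+\text{const}$, up to the shift $f=j+2g$; the parity condition forces $a=\pm c_2$ with $|c_2|\ge1$ and an even linear coefficient in $j^2+bj$. The range of $j^2-2cj$ over $0\le j\le n$, minimized over integers $c$, is attained at $c=\lfloor n/2\rfloor$ or $\lceil n/2\rceil$ and equals $\lceil n/2\rceil^2$. Taking $|c_2|>1$ only enlarges the range.

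For the \textbf{stability estimate} \eqref{eq:chebyshev-stability}, assume $R_n\le(1+\varepsilon)C_k^*n^k$. By the discretization step above, $\|F_n\|_{\infty,[0,1]}\le1+O(1/n)$. The leading coefficient $\alpha$ of $F_n$ satisfies $|\alpha|\ge 2^{2k-1}/(1+\varepsilon)$, because $|a|\ge2/k!$. Choose $\epsilon$ so that $\epsilon\alpha>0$, and write $\epsilon F_n=\lambda\,\T_k(2t-1)+q$ with $\lambda\ge 1-O(\varepsilon)$ and $\deg q<k$. At the $k+1$ alternation points $t_i$ of $\T_k(2t-1)$ the sup bound gives $(-1)^{k-i}q(t_i)\le\eta$, where $\eta=O(\varepsilon+n^{-1})$. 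On the other hand, the divided-difference identity $\sum_iw_i(-1)^{k-i}q(t_i)=0$ holds for all $\deg q<k$, with all $w_i>0$. Together these force $|q(t_i)|\le C\eta$ at every node. Lagrange interpolation at $k$ of these nodes then gives $\|q\|_\infty\le C_k\eta$, hence $\|\epsilon F_n-\T_k(2\cdot-1)\|\le C_k(\varepsilon+n^{-1})$. The limit \eqref{eq:chebyshev-profile} follows by letting $\varepsilon\to0$.

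I expect the main obstacle to be the uniform discrete-to-continuous comparison, with constants depending only on $k$, since both the lower bound and the stability estimate rest on it.
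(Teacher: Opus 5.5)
Your proposal is correct in outline, and your upper bound and stability argument are essentially the paper's. Your lower bound takes a different route.

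\textbf{Lower bound.} The paper never passes to the continuum here. It takes integers $j_i$ nearest to $nt_i$, with $t_i=(1-\cos(i\pi/k))/2$, and applies the Lagrange leading-coefficient formula directly to the integer data. Those weights sum to zero, and their absolute values sum to $n^{-k}(2^{2k-1}+O_k(n^{-1}))$. Subtracting the midpoint therefore gives $|c_k|/k!\le (R/2)\,n^{-k}(2^{2k-1}+O_k(n^{-1}))$ in one line. You instead use the continuous Chebyshev extremal bound on $[0,n]$. You then pay for the passage from integer nodes to the interval with a Lebesgue-constant-plus-Markov estimate. Both routes give the $O_k(n^{k-1})$ error. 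The advantage of yours is that one discretization lemma serves twice: for the lower bound and for the bound $\|F_n\|_{\infty,[0,1]}\le1+O_k(n^{-1})$ in the stability step. The paper obtains that sup bound separately, from uniformly bounded inverse Vandermonde matrices at the nodes $j_i/n$, which bound the coefficients and derivative of $\epsilon_nF_n$. The paper's lower bound is shorter and purely discrete.

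\textbf{Stability.} Your decomposition $\epsilon F_n=\lambda\,\T_k(2t-1)+q$ repackages the paper's identity, which writes $M_n\kappa_k-a_n$ as the sum of the nonnegative terms $|\omega_i|\bigl(M_n-\operatorname{sgn}(\omega_i)G_n(t_i)\bigr)$. Two remarks:
\begin{itemize}
\item You state only $\lambda\ge1-O(\varepsilon)$. To control $(\lambda-1)\T_k$ you also need $\lambda\le1+O(\varepsilon+n^{-1})$. This follows from $\lambda+(-1)^{k-i}q(t_i)\le1+O(n^{-1})$ once you know $|q(t_i)|\le C\eta$.
\item You do not need the paper's intermediate step $|c_k|=2$.
\end{itemize}

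\textbf{Quadratic case.} The claim that \emph{taking $|c_2|>1$ only enlarges the range} is asserted, not proved. It must hold for every $n\ge2$, so the asymptotic lower bound cannot supply it. It is also not a literal monotonicity, because the linear coefficient is odd when $c_2$ is even. The paper closes this with a convexity identity. After subtracting $f(0)$ and changing sign, write $f(j)=aj+bj(j-1)$ with $b\ge2$, and put $r=\lfloor n/2\rfloor$. Then
\[
(1-r/n)f(0)+(r/n)f(n)-f(r)=br(n-r).
\]
The left side is at most $R(f)$, so $R(f)\ge2\lfloor n^2/4\rfloor\ge\lceil n/2\rceil^2$ for all $n\ge2$. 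You should add this, or an equivalent exact argument, to complete \eqref{eq:quadratic-exact-degree}.
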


The limits in \cref{cor:exact-degree-growth} are taken with $k$ fixed.

\subsection{Johnson graphs}

The least degree admitting PST between complementary vertices of
$J(2m,m)$ is determined by
Vinet and Zhan~\cite[Theorems~3.6 and~3.7]{VinetZhan2020}.

For complementary vertices of $J(2m,m)$, put
$q=2^{\lfloor\log_2m\rfloor}$ and $\Gamma=(9+\sqrt{17})/8$.  Define
\begin{equation}\label{eq:johnson-range-bound}
 \Lambda_{m,d}=
 \left(\frac{2}{m+1}
 \frac{\binom{2m+2d+2}{4d+1}}{\binom{4d}{2d}}\right)^{1/2},
\end{equation}
\[
 U_{m,q}^2=\sum_{k=1}^q
 \frac{\binom{2m+2k+2}{4k+1}}{\binom{4k}{2k}},\qquad
 B_{m,q}=\sum_{r=0}^{\log_2q}\binom{m+2^r}{2^{r+1}}.
\]
For $d\geq q$, let $\varepsilon_q=1$ and $\varepsilon_d=2$ for $d>q$.
The full spectrum is supported, so local and global width agree
throughout this subsection.

\begin{theorem}\label{thm:johnson-exponential}
For complementary vertices of $J(2m,m)$, one has $R_d=\infty$ for
$d<q$ and $R_m=1$.  For $q\leq d\leq m$,
\begin{equation}\label{eq:johnson-general-bounds}
 \varepsilon_d\Lambda_{m,d}\leq R_d\leq\min\{U_{m,q},B_{m,q}\},
\end{equation}
where $\log U_{m,q}=m\log\Gamma+O(\log m)$.
In particular, if $m=2q-1$ and $q=2^a$ with $a\geq1$, then
\begin{equation}\label{eq:johnson-exponential}
 \left(\frac{2\binom{6q}{2q}}
 {(4q+1)\binom{4q}{2q}}\right)^{1/2}
 \leq R_q\leq U_{2q-1,q}.
\end{equation}
\end{theorem}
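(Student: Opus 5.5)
The plan is to move everything to the index variable $j$.

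\textbf{Setup.} Label the eigenvalues of $J(2m,m)$ as $\theta_j=(m-j)^2-j$ for $0\le j\le m$. For complementary vertices the whole spectrum is supported and $\sigma_j=(-1)^j$. Since $\theta_j$ is a quadratic in $j$ symmetric about $c=m+\tfrac12$, a polynomial $Q$ of degree $\le d$ in $\theta$ becomes $g(j)=Q(\theta_j)$. This $g$ is a polynomial of degree $\le 2d$ in $j$ satisfying $g(2m+1-j)=g(j)$. Conversely, every polynomial in $(j-c)^2$ of degree $\le d$ arises this way. So $\mathcal Z_d$ consists of the vectors $z_j=g(j)-g(0)$, $0\le j\le m$, with $g$ symmetric of degree $\le 2d$, integer-valued on $\{0,\dots,m\}$, and $z_j\equiv j\pmod 2$. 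By symmetry these values extend to the $n+1$ points $0,\dots,n$ with $n=2m+1$, and the range is unchanged.

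\textbf{Feasibility.} The statement $R_d=\infty$ for $d<q$ is the Vinet--Zhan least-degree result, rephrased through \cref{thm:degree-width}. The statement $R_m=1$ is the equivalence $R_d=1\iff d\ge\delta$ of \cref{thm:degree-width}, once we check $\delta=m$. To check this, note that $r_-$ is the indicator of odd $j$. Its interpolant in $\theta$ has degree exactly $m$, because its $m$-th divided difference in $\theta$ is a signed sum of nonzero terms that all have the same sign.

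\textbf{Upper bounds.} I would give two explicit constructions.
\begin{itemize}[leftmargin=*]
\item For $B_{m,q}$, take
\[
 g(j)=\sum_{r=0}^{\log_2 q}\binom{m+2^r-j}{2^{r+1}}
\]
(up to re-centering). Each summand $\binom{x+2^r}{2^{r+1}}$ with $x=m-j$ is a product of factors $(x+i)(x+1-i)$, hence a polynomial in $x(x+1)=\theta_j+\text{const}$ of degree $2^r\le q$. It is integer-valued. By Lucas' theorem its parity is a binary digit of a suitable quantity. Summing over $r\le\log_2 q$ produces the parity pattern $j\bmod 2$, using $q>m/2$, which is where $q=2^{\lfloor\log_2 m\rfloor}$ enters. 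The range is bounded by the sum of the maxima, namely $B_{m,q}$.
\item For $U_{m,q}$, expand instead in the discrete orthogonal (Gram/Hahn) polynomials on $\{0,\dots,n\}$ that are even about $c$, up to degree $2q$. I would pick an integer combination with the required parities and bound its sup norm by Cauchy--Schwarz, using the norm of each degree-$2k$ basis element. These norms are exactly the ratios $\binom{2m+2k+2}{4k+1}/\binom{4k}{2k}$, which gives $U_{m,q}^2$ as the sum.
\end{itemize}
The asymptotic $\log U_{m,q}=m\log\Gamma+O(\log m)$ is a Stirling computation. The sum is dominated by $k$ near the top of the range $k\le q$, with $q\asymp m$, and $\Gamma=(9+\sqrt{17})/8$ comes from maximizing the entropy exponent over the ratio $k/m$.

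\textbf{Lower bound (main obstacle).} I would mirror the argument behind $\mathcal D_{n,d}$ in \cref{thm:arithmetic-width}. With $n=2m+1$ and degree $2d$, one has
\[
 \mathcal D_{n,2d}=\tfrac{2}{m+1}\binom{2m+2d+2}{4d+1}\big/\binom{4d}{2d}=\Lambda_{m,d}^2 .
\]
So it should suffice to show two things. First, the top discrete-Chebyshev coefficient of $g$ is nonzero and bounded below by $\varepsilon_d$ times the normalized integer quantity $\Delta^{2d}g(0)$. Second, this forces $\|g-\text{mean}\|_\infty\gtrsim\sqrt{\mathcal D_{n,2d}}$, as in the earlier proof.

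The delicate points are the nonvanishing and the factor $\varepsilon_d$. For $d=q$, minimality of $q$ shows the degree-$2q$ component cannot vanish. For $d>q$, I would reduce to the top nonzero even degree and use a $2$-adic argument on the leading finite difference to gain the factor $2$. Arranging this so that it is uniform in $d$ is where I expect most of the work.

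Finally, \eqref{eq:johnson-exponential} follows by specializing to $m=2q-1$ and $d=q$, where $\varepsilon_q=1$ and $\Lambda_{2q-1,q}$ simplifies to the displayed expression.
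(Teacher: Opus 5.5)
\textbf{Gap in the lower bound for $d>q$.} The setup is sound: the symmetric extension to $0,\ldots,2m+1$ and the identity $\mathcal D_{2m+1,2k}=\Lambda_{m,k}^2$ are correct. So are the divided-difference proof of $\delta=m$ and the Lucas check for $B_{m,q}$. In binary digits, $\binom{x+2^r}{2^{r+1}}\equiv b_r(x)+b_{r+1}(x)\pmod 2$, and since $2q>m$ the sum telescopes to $x\bmod 2$. The problem is the lower bound $\varepsilon_d\Lambda_{m,d}\le R_d$ when $d>q$.

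\Cref{lem:discrete-range} must be applied at the \emph{exact} degree $2k$ of the phase polynomial, where $q\le k\le d$. Working at degree $2d$ gives nothing when $k<d$, since then $\Delta^{2d}g=0$. At degree $2k$ it yields only $R\ge|c_k|\Lambda_{m,k}$, with $c_k=\Delta^{2k}g(0)\in\Z$. Two ingredients are then missing.
\begin{enumerate}[label=\textup{(\arabic*)}]
\item You need $\Lambda_{m,k}\ge\Lambda_{m,d}$, and this is not automatic: $k\mapsto\Lambda_{m,k}$ increases up to $k\approx m/\sqrt{17}$. The paper proves $\Lambda_{m,k+1}\le\Lambda_{m,k}/4$ for $\lceil m/2\rceil\le k<m$ (\cref{lem:johnson-range-monotone}). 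This comes from
$\mathcal D_{N,\ell+1}/\mathcal D_{N,\ell}=\bigl((N+1)^2-(\ell+1)^2\bigr)/\bigl(16(\ell+1)^2-4\bigr)\le 1/4$ for $\ell\ge m$, with $N=2m+1$. It applies here because $k\ge q>m/2$.
\item Your $2$-adic gain fails in the subcase $k=q<d$, where $c_q$ is odd. This subcase does occur: both of your upper-bound constructions have degree exactly $q$. There the factor must come from monotonicity instead: $R\ge\Lambda_{m,q}\ge4\Lambda_{m,q+1}\ge4\Lambda_{m,d}$.
\end{enumerate}
Parity helps only when $q<k\le m$. Then $k$ is not a power of two, because $2q>m$, so $c_k$ is a nonzero even integer and $R\ge2\Lambda_{m,k}\ge2\Lambda_{m,d}$.

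\textbf{Smaller points on $U_{m,q}$.}
\begin{itemize}
\item An integer combination of the even orthogonal polynomials $P_k$ is not admissible in general. The $P_k$ have nonintegral lower coefficients in the basis $B_i$; for $m=2$, $P_2=B_2-\tfrac{5}{14}B_1+\tfrac17$.
\item The step that works is top-down rounding. For $k=q,\ldots,1$, add $\alpha_kP_k$ with $|\alpha_k|\le1$ so that the $B_k$-coefficient becomes the nearest integer of the prescribed parity. Triangularity keeps the higher coefficients fixed, and orthogonality gives $\sum F^2\le U_{m,q}^2$ on the reflected grid.
\item A plain sup-norm bound only gives range at most $\sqrt2\,U_{m,q}$. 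To get $U_{m,q}$, use $(M-L)^2\le2(M^2+L^2)\le\sum F^2$, which holds because each value on $0,\ldots,m$ appears twice on the reflected grid.
\item The sum defining $U_{m,q}^2$ is not dominated by terms near $k=q$. The function $\psi$ is maximized at $u_*=1/\sqrt{17}$, with $\psi(u_*)=\log\Gamma$, whereas $\psi(q/m)\le\psi(1/2)=\tfrac12\log(27/16)<\log\Gamma$. What matters is that $q>m/2>mu_*$, so the dominant term lies inside the summation range.
\end{itemize}
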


The lower bound in \eqref{eq:johnson-exponential} is
$\Theta(q^{-1/2}(27/16)^q)$, while
$\log U_{2q-1,q}=2q\log\Gamma+O(\log q)$.
Hence, along this family,
\[
 T_q(W)=\frac{\pi}{W}\exp(\Theta(m)),\qquad
 T_m(W)=\frac{\pi}{W}.
\]

\begin{theorem}\label{thm:johnson-degree-loss}
For $1\leq t\leq m-q$,
\begin{equation*}
 R_{m-t}\leq6L_t\binom{2m+1}{t}\binom{4m+2}{t},
 \qquad \log R_{m-t}=O\!\left(t\log\frac{em}{t}\right).
\end{equation*}
If $m$ is not a power of two, then $R_{m-1}\leq3m-2$.
\end{theorem}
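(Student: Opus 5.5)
The plan is to work in the coordinate $j$ instead of the eigenvalue. For $J(2m,m)$ the eigenvalues supported at a vertex are $\theta_j=(m-j)^2-j$ with $0\leq j\leq m$. For complementary vertices the signs are $\sigma_j=(-1)^j$ with $\sigma_0=1$, so the required parities are $z_j\equiv j\pmod 2$. Since
\[
\theta_j=\Bigl(j-\tfrac{2m+1}{2}\Bigr)^2-\tfrac{4m+1}{4},
\]
a polynomial of degree $\le e$ in $\theta$ is the same thing as a polynomial of degree $\le 2e$ in $j$ that is invariant under $j\mapsto 2m+1-j$. We will therefore look for a symmetric polynomial $g(j)$ of degree $\le 2(m-t)$ that takes integer values at $j=0,\ldots,m$ with $g(j)\equiv j\pmod 2$, and bound its range.

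\textbf{Step 1 (linear conditions).} The degree-$m$ solution is $r_-(\theta_j)=(1-(-1)^j)/2$, which has range $1$; this is the statement $R_m=1$ from \cref{thm:johnson-exponential}. Every admissible vector has the form $z=r_-+2w$ with $w\in\Z^{m+1}$. The requirement $\deg_\theta\leq m-t$ is equivalent to the vanishing of the $t$ top divided differences of $z$ on the nodes $\theta_0,\ldots,\theta_m$:
\[
\ell_i(z)=\sum_{j}\frac{\theta_j^{\,i}\,z_j}{m_a'(\theta_j)}=0,\qquad m-t<i\leq m .
\]
Since $\theta_j-\theta_k=(j-k)(j+k-2m-1)$, we have
\[
m_a'(\theta_j)=\pm\, j!\,(m-j)!\,\frac{(2m+1-j)!}{(m+1-j)!\,\cdots},
\]
so every weight is an explicit ratio of binomial coefficients in $2m+1$ and $4m+2$.

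\textbf{Step 2 (correction supported on a window).} We take $w$ supported on a block of $t+1$ consecutive indices, say near $j=m$, where the weights are largest, together with $w$ and a global rescaling absorbed into the free constant. We then solve the $t\times(t+1)$ system $\ell_i(2w)=-\ell_i(r_-)$ by Cramer's rule.

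The coefficient matrix is a Vandermonde matrix in the $\theta_j$ times a diagonal matrix of the weights. After factoring, its minors are products of the differences $(j-k)(j+k-2m-1)$ over the window. The denominators involving $j-k$ with $|j-k|\leq t$ are cleared by $L_t$. The other factors are bounded by $\binom{2m+1}{t}$ and $\binom{4m+2}{t}$, which come from the products of $m+1-j+\cdots$ and $2m+1-j-k$ over the window.

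This is where I expect the main difficulty. The solution $w$ must be \emph{integral}, and we cannot simply scale it, because the parity class of $z$ is fixed. The first thing to try is to replace $r_-$ by an equivalent representative $r_-+2u$, with $u$ an integer vector chosen so that the right-hand side is divisible by the Cramer denominator. A cleaner alternative is to write the correction directly as $2\cdot(\text{integer-valued symmetric polynomial})$ in the binomial basis $\binom{j}{k}\binom{2m+1-j}{k}$. Its top coefficients can cancel those of the interpolant of $r_-$ up to denominators dividing $L_t$. The loss from rounding then costs only the factor $6L_t$.

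\textbf{Step 3 (range bound).} With the correction built this way, each entry of $2w$ is at most $6L_t\binom{2m+1}{t}\binom{4m+2}{t}$ in absolute value, and $r_-$ contributes range $1$; together these give the stated bound. Since $L_t=e^{O(t)}$ and $\binom{Cm}{t}\leq (eCm/t)^t$, we get $\log R_{m-t}=O(t\log(em/t))$.

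\textbf{The case $t=1$.} When $m$ is not a power of two we have $q<m$, so degree $m-1$ is feasible by \cref{thm:johnson-exponential}. Only one condition $\ell_m$ has to be killed. I would write down explicitly a symmetric quadratic-in-$\theta$ correction of the form $2\alpha\binom{j}{1}\binom{2m+1-j}{1}$ plus a fixed parity pattern, and check directly that its values on $0\leq j\leq m$ stay within an interval of length $3m-2$. Feasibility for $m$ not a power of two guarantees that the one-dimensional congruence can be solved.
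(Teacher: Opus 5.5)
\textbf{The gap is in Step~2, and it is the integrality problem you flag yourself.} With $w$ supported on a single window of $t+1$ indices, the real solutions of $\ell_i(2w)=-\ell_i(r_-)$ form a line. That line need not contain an integer point. Neither of your suggested repairs is carried out, and the bound in Step~3 is asserted rather than derived.

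Already for $t=1$, writing $\mu_j=\binom{2m}{j}-\binom{2m}{j-1}$ and $N=\binom{2m}{m}$, the condition reads $\sum_j(-1)^j\mu_jw_j=N/4$. So a window $\{i,i+1\}$ works only if $\gcd(\mu_i,\mu_{i+1})\mid N/4$. For $m=5$ one has $(\mu_j)=(1,9,35,75,90,42)$ and $N/4=63$. But $\gcd(75,90)=15$ and $\gcd(90,42)=6$, so the windows near $j=m$ that you propose both fail.

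More structurally, your general-$t$ argument never uses $t\leq m-q$. For $t>m-q$ there are no admissible data at all, so an argument that ignores this hypothesis cannot be producing integer solutions. Your binomial-basis alternative would need the top coefficients of the interpolant of $r_-$ to be even. That is exactly where $m-t\geq q$ enters, via \eqref{eq:johnson-phase-basis}. Even then, cancelling those coefficients only produces \emph{some} admissible datum of degree at most $m-t$, with no control on its range.

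A smaller slip: in Step~1 the functionals should be $\sum_j\theta_j^iz_j/m_a'(\theta_j)$ for $0\leq i<t$, not $m-t<i\leq m$. For $i=m$ the functional equals $1$ on the constant vector.

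\textbf{The missing idea is reduction modulo a lattice with short local generators.} For each $i=0,\ldots,m-t$ the paper writes down an explicit positive integer vector $w^{(i)}$ supported on $\{i,\ldots,i+t\}$ and lying in the kernel of the moment equations. With $M=2m+1$ and $L=M-2i$, its entries are
$\frac{L_t}{\binom{t}{r}}\binom{i+r}{r}\binom{M-i-r}{t-r}\binom{L}{r}\binom{L-r-t-1}{t-r}$.
The factor $L_t$ appears because $\binom{t}{r}\mid L_t$. Vandermonde's convolution bounds each entry by $\frac{L_t}{\binom{t}{r}}\binom{M}{t}\binom{2M}{t}$. Their staggered supports make these $m-t+1$ vectors a real basis of the kernel.

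One then starts not from $r_-$ but from an admissible $z^0$ of degree at most $m-t$; its existence is where $m-t\geq q$ is used. Expand $z^0=\sum_i\alpha_iw^{(i)}$ and subtract $b_iw^{(i)}$, with $b_i$ the even integer nearest $\alpha_i$. Integrality and parities survive, and all coefficients land in $[-1,1]$. Each coordinate meets at most $t+1$ windows and $\sum_r\binom{t}{r}^{-1}<3$, so every entry is at most $3L_t\binom{M}{t}\binom{2M}{t}$ in absolute value.

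\textbf{Your $t=1$ sketch does not work either.} The correction $2\alpha j(2m+1-j)=2\alpha(m^2-\theta_j)$ has degree one in $\theta$. It therefore lies in the kernel of the moment functional and cannot change $\sum_j(-1)^j\mu_jz_j$, so everything is left to the unspecified parity pattern. Two inputs are needed:
\begin{enumerate}[label=\textup{(\alph*)}]
\item the fact that $4\mid\binom{2m}{m}$ exactly when $m$ is not a power of two, which follows from $v_2\binom{2m}{m}=s_2(m)$;
\item a small integer solution of $\sum_j(-1)^j\mu_jx_j=N/4$.
\end{enumerate}
The paper obtains (b) by greedy nearest-integer division, from the index of largest $\mu_j$ down to $\mu_0=1$, using $\mu_{j+1}/\mu_j<m-1$ for $j\geq1$. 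This gives $|x_j|\leq\lfloor(m-1)/2\rfloor$ for $j\geq1$ and $|x_0|\leq m-1$, hence range at most $3m-2$.
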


Thus $R_{m-t}=\exp(o(m))$ for every feasible degree loss $t=o(m)$,
and $R_{m-t}=O_t(m^{2t})$ for every fixed feasible degree loss.
Here a degree loss $t$ is feasible when $m-t\geq q$.
The constructions are given in
\cref{prop:johnson-deficit,prop:johnson-linear}.

\begin{corollary}\label{cor:johnson-window}
For every $\beta\in(1/2,\rho_0)$ there is $c_\beta>0$ such that,
for all sufficiently large $m$ and every integer $q\leq d\leq\beta m$,
\begin{equation}\label{eq:johnson-window}
 R_d\geq c_\beta m^{-1/2}\exp(m\psi(\beta)).
\end{equation}
Consequently,
$T_d(W)=(\pi/W)\exp(\Theta_\beta(m))$ uniformly over this degree
interval whenever it is nonempty.
In particular, for any sequence of polynomial Hamiltonians on
$J(2m,m)$ transferring between complementary vertices, with local
degrees $d_m$, widths $W_m>0$, and transfer times $\tau_m$,
\[
 \log(\tau_mW_m/\pi)=o(m)
 \quad\Longrightarrow\quad
 \liminf_{m\to\infty}\frac{d_m}{m}\geq\rho_0.
\]
The same statements hold along any sequence of diameters tending to infinity.
\end{corollary}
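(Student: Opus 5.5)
The plan is to reduce the corollary to the lower bound of \cref{thm:johnson-exponential} and estimate the binomial ratio in $\Lambda_{m,d}$ asymptotically. \Cref{thm:johnson-exponential} gives $R_d\geq\Lambda_{m,d}$ for $q\leq d\leq m$, where
\[
 \Lambda_{m,d}^2=\frac{2}{m+1}\frac{\binom{2m+2d+2}{4d+1}}{\binom{4d}{2d}} .
\]
Set $N=2m+2d+2\approx 2m(1+u)$ and $K=4d+1\approx 4mu$ with $u=d/m$. Stirling's formula gives
\[
 \log\binom{N}{K}=2m\Bigl[(1+u)\log(1+u)-2u\log(2u)-(1-u)\log(1-u)\Bigr]+O(\log m),
\]
and $\log\binom{4d}{2d}=4d\log2+O(\log m)$. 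Subtracting, the exponent of $\Lambda^2$ is $2m$ times
\[
 (1+u)\log(1+u)-(1-u)\log(1-u)-2u\log(2u)-2u\log2=\psi(u).
\]
Hence $\log\Lambda_{m,d}=m\psi(d/m)+O(\log m)$. To get the precise factor $m^{-1/2}$, I will keep the Stirling prefactors. These are $\sqrt{N/(2\pi K(N-K))}$ from the numerator and $\sqrt{\pi\cdot 2d}$ from the central binomial coefficient. Both are $\Theta_\beta(m^{\mp1/2})$ uniformly for $d/m$ in a compact subinterval of $(0,1)$. Together with the $2/(m+1)$ factor, this gives $\Lambda^2\asymp_\beta m^{-1}e^{2m\psi(u)}$ on such a range. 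For small $d$, say $q\leq d\leq\epsilon m$, I will bound crudely instead; since $q>m/2$, in fact $d/m\in(1/2,\beta]$ automatically, so the compact range is $[1/2,\beta]$.

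The key monotonicity step is to show that $\psi$ is decreasing on $(1/2,1)$, or at least that $\psi(u)\geq\psi(\beta)$ for $u\in[1/2,\beta]$. I compute
\[
 \psi'(u)=\log(1+u)+\log(1-u)+2-2\log(4u)-2=\log\frac{1-u^2}{16u^2},
\]
which is negative as soon as $1-u^2<16u^2$, i.e.\ $u>1/\sqrt{17}$. So $\psi$ is decreasing on $[1/2,1)$, and $\psi(d/m)\geq\psi(\beta)>0$ because $\beta<\rho_0$. This is consistent with $a_0=-\psi'(\rho_0)>0$. There is one subtlety: the discrete quantities $N,K$ differ from $2m(1+u)$ and $4mu$ by $O(1)$. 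Expanding $\log\binom{N}{K}$ to first order, these shifts change the exponent by only $O(1)$ uniformly on the compact range, which is absorbed into $c_\beta$. This uniform control of the error terms is the main technical point, but it is routine. It yields $R_d\geq\Lambda_{m,d}\geq c_\beta m^{-1/2}\exp(m\psi(\beta))$.

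For the consequences:
\begin{itemize}
\item \emph{Two-sided bound.} The upper bound $R_d\leq U_{m,q}=\exp(m\log\Gamma+O(\log m))$ from \cref{thm:johnson-exponential}, combined with \eqref{eq:degree-optimum}, gives $T_d(W)=(\pi/W)\exp(\Theta_\beta(m))$.
\item \emph{Degree threshold.} For the final implication, suppose $\liminf d_m/m<\rho_0$. Pick $\beta\in(1/2,\rho_0)$ with $d_m\leq\beta m$ along a subsequence; we have $d_m\geq q$ by feasibility, since $R_d=\infty$ for $d<q$. By \eqref{eq:degree-optimum} and the definition of $T_d$, $\tau_mW_m\geq\pi R_{d_m}$ for local width $W_m$. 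Since global width is at least local width, this also holds under global width. Then $\log(\tau_mW_m/\pi)\geq m\psi(\beta)-O(\log m)$, contradicting $o(m)$.
\item \emph{Diameters.} The diameter of $J(2m,m)$ is $m$, so the statement along any sequence of diameters tending to infinity is just a restatement.
\end{itemize}
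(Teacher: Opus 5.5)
Your proposal is correct and follows the paper's proof essentially step for step. The steps are the same: the uniform Stirling estimate $\log\Lambda_{m,d}=m\psi(d/m)-\tfrac12\log m+O_\beta(1)$ on $[1/2,\beta]$ (valid because $q>m/2$), monotonicity of $\psi$ from $\psi'(u)=\log\frac{1-u^2}{16u^2}$, the lower bound $R_d\geq\varepsilon_d\Lambda_{m,d}$ paired with the upper bound $U_{m,q}$, and a subsequence argument for the degree threshold. You also check explicitly that the bounded index shifts in the binomial coefficients cost only a bounded factor, a detail the paper leaves implicit.
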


For example, $\beta=3/5$ gives $\psi(3/5)\approx0.0679596>0$.
When $m=2q-1$, the interval from $q$ to $\lfloor3m/5\rfloor$
contains a number of degrees proportional to $m$.

The exponential bounds are in the diameter $m$; the graph has
$\binom{2m}{m}$ vertices.  For comparison, on $Q_m$ an affine
Hamiltonian gives $T_q(W)\leq m\pi/W$.  Thus the distance $m$ and
support size $m+1$ do not determine the order of $T_q(W)$.

For strongly cospectral vertices at distance $D$ with $D+1$ supported
eigenvalues, \cref{thm:weight-partition} characterizes
$T_{D-1}(W)=2\pi/W$ by a subset of one sign class with spectral
weight $1/4$.  For $1\leq d<D-1$, \cref{thm:weight-partition}(i)
replaces the weight equality by equalities of spectral moments.
Part \textup{(ii)} of that theorem characterizes $T_d(W)\leq3\pi/W$
by the signed moment identity \eqref{eq:signed-quarter}.
Together with explicit interpolating polynomials, these characterizations
give the exact times on $J(6,3)$ and $J(10,5)$ in
\cref{prop:johnson-optima}.

\Cref{sec:spectral} proves the spectral interpolation and optimization
statements.  \Cref{sec:degree-graphs} characterizes degree constraints
by spectral moment equations and applies the resulting weight partitions
to bipartite graphs.  \Cref{sec:equally-spaced} proves the
results for equally spaced spectra.  \Cref{sec:johnson} gives the Johnson bounds and
constructions.  \Cref{sec:conclusion} discusses open questions.

\section{Spectral interpolation and time optimization}\label{sec:spectral}
\label{sec:time-optimality}

The following lemma is \cite[Proposition~2.2]{SongPowersParity};
see also \cite[Theorem~3.1]{SongPowersParity}.
For any vector $x$, put
$\supp_H(x)=\{\lambda:E_\lambda x\ne0\}$.

\begin{lemma}\label{lem:spectral-pst}
Let $H$ be real symmetric and $f:\spec(H)\to\R$.
Then $f(H)$ has PST from $a$ to $b$ at time $\tau>0$ with phase $\gamma$
if and only if $a,b$ are strongly cospectral for $H$ and
\begin{equation}\label{eq:spectral-phases}
 \ee^{-\ii\tau f(\lambda)}=\gamma\sigma_\lambda
 \qquad(\lambda\in\Phi_a).
\end{equation}
\end{lemma}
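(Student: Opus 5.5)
The statement immediately above is \cref{lem:spectral-pst}, which the paper quotes from \cite[Proposition~2.2]{SongPowersParity}. To verify it directly, I would work in the spectral decomposition of $H$. Write $f(H)=\sum_{\lambda}f(\lambda)E_\lambda$, so that
\[
 \exp(-\ii\tau f(H))e_a=\sum_{\lambda\in\Phi_a}\ee^{-\ii\tau f(\lambda)}E_\lambda e_a .
\]
PST with phase $\gamma$ means this vector equals $\gamma e_b$. Because the $E_\lambda$ are orthogonal idempotents summing to the identity, the identity holds if and only if it holds after applying each $E_\lambda$. This gives the family of equations
\[
 \ee^{-\ii\tau f(\lambda)}E_\lambda e_a=\gamma E_\lambda e_b
 \qquad(\lambda\in\spec(H)).
\]
Note that the projections are those of $H$, not of $f(H)$. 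This is legitimate: $f$ need not be injective, but the eigenprojections of $H$ refine those of $f(H)$, so no information is lost.

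\emph{Sufficiency.} Assume $a,b$ are strongly cospectral and the phase condition \eqref{eq:spectral-phases} holds.
\begin{itemize}
\item For $\lambda\in\Phi_a$, substituting $E_\lambda e_b=\sigma_\lambda E_\lambda e_a$ and $\ee^{-\ii\tau f(\lambda)}=\gamma\sigma_\lambda$ turns each equation into an identity, since $\sigma_\lambda^2=1$.
\item For $\lambda\notin\Phi_a$, both sides vanish, because $E_\lambda e_b=0$ when the supports of $a$ and $b$ coincide.
\end{itemize}

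\emph{Necessity.} Assume PST holds.
\begin{itemize}
\item If $\lambda\notin\Phi_a$, the left side is $0$, so $E_\lambda e_b=0$. Hence $\Phi_b\subseteq\Phi_a$.
\item If $\lambda\in\Phi_a$, then $E_\lambda e_b=\bar\gamma\,\ee^{-\ii\tau f(\lambda)}E_\lambda e_a$, a unimodular scalar $\mu_\lambda$ times $E_\lambda e_a$. In particular $E_\lambda e_b\neq0$, so the supports coincide.
\end{itemize}

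The main point, though not a hard one, is showing that $\mu_\lambda$ is real, i.e.\ $\mu_\lambda=\pm1$. Since $H$ is real symmetric, each $E_\lambda$ is real, so both $E_\lambda e_a$ and $E_\lambda e_b$ are real nonzero vectors. A real nonzero vector can equal a complex scalar multiple of another real vector only if the scalar is real. A real unimodular scalar is $\pm1$, so $\mu_\lambda=\sigma_\lambda\in\{1,-1\}$.

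This gives strong cospectrality. Rearranging $\mu_\lambda=\sigma_\lambda$ gives $\ee^{-\ii\tau f(\lambda)}=\gamma\sigma_\lambda$ on $\Phi_a$, which is \eqref{eq:spectral-phases}. The whole argument is a direct spectral computation.
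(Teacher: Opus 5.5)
Your proposal is correct and follows the paper's own proof. Like the paper, you apply each $E_\lambda$ to the transfer equation, note that $E_\lambda e_a$ and $E_\lambda e_b$ vanish together, use the reality of these projections to force the unimodular scalar to be $\pm1$, and sum the projected equations for the converse; you just spell out in full the steps the paper states in one line each.
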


\begin{proof}
Applying $E_\lambda$ to the transfer equation gives
$\ee^{-\ii\tau f(\lambda)}E_\lambda e_a=\gamma E_\lambda e_b$.
The projections vanish together.  When nonzero, they are real vectors
related by a scalar of modulus one; that scalar is therefore $1$ or $-1$.
Conversely, summing the projected equations gives PST.
\end{proof}

\begin{proposition}\label{prop:local-polynomial-residue}
Let $H$ be real symmetric.  For $p,q\in\R[x]$, the following are equivalent:
\[
 p(H)e_a=q(H)e_a;\qquad
 p|_{\Phi_a}=q|_{\Phi_a};\qquad m_a\mid(p-q).
\]
When they hold, $p(H)$ and $q(H)$ agree on the complex cyclic subspace
$\mathcal K_a=\operatorname{span}_{\C}\{H^je_a:j\geq0\}$ and
\[
 \exp(-\ii t p(H))e_a=\exp(-\ii t q(H))e_a\qquad(t\in\R).
\]
Every residue modulo $m_a$ has a unique representative of degree less
than $|\Phi_a|$.
\end{proposition}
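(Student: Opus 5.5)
The plan is to run the cycle of equivalences through the spectral decomposition $H=\sum_{\lambda\in\Phi}\lambda E_\lambda$ and then obtain the remaining claims from the middle condition. Functional calculus gives $p(H)e_a=\sum_{\lambda\in\Phi}p(\lambda)E_\lambda e_a=\sum_{\lambda\in\Phi_a}p(\lambda)E_\lambda e_a$, and likewise for $q$.

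\emph{First equivalence.} The vectors $E_\lambda e_a$ with $\lambda\in\Phi_a$ are nonzero and pairwise orthogonal, since the $E_\lambda$ are orthogonal idempotents with orthogonal ranges. Hence they are linearly independent. So $p(H)e_a=q(H)e_a$ holds if and only if $p(\lambda)=q(\lambda)$ for every $\lambda\in\Phi_a$. One can also see this by applying $E_\mu$ to the difference.

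\emph{Second equivalence.} The eigenvalues of a real symmetric matrix are real, so $m_a=\prod_{\lambda\in\Phi_a}(x-\lambda)$ is a product of distinct linear factors in $\R[x]$. Therefore $p-q$ vanishes on $\Phi_a$ if and only if each factor $x-\lambda$ divides $p-q$. Because the factors are pairwise coprime, this happens if and only if their product $m_a$ divides $p-q$.

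\emph{Agreement on $\mathcal K_a$.} The polynomials commute with $H$: if $p(H)e_a=q(H)e_a$, then $p(H)H^je_a=H^jp(H)e_a=H^jq(H)e_a=q(H)H^je_a$ for all $j\geq0$. Extending by complex linearity, $p(H)$ and $q(H)$ agree on $\mathcal K_a$.

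\emph{Exponential identity.} Functional calculus gives $\exp(-\ii tp(H))e_a=\sum_{\lambda\in\Phi_a}\ee^{-\ii tp(\lambda)}E_\lambda e_a$. This depends only on $p|_{\Phi_a}$, so it equals the corresponding expression for $q$. Alternatively, $p(H)$ maps the invariant subspace $\mathcal K_a$ into itself and agrees there with $q(H)$; the exponential power series then agree on $\mathcal K_a$, which contains $e_a$.

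\emph{Unique representative.} Use division with remainder by the monic polynomial $m_a$, whose degree is $|\Phi_a|$. Every $p$ can be written $p=m_ag+r$ with $\deg r<|\Phi_a|$. If $r$ and $r'$ are two such remainders of the same residue, then $m_a\mid r-r'$ while $\deg(r-r')<\deg m_a$, which forces $r=r'$. Equivalently, Lagrange interpolation on the $|\Phi_a|$ distinct points gives a unique representative.

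None of these steps presents a real obstacle. The only points needing care are that the $E_\lambda e_a$ are linearly independent and that the eigenvalues are real, so that $m_a$ lies in $\R[x]$.
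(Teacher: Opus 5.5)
Your proof is correct and follows essentially the paper's argument: orthogonality of the nonzero vectors $E_\lambda e_a$ gives the first equivalence, the factor theorem gives divisibility by $m_a$, agreement on $\mathcal K_a$ gives agreement of the exponentials, and polynomial division gives the unique representative. The only cosmetic difference is that you get agreement on $\mathcal K_a$ by commuting $p(H)$ past $H^j$, whereas the paper observes that the vectors $E_\lambda e_a$ span $\mathcal K_a$.
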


\begin{proof}
The nonzero vectors $E_\lambda e_a$, $\lambda\in\Phi_a$, are mutually
orthogonal, so equality on $e_a$ is equivalent to equality of all supported
values.  The factor theorem gives the divisibility assertion.  The vectors $E_\lambda e_a$
span $\mathcal K_a$, on which $p(H)$ and $q(H)$ have the same action.
Their exponentials therefore agree on $\mathcal K_a$.  Polynomial division
gives the unique representative.
\end{proof}

In particular, a real polynomial in $H$ can realize PST between $a,b$
if and only if they are strongly cospectral for $H$.
Necessity follows from \cref{lem:spectral-pst}; for sufficiency,
interpolate $r_-(\lambda)=(1-\sigma_\lambda)/2$ on $\Phi_a$.  Then
\[
 \exp(-\ii\pi r_-(H))e_a=e_b.
\]
See also \cite[Theorem~3.2]{Monterde2022}.  If the supported values are extended by zero to the unsupported
eigenvalues, interpolation on the full spectrum gives the projection
$\sum_{\lambda\in\Phi_a:\,\sigma_\lambda=-1}E_\lambda$.

Strong cospectrality gives
\begin{equation}\label{eq:half-weights}
 \sum_{\sigma_\lambda=1}\|E_\lambda e_a\|^2
 =\sum_{\sigma_\lambda=-1}\|E_\lambda e_a\|^2=\frac12.
\end{equation}
Indeed, the difference of these sums is $\langle e_a,e_b\rangle=0$
and their sum is $1$.

\begin{proof}[Proof of \cref{thm:optimal-transfer-time}]
Suppose $p(H)$ has PST at time $\tau$ and has local width at most $W$.
For $\lambda,\mu\in\Phi_a$ with $\sigma_\lambda=-\sigma_\mu$,
dividing the corresponding phase equations in
\eqref{eq:spectral-phases} gives
\[
 \tau\bigl(p(\lambda)-p(\mu)\bigr)\in\pi+2\pi\Z.
\]
Thus $\tau W\geq\pi$.  If $\tau=\pi/W$, the local width must be $W$.
Two values in the same sign class differ by an integer multiple of
$2\pi/\tau=2W$, and hence are equal.  The two resulting values have
opposite phases and are separated by exactly $W$.  They are therefore
$c$ on the positive class and $c+\varepsilon W$ on the negative class.
By \cref{prop:local-polynomial-residue}, this is equivalent to
\eqref{eq:all-optimal-polynomials}.

Conversely, interpolation gives these values, and every polynomial with
these values satisfies
\[
 \exp(-\ii\pi p(H)/W)e_a=\ee^{-\ii\pi c/W}e_b.
\]
The lower bound proves that $\pi/W$ is its first positive transfer time.
Equation~\eqref{eq:half-weights} gives weight $1/2$ at each of
the two supported eigenvalues of $p(H)$.
Since the supported values already have separation $W$,
$W(p(H))\leq W$ holds if and only if every unsupported value
lies between $c$ and $c+\varepsilon W$.
Such values may be assigned arbitrarily in that interval and interpolated
on the full spectrum.  This proves existence for the global constraint
as well as the classification.
\end{proof}

A polynomial attaining $\pi/W$ under the local width constraint has
local degree $\delta$.  Interpolation on the full spectrum gives a
polynomial of degree less than $|\spec(H)|$ attaining the same time
under the global width constraint.

\begin{proof}[Proof of \cref{thm:degree-width}]
Let $p(H)$ have local degree at most $d$, and replace $p$ by its
representative modulo $m_a$.  If $p(H)$ has PST at
$\tau$, dividing \eqref{eq:spectral-phases} at $\lambda_j$ by
the equation at $\lambda_0$ gives
\[
 z_j=\frac{\tau}{\pi}\bigl(p(\lambda_j)-p(\lambda_0)\bigr)
 \in\Z,\qquad
 z_j\equiv\frac{1-\sigma_j\sigma_0}{2}\pmod2.
\]
Thus $z\in\mathcal Z_d$.  Both signs occur, so every such integer vector
has positive range.  If $W_a(p(H))\leq W$, then
\[
 R_d\leq\max_jz_j-\min_jz_j
       =\frac{\tau}{\pi}W_a(p(H))\leq\frac{\tau W}{\pi}.
\]
This proves the lower bound in \eqref{eq:degree-optimum}.

If $\mathcal Z_d\ne\varnothing$, its positive integer ranges have a
least member.  Choose a minimizing vector $z$ and a polynomial $q$ of
degree at most $d$ with $q(\lambda_j)-q(\lambda_0)=z_j$.
The polynomial $p=(W/R_d)q$ has local width $W$.  At time
$\tau=\pi R_d/W$, its phase differences are $\pi z_j$, so
\eqref{eq:parity-lattice} gives \eqref{eq:spectral-phases} with
$\gamma=\sigma_0\exp(-\ii\tau p(\lambda_0))$.  Hence $p(H)$ has PST.
The lower bound just proved excludes an earlier transfer time and proves
attainment.

By \cref{thm:optimal-transfer-time}, attaining $\pi/W$ is equivalent to
the residue $c+\varepsilon Wr_-$.  Since both signs occur, $r_-$ is
nonconstant, and this representative has degree exactly $\delta$.
This proves $R_d=1$ if and only if $d\geq\delta$.  The remaining finite
values of $R_d$ are integers at least $2$.
\end{proof}

\Cref{thm:degree-width} is independent of the reference eigenvalue: replacing
$\lambda_0$ by $\lambda_\ell$ replaces $z_j$ by $z_j-z_\ell$.
The range is unchanged and
$z_j-z_\ell\equiv(1-\sigma_j\sigma_\ell)/2\pmod2$.  For any prescribed
range bound $R$, only finitely many integer vectors need be checked, since
$z_0=0$ forces $|z_j|\leq R$.  A search with a fixed range bound does not decide whether
$\mathcal Z_d$ is empty.

\begin{corollary}\label{cor:distance-degree}
Let $H=A(G)$ and let $a,b$ be strongly cospectral.  The least local degree
of a polynomial attaining transfer time $\pi/W$ under local width
at most $W$ satisfies
\[
 \delta\geq\operatorname{dist}_G(a,b).
\]
\end{corollary}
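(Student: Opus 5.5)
Since $r_-$ is already a polynomial of degree $\delta$ on $\Phi_a$ (its representative has degree less than $|\Phi_a|$), the plan is to show that $r_-(A)e_a$ has a nonzero entry at $b$. The standard fact about adjacency powers then forces $\delta\geq\operatorname{dist}_G(a,b)$. By \cref{thm:optimal-transfer-time}, every polynomial attaining $\pi/W$ under local width at most $W$ is congruent modulo $m_a$ to $c+\varepsilon W r_-$. Since $r_-$ is nonconstant, the least local degree is exactly $\deg r_-=\delta$, as already noted after the proof of \cref{thm:optimal-transfer-time}. So it suffices to show $\deg r_-\geq\operatorname{dist}_G(a,b)$.

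The key computation is $(r_-(A))_{ba}$. By functional calculus and \cref{prop:local-polynomial-residue}, we have
\[
 r_-(A)e_a=\sum_{\lambda\in\Phi_a}r_-(\lambda)E_\lambda e_a=\sum_{\sigma_\lambda=-1}E_\lambda e_a .
\]
Pairing this with $e_b$ and using strong cospectrality, $E_\lambda e_b=\sigma_\lambda E_\lambda e_a$, gives
\[
 e_b^{\T} r_-(A)e_a=\sum_{\sigma_\lambda=-1}\langle E_\lambda e_a,E_\lambda e_b\rangle=-\sum_{\sigma_\lambda=-1}\|E_\lambda e_a\|^2=-\tfrac12,
\]
by \eqref{eq:half-weights}. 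An equivalent route uses the identity $\exp(-\ii\pi r_-(A))e_a=e_b$ together with $r_-^2=r_-$ on $\Phi_a$, which gives $\exp(-\ii\pi r_-(A))e_a=(I-2r_-(A))e_a$. Comparing $b$-entries yields $-2(r_-(A))_{ba}=1$, since $a\neq b$.

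Finally, write $r_-(x)=\sum_{k\le\delta}c_kx^k$. Then $(r_-(A))_{ba}=\sum_k c_k(A^k)_{ba}$, and $(A^k)_{ba}=0$ whenever $k<\operatorname{dist}_G(a,b)$, because $(A^k)_{ba}$ counts walks of length $k$. If $\delta<\operatorname{dist}_G(a,b)$, this entry would vanish, contradicting the value $-1/2$ computed above. Hence $\delta\geq\operatorname{dist}_G(a,b)$.

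The only delicate point is being careful that we evaluate the actual polynomial $r_-$ of degree $\delta$ at $A$. No reduction modulo $m_a$ is needed at that step, and the representative used is exactly that polynomial, so nothing is lost.
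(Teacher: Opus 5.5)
Your proof is correct and is essentially the paper's argument: the paper takes $q=1-2r_-$, uses $q(A)e_a=e_b$ to get $(q(A))_{ba}=1$, and then applies $(A^j)_{ba}=0$ for $j<\operatorname{dist}_G(a,b)$. This is exactly your second route, and equivalently your value $(r_-(A))_{ba}=-1/2$; computing that entry directly via the half-weight identity is just a more explicit way to exhibit the same nonzero entry.
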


\begin{proof}
The polynomial $q=1-2r_-$ has degree $\delta$ and $q(A)e_a=e_b$.
For $j<\operatorname{dist}_G(a,b)$, the entry $(A^j)_{ba}$ is zero.
Thus a polynomial of smaller degree cannot have $(q(A))_{ba}=1$.
\end{proof}

\begin{example}\label{ex:P3}
For the endpoints of $P_3$, the supported eigenvalues are
$-\sqrt2,0,\sqrt2$, with signs $+,-,+$.  Here
$r_-(x)=1-x^2/2$, so $\delta=2$.  For an affine polynomial, the two
successive spectral gaps are equal; the parity constraints force their
normalized values to be odd integers.  Taking the smallest choice gives
an integer vector $(0,1,2)$, so $R_1=2$ and $R_2=1$.
Thus the factor-two gap in \cref{thm:degree-width} is sharp.
\end{example}

\begin{remark}\label{prop:paths}
The endpoints of $P_n$ are strongly cospectral and support all $n$
eigenvalues.  Indeed, the eigenvectors with coordinates
$\sin(\ell j\pi/(n+1))$ have nonzero endpoint coordinates related by
$(-1)^{j+1}$.  \Cref{cor:distance-degree} and interpolation give
$\delta=n-1$.
\end{remark}

\section{Degree constraints and spectral-weight partitions}
\label{sec:degree-graphs}

Throughout this section, $A$ is the adjacency matrix of a finite simple
connected graph, and $a,b$ are strongly cospectral vertices at distance
$D\geq2$ with $|\Phi_a|=D+1$.  Such support-size and distance conditions
are closely related to spectral extremality; see~\cite{Coutinho2016Extremal}.
Write
\[
 w_\lambda=(E_\lambda)_{aa},\qquad
 \Phi_\pm=\{\lambda\in\Phi_a:\sigma_\lambda=\pm1\},\qquad
 w(S)=\sum_{\lambda\in S}w_\lambda.
\]
Every $w_\lambda$ is positive, and $w(\Phi_+)=w(\Phi_-)=1/2$.
By \cref{cor:distance-degree} and interpolation, $\deg r_-=D$.
In particular, $R_{D-1}\geq2$ whenever it is finite, whereas $R_D=1$.

\begin{lemma}\label{lem:one-moment}
For $0\leq d<D$, real data $(y_\lambda)_{\lambda\in\Phi_a}$ are
values of a polynomial of degree at most $d$ if and only if
\begin{equation}\label{eq:moment-system}
 \sum_{\lambda\in\Phi_a}\sigma_\lambda w_\lambda
             \lambda^k y_\lambda=0\qquad(0\leq k<D-d).
\end{equation}
Consequently, $R_d$ is the minimum range of integer data satisfying
\eqref{eq:moment-system} and
$y_\lambda\equiv(1-\sigma_\lambda)/2\pmod2$.
For $d=D-1$, the system reduces to
\begin{equation*}
 \sum_{\lambda\in\Phi_a}\sigma_\lambda w_\lambda y_\lambda=0.
\end{equation*}
\end{lemma}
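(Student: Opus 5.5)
The plan is to describe, by duality, which vectors on the $D+1$ points of $\Phi_a$ are annihilated by all low-degree data. On these $D+1$ points, the data $(y_\lambda)$ coming from polynomials of degree at most $d$ form a subspace of dimension $d+1$, because $d<D+1$ and Lagrange interpolation applies. Its annihilator under the standard pairing therefore has dimension $D-d$. It suffices to show that the $D-d$ vectors $v_k=(\sigma_\lambda w_\lambda\lambda^k)_\lambda$, for $0\le k<D-d$, lie in this annihilator and are linearly independent. Independence is immediate: the $w_\lambda$ are positive and the $\sigma_\lambda$ are nonzero, so a linear relation $\sum_k c_kv_k=0$ means a polynomial of degree less than $D-d\le D$ vanishes at $D+1$ distinct points, forcing $c_k=0$.

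The main step is the orthogonality
\[
 \sum_{\lambda\in\Phi_a}\sigma_\lambda w_\lambda\lambda^k y_\lambda=0
\]
for $y=p|_{\Phi_a}$ with $\deg p\le d$. The plan is to write $\sigma_\lambda w_\lambda=(E_\lambda)_{ab}$; this holds because $E_\lambda e_b=\sigma_\lambda E_\lambda e_a$ and $E_\lambda$ is symmetric. For any polynomial $g$,
\[
 \sum_{\lambda\in\Phi_a}\sigma_\lambda w_\lambda g(\lambda)=\sum_{\lambda\in\Phi_a}g(\lambda)(E_\lambda)_{ab}=(g(A))_{ab},
\]
since the terms with unsupported $\lambda$ have $E_\lambda e_a=0$. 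Taking $g(x)=x^kp(x)$, which has degree at most $k+d<D$, gives $(g(A))_{ab}=0$ because $\operatorname{dist}(a,b)=D$. This is the only point where the graph structure is used, and it is the key step. Comparing dimensions then shows that equation \eqref{eq:moment-system} is both necessary and sufficient.

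For the consequence, combine this characterization with \cref{thm:degree-width}. Shifting $y$ by a constant does not affect membership in the degree-$\le d$ space, since constants have degree $0$. Likewise, the parity normalization $z_j\equiv(1-\sigma_j\sigma_0)/2$ relative to $\lambda_0$ agrees with $y_\lambda\equiv(1-\sigma_\lambda)/2$ up to a global translation by $0$ or $1$ (when $\sigma_0=-1$, add $1$). Neither translation changes the range. So $R_d$ is the minimum range of integer $y$ satisfying \eqref{eq:moment-system} with the stated parities. When $d=D-1$, only $k=0$ remains, which gives the single displayed equation. I expect no real obstacle; the one point requiring care is the identification $(E_\lambda)_{ab}=\sigma_\lambda w_\lambda$ together with the vanishing of $(A^j)_{ab}$ for $j<D$.
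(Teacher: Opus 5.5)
Your proposal is correct and follows essentially the same argument as the paper. Necessity comes from the walk identity $(g(A))_{ab}=0$ for $\deg g<D$, applied to $g=x^kp$. Independence comes from the Vandermonde structure of the rows $\sigma_\lambda w_\lambda\lambda^k$, and a dimension count then gives sufficiency. The range statement follows from \cref{thm:degree-width} after a constant shift.

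The only difference is cosmetic. The paper normalizes at a reference node in $\Phi_+$, whereas you keep $\lambda_0$ and translate by $1$ when $\sigma_0=-1$. Both choices preserve the range, the parities and the moment equations.
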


\begin{proof}
For $0\leq j<D$, no walk of length $j$ joins $a$ to $b$, so
\[
 \sum_{\lambda\in\Phi_a}\sigma_\lambda w_\lambda\lambda^j
 =(A^j)_{ba}=0.
\]
If $y_\lambda=q(\lambda)$ with $\deg q\leq d$, these identities
applied to $x^kq(x)$ prove necessity.  The $D-d$ row vectors in
\eqref{eq:moment-system} are linearly independent: their nonzero
coordinate factors $\sigma_\lambda w_\lambda$ multiply a Vandermonde
matrix.  Their common kernel has dimension $d+1$, equal to the
polynomial evaluation space, proving sufficiency.
Choose a reference node in $\Phi_+$.  Subtracting its even integer value
normalizes the reference value to zero without changing the range,
parities, or moment equations.  Now apply \cref{thm:degree-width}.
\end{proof}

\begin{proposition}\label{prop:degree-product}
Under the hypotheses of this section, $dR_d\geq D$ whenever
$1\leq d\leq D$ and $R_d$ is finite.  Thus
$T_d(W)\geq\lceil D/d\rceil\pi/W$.
\end{proposition}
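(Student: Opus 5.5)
The plan is to compose the phase polynomial with a low-degree interpolant of the map $y\mapsto(-1)^y$ on a short run of integers. This produces a polynomial in $A$ of degree at most $dR_d$ that sends $e_a$ to $\pm e_b$. The distance argument of \cref{cor:distance-degree} then forces that degree to be at least $D$.

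First, take $R=R_d<\infty$ and a minimizing vector $z\in\mathcal Z_d$, which exists by \cref{thm:degree-width}. Choose a polynomial $q$ with $\deg q\leq d$ and $q(\lambda_j)-q(\lambda_0)=z_j$. Replacing $q$ by $q-q(\lambda_0)-\min_j z_j$, we may assume that the supported values $q(\lambda_j)$ are integers lying in $\{0,1,\ldots,R\}$. The parity condition in \eqref{eq:parity-lattice} then says that
\[
(-1)^{q(\lambda_j)}=\kappa\,\sigma_j
\]
for a fixed sign $\kappa=\pm(-1)^{z_0-\min z}$ independent of $j$.

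Second, by Lagrange interpolation at the $R+1$ nodes $0,1,\ldots,R$, there is a real polynomial $g$ of degree at most $R$ with $g(y)=(-1)^y$ for $y\in\{0,\ldots,R\}$. The composite $g\circ q$ has degree at most $dR$, and on $\Phi_a$ it takes the values $\kappa\sigma_\lambda$. By \cref{prop:local-polynomial-residue} and strong cospectrality,
\[
(g\circ q)(A)e_a=\kappa\sum_{\lambda\in\Phi_a}\sigma_\lambda E_\lambda e_a=\kappa\sum_{\lambda\in\Phi_a}E_\lambda e_b=\kappa e_b .
\]
In particular $\bigl((g\circ q)(A)\bigr)_{ba}=\kappa\neq0$.

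Third, as in the proof of \cref{cor:distance-degree}, $(A^j)_{ba}=0$ for every $j<D$. So any polynomial in $A$ with a nonzero $(b,a)$ entry has degree at least $D$, which gives $dR\geq\deg(g\circ q)\geq D$. Since $R_d$ is an integer, $R_d\geq\lceil D/d\rceil$, and \eqref{eq:degree-optimum} gives $T_d(W)=\pi R_d/W\geq\lceil D/d\rceil\pi/W$. If $R_d=\infty$, the bound on $T_d(W)$ holds trivially.

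The step I expect to need the most care is the normalization in the first step. The shift must be chosen so that the supported values are exactly the integers $z_j-\min z$, lying in a window of $R+1$ consecutive integers, and so that the sign relation to $\sigma_\lambda$ is uniform in $\lambda$. Both follow directly from the parity definition of $\mathcal Z_d$.
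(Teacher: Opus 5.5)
Your proof is correct, and it takes a different route from the paper's.

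The paper's argument runs as follows:
\begin{enumerate}[label=\textup{(\arabic*)}]
\item It orders the supported eigenvalues as $\lambda_0<\cdots<\lambda_D$.
\item It shows that the signs $\sigma_j$ alternate. The vector $(\sigma_jw_j)$ spans the one-dimensional annihilator of evaluation in degrees less than $D$. That annihilator is generated by $1/\prod_{i\ne j}(\lambda_j-\lambda_i)$, whose signs alternate, and $w_j>0$.
\item It counts roots. Consecutive values of $q$ have opposite parity, so on each of the $D$ disjoint intervals $(\lambda_j,\lambda_{j+1})$ the polynomial $q$ crosses one of the $R$ half-integers in its range. Each level set $q=h$ has at most $d$ points, so $D\leq dR$.
\end{enumerate}

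You instead compose $q$ with the degree-$R$ interpolant of $y\mapsto(-1)^y$ on $\{0,\dots,R\}$. This gives a polynomial of degree at most $dR$ sending $e_a$ to $\kappa e_b$, where precisely $\kappa=\sigma_0(-1)^{\min_j z_j}$. You then reuse the walk-counting argument of \cref{cor:distance-degree}. Your argument is a direct extension of that corollary, which is the case $R=1$ with $g(y)=1-2y$.

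Your route is more general. It needs neither the support-size hypothesis $|\Phi_a|=D+1$ nor the sign alternation. It therefore gives $dR_d\geq\operatorname{dist}_G(a,b)$, and hence $T_d(W)\geq\lceil \operatorname{dist}_G(a,b)/d\rceil\pi/W$, for every strongly cospectral pair.

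The paper's route proves a graph-free statement about phase data alone. A real polynomial of degree at most $d$ whose values at $D+1$ increasing real nodes are integers of alternating parity has range at least $D/d$. This uses only the ordering of the nodes, so it applies verbatim to alternating-sign settings such as \cref{sec:equally-spaced} without invoking any distance.
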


\begin{proof}
Order the supported eigenvalues increasingly as $\lambda_0<\cdots<\lambda_D$.
The vector with entries
$\beta_j=1/\prod_{i\ne j}(\lambda_j-\lambda_i)$ annihilates evaluation
in degrees less than $D$: this follows by comparing the coefficient of
$x^D$ in the Lagrange interpolation formula.  That annihilator is
one-dimensional.  Hence $\sigma_jw_j$ is a nonzero scalar multiple
of $\beta_j$.  Since $w_j>0$ and the signs of $\beta_j$ alternate,
the signs $\sigma_j$ alternate too.

Let $q$ interpolate integer phase data of range $R$ and have degree
at most $d$.  Values at consecutive nodes have opposite parity.
For each $j$, choose a half-integer $h_j$ strictly between
$q(\lambda_j)$ and $q(\lambda_{j+1})$.  The intermediate value theorem
gives a root of $q(x)-h_j$ in $(\lambda_j,\lambda_{j+1})$.
There are $R$ possible half-integers, and each equation $q(x)=h$
has at most $d$ roots.  Since the $D$ intervals are disjoint,
$D\leq dR$.  Minimize over the phase data.
\end{proof}

For $1\leq d<D$, define the vector of moments
\[
 M_d(S)=\left(\sum_{\lambda\in S}w_\lambda\lambda^k
                         \right)_{k=0}^{D-d-1},\qquad
 H_d=M_d(\Phi_+)=M_d(\Phi_-).
\]
The equality follows from the walk identities in the proof of
\cref{lem:one-moment}.

\begin{theorem}\label{thm:weight-partition}
Under the hypotheses of this section, for $1\leq d<D$:
\begin{enumerate}[label=\textup{(\roman*)}]
\item $T_d(W)=2\pi/W$ if and only if there is a subset $S$ of one
sign class such that $M_d(S)=H_d/2$.
\item $T_d(W)\leq3\pi/W$ if and only if there are
$S_+\subseteq\Phi_+$ and $S_-\subseteq\Phi_-$ such that
\begin{equation}\label{eq:signed-quarter}
 M_d(S_+)-M_d(S_-)=H_d/2.
\end{equation}
\end{enumerate}
For $d=D-1$, the moment equalities in parts \textup{(i)} and
\textup{(ii)} reduce to $w(S)=1/4$ and
$w(S_+)-w(S_-)=1/4$, respectively.
The width constraint is local; both conclusions also hold for global
width when $\Phi_a=\spec(A)$.
\end{theorem}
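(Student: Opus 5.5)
By \cref{thm:degree-width}, $T_d(W)=\pi R_d/W$. Moreover $R_d\geq2$ whenever it is finite, since $d<D=\delta$. So part (i) amounts to characterizing $R_d=2$, and part (ii) amounts to characterizing $R_d\leq3$, i.e.\ $R_d\in\{2,3\}$. I will use the reformulation in \cref{lem:one-moment}: $R_d$ is the minimum range of integer data $y_\lambda$ satisfying the moment system \eqref{eq:moment-system} and $y_\lambda\equiv(1-\sigma_\lambda)/2\pmod2$. In the notation $M_d$, the moment system reads
\[
 \sum_{\lambda\in\Phi_+}w_\lambda\lambda^ky_\lambda=\sum_{\lambda\in\Phi_-}w_\lambda\lambda^ky_\lambda\qquad(0\leq k<D-d).
\]
Adding a constant or negating $y$ preserves this system, since $M_d(\Phi_+)=M_d(\Phi_-)$. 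Adding an even constant or negating also preserves the parities, and adding an odd constant combined with the implicit sign swap preserves them as well. The plan is to classify integer data of small range up to these symmetries.

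For (i), take data of range $2$. The values on $\Phi_+$ are even and those on $\Phi_-$ are odd. After shifting by an even integer, the values lie in $\{0,1,2\}$ or in $\{-1,0,1\}$. In the first case $\Phi_-$ is constantly $1$, and $\Phi_+$ takes values in $\{0,2\}$; write $S=\{\lambda\in\Phi_+:y_\lambda=2\}$. The system then becomes $2M_d(S)=M_d(\Phi_-)=H_d$, so $M_d(S)=H_d/2$. The second case is symmetric: $\Phi_+$ is constantly $0$, $\Phi_-$ takes values in $\{\pm1\}$, and the system forces the subset $S=\{y=1\}\subseteq\Phi_-$ to satisfy $M_d(S)-M_d(\Phi_-\setminus S)=0$, which is again $M_d(S)=H_d/2$. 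Conversely, any such $S$ gives data of range $2$, and each sign class takes two values, so the parity constraints hold. Since $R_d\geq2$, this proves $T_d(W)=2\pi/W$.

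For (ii), take data of range $3$ (range $2$ is covered by (i), and there one can take $S_-=\varnothing$ or $S_+=\varnothing$ to obtain \eqref{eq:signed-quarter}). Shift and possibly negate so the values lie in $\{0,1,2,3\}$. Then $\Phi_+$ takes values in $\{0,2\}$ and $\Phi_-$ in $\{1,3\}$; let $S_+=\{y=2\}$ and $S_-=\{y=3\}$. The system becomes $2M_d(S_+)=M_d(\Phi_-)+2M_d(S_-)=H_d+2M_d(S_-)$, which is exactly \eqref{eq:signed-quarter}. Conversely, given $S_\pm$ satisfying \eqref{eq:signed-quarter}, these values give admissible data of range at most $3$.

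The main obstacle is to make sure the normalization step is legitimate. The shift in (ii) may be odd, and an odd shift interchanges the parity roles of $\Phi_\pm$. I will handle this by negating or shifting only by even integers, plus at most a reflection $y\mapsto3-y$. The reflection swaps the parity pattern, so I must check that one of $\{0,\dots,3\}$ or its reflection always places the even values on $\Phi_+$. Since a range-$3$ window containing both parities can always be chosen to start at an even integer after an even shift, this should work. For $d=D-1$ only $k=0$ survives, and $H_{D-1}=w(\Phi_+)=1/2$, which gives $w(S)=1/4$ and $w(S_+)-w(S_-)=1/4$. Global width follows when $\Phi_a=\spec(A)$ because local and global width then coincide.
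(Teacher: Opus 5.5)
Your argument is correct and is essentially the paper's proof. It uses the same classification of range-two and range-three data through \cref{lem:one-moment}, with one cosmetic difference: you bring odd-minimum range-three data into $\{0,1,2,3\}$ by $y\mapsto(\min y+3)-y$, which is negation plus an even shift. The paper instead shifts such data into $[-1,2]$ and complements both subsets, which produces the same $S_\pm$.

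There are two small slips to fix. First, when the set $S$ from (i) lies in $\Phi_-$, you must take $S_+=\Phi_+$ and $S_-=S$, not $S_+=\varnothing$. Indeed, $-M_d(S_-)$ has nonpositive zeroth coordinate, while the zeroth coordinate of $H_d/2$ is $1/4$. Second, drop the reflection $y\mapsto3-y$ from your last paragraph. It swaps the parity classes, and it is not needed once negation is allowed.
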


\begin{proof}
For (i), integer data of range two have values in $\{m,m+1,m+2\}$.
If $m$ is even, the negative-sign data equal $m+1$, while the
positive-sign data are $m$ or $m+2$.  With $S$ the nodes of value
$m+2$, the spectral moment equations \eqref{eq:moment-system} become $2M_d(S)-H_d=0$.
If $m$ is odd, interchange the sign classes to obtain
$M_d(S)=H_d/2$ for a subset of $\Phi_-$.
Conversely, for such $S\subseteq\Phi_+$, prescribe $2$ on $S$,
$0$ on its complement in $\Phi_+$, and $1$ on $\Phi_-$.
For $S\subseteq\Phi_-$, prescribe $1$ on $S$, $-1$ on its complement
in $\Phi_-$, and $0$ on $\Phi_+$.  These data have range two and
satisfy \eqref{eq:moment-system} and
$y_\lambda\equiv(1-\sigma_\lambda)/2\pmod2$.  Their interpolant has degree
at most $d$ by \cref{lem:one-moment}.  Since $d<D=\deg r_-$,
range one is impossible, proving (i).

For (ii), subtract an even integer from admissible data of range at
most three so that their values lie in $[0,3]$ when their minimum is
even, and in $[-1,2]$ otherwise.  In the first case, let $S_+$ be the
positive-sign nodes of value $2$ and $S_-$ the negative-sign nodes
of value $3$.  The spectral moment equations \eqref{eq:moment-system} become
$2M_d(S_+)-H_d-2M_d(S_-)=0$.
In the second case, use the nodes of value $2$ in $\Phi_+$ and of
value $1$ in $\Phi_-$.  Their moment difference is $-H_d/2$;
complementing both subsets changes it to $H_d/2$.
Conversely, given \eqref{eq:signed-quarter}, prescribe values $2,0$
on $S_+,\Phi_+\setminus S_+$, and values $3,1$ on
$S_-,\Phi_-\setminus S_-$.  Apply \cref{lem:one-moment} and
\cref{thm:degree-width}.  The scalar specialization uses
$H_{D-1}=1/2$.
\end{proof}

\begin{corollary}\label{thm:graph-costs}
Let $a,b$ be strongly cospectral vertices of a bipartite graph, at
an even distance $D\geq2$, with $|\Phi_a|=D+1$.  Under local width $W$,
\begin{equation*}
 T_{D-1}(W)=2\pi/W,\qquad T_D(W)=\pi/W.
\end{equation*}
The same formulas hold under global width for the endpoints of
$P_{2r+1}$ and antipodal vertices of $Q_{2r}$, where $D=2r$.
\end{corollary}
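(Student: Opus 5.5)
The plan is to combine \cref{thm:weight-partition}(i) with the bipartite structure.  By \cref{thm:degree-width}, $T_D(W)=\pi/W$ holds because $\deg r_-=D$.  Since $R_{D-1}\geq2$ whenever it is finite, it suffices to exhibit a subset $S$ of one sign class with $w(S)=1/4$.  The spectral facts we need come from bipartiteness.  The spectrum is symmetric under $\lambda\mapsto-\lambda$, and the map $E_\lambda\mapsto PE_{-\lambda}P$, where $P$ is the diagonal $\pm1$ bipartition signature, gives $w_{-\lambda}=w_\lambda$.  Hence $\Phi_a$ is symmetric about $0$.  It has $D+1$ elements, an odd number, so $0\in\Phi_a$.

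The next step identifies the sign classes.  The proof of \cref{prop:degree-product} shows that the signs $\sigma_j$ alternate along the increasing order $\lambda_0<\cdots<\lambda_D$.  Since $D$ is even, the middle eigenvalue is $\lambda_{D/2}=0$.  Moreover $\lambda_j$ and $\lambda_{D-j}=-\lambda_j$ have indices of equal parity, so they carry the same sign.  Alternatively, we can argue directly: $D$ is even, so $a$ and $b$ lie in the same colour class, and $Pe_b=Pe_a$ up to a common sign; this gives $\sigma_{-\lambda}=\sigma_\lambda$.  Thus the class containing $0$, call it $\Phi_0$, contains $0$ together with the pairs $\{\lambda,-\lambda\}$ for the positive $\lambda$ in $\Phi_0$.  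The other class consists entirely of such pairs.

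The choice of $S$ is the main point.  The class $\Phi'$ not containing $0$ is a union of pairs $\{\mu,-\mu\}$ with $w_\mu=w_{-\mu}$ and total weight $1/2$.  Take $S=\{\mu\in\Phi':\mu>0\}$; then $w(S)=\tfrac12\,w(\Phi')=1/4$.  This requires $\Phi'\ne\varnothing$, which holds because $D\geq2$: the index $D/2\pm1$ lies in the other class.  Applying \cref{thm:weight-partition}(i) with $d=D-1$ then gives $T_{D-1}(W)=2\pi/W$.  I expect the symmetry of the weights and of the signs to be the only step needing care, and I would establish it via the signature matrix $P$, which anticommutes with $A$.

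For the global-width claims, note that the endpoints of $P_{2r+1}$ support the full spectrum by \cref{prop:paths}, and their distance is $2r$.  Antipodal vertices of $Q_{2r}$ are strongly cospectral, lie at distance $2r$, and their support is the full set of $2r+1$ eigenvalues $2r-2k$.  In both cases $\Phi_a=\spec(A)$, so the final sentence of \cref{thm:weight-partition}, together with \cref{thm:optimal-transfer-time}, transfers both formulas to global width.
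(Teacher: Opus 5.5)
Your proof is correct and follows essentially the same route as the paper: the bipartition signature gives $w_{-\lambda}=w_\lambda$ and $\sigma_{-\lambda}=\sigma_\lambda$, the odd symmetric support contains $0$, and the positive half of the sign class avoiding $0$ has weight $1/4$. Hence \cref{thm:weight-partition}(i) gives $T_{D-1}(W)=2\pi/W$, $\deg r_-=D$ gives $T_D(W)=\pi/W$, and full support extends both formulas to global width. Two small points: the phrase ``$Pe_b=Pe_a$ up to a common sign'' should say that $e_a$ and $e_b$ are eigenvectors of $P$ with the same eigenvalue. Your alternative argument via the sign alternation in \cref{prop:degree-product} is valid but not needed.
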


\begin{proof}
Let $B$ be the diagonal matrix taking values $+1$ and $-1$ on the two
bipartition classes.  Then $BAB=-A$, so $BE_\lambda B=E_{-\lambda}$.
As $D$ is even, $a,b$ lie in the same class.  It follows that
\[
 w_{-\lambda}=w_\lambda,\qquad
 \sigma_{-\lambda}=\sigma_\lambda.
\]
The support is symmetric and has odd size $D+1$, so it contains zero.
Choose the sign class not containing zero.  Its positive eigenvalues
have half of that class's total weight, namely $1/4$.
\Cref{thm:weight-partition}(i) gives $R_{D-1}=2$, and $\deg r_-=D$
gives $R_D=1$.
For $P_{2r+1}$, the endpoint support is the full simple spectrum, as in
\cref{prop:paths}.  For $Q_{2r}$, the antipodal support is the full
spectrum of size $2r+1$, as verified at the start of
\cref{sec:equally-spaced}.  Both pairs have distance
$2r$, so the assertions about global width follow.
\end{proof}

\section{Equally spaced spectra and hypercubes}\label{sec:equally-spaced}

Label the vertices of $Q_n$ by $\{0,1\}^n$.  Its character vectors
$\chi_S(x)=(-1)^{\sum_{i\in S}x_i}$ are eigenvectors of eigenvalue
$n-2|S|$.  At antipodal vertices, their values differ by $(-1)^{|S|}$.
The support is the full spectrum, with
\begin{equation*}
 \lambda_j=n-2j,\qquad \sigma_j=(-1)^j,\qquad
 w_j=2^{-n}\binom nj\qquad(0\leq j\leq n).
\end{equation*}
In particular, the antipodal distance is $n$ and $\deg r_-=n$.
These are also the Hamming-scheme data used in
\cite[Appendix~A.3]{SongPowersParity}.

If $A_j$ is the distance-$j$ matrix of $Q_n$, then
\[
 AA_j=(n-j+1)A_{j-1}+(j+1)A_{j+1},
 \qquad A_{-1}=A_{n+1}=0.
\]
Induction shows that $A_j$ is a polynomial in $A=A_1$ of degree $j$,
with leading coefficient $1/j!$.  Hence the matrices $p(A)$ with $\deg p\leq d\leq n$ are exactly
the linear combinations of $A_0,\ldots,A_d$.
Thus a degree bound $d$ permits couplings only between vertices
at distance at most $d$, with equal couplings at each distance.

Write $x^{\underline\ell}=x(x-1)\cdots(x-\ell+1)$ and
$\Delta_s f(x)=f(x+s)-f(x)$, with $\Delta=\Delta_1$.
For a polynomial $f$ on the current finite set of nodes, write
$R(f)=\max f-\min f$.

\begin{lemma}\label{lem:integer-shift}
Let $f\in\R[x]$ have degree at most $d\geq1$ and take integer values
at $0,\ldots,d$.  For positive integers $m$ and nonnegative integers $t$,
\[
 \frac{m}{\gcd(m,L_d)}\ \bigm|\ f(t+m)-f(t).
\]
\end{lemma}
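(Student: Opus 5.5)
Since $f$ has degree at most $d$ and takes integer values at $0,\ldots,d$, Newton's forward-difference formula expresses $f$ in the binomial basis:
\[
 f(x)=\sum_{\ell=0}^{d}c_\ell\binom{x}{\ell},\qquad c_\ell=\Delta^\ell f(0)\in\Z .
\]
Because $\Delta f$ has the same kind of integer-coefficient expansion, it suffices to prove the divisibility for each single basis polynomial $\binom{x}{\ell}$ with $1\leq\ell\leq d$; the constant term $\ell=0$ contributes nothing to $f(t+m)-f(t)$. The whole argument is thus a $p$-adic statement about binomial coefficients, checked one prime at a time.

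Set $N=m/\gcd(m,L_d)$. For each prime $p$, let $\nu_p$ denote the $p$-adic valuation, and put $e=\nu_p(m)$ and $k=\lfloor\log_p d\rfloor$, so that $\nu_p(L_d)=k$. Then $\nu_p(N)=\max(0,e-k)$. The key step is to show that, for every integer $t\geq0$ and every $1\leq\ell\leq d$,
\[
 \nu_p\!\left(\binom{t+m}{\ell}-\binom{t}{\ell}\right)\geq e-k .
\]

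I plan two ways to get this, in order of preference.

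\begin{itemize}[label=\textendash]
\item \emph{Vandermonde.} Write
\[
 \binom{t+m}{\ell}-\binom{t}{\ell}=\sum_{i=1}^{\ell}\binom{m}{i}\binom{t}{\ell-i}.
\]
Kummer's theorem gives $\nu_p\binom{m}{i}\geq e-\nu_p(i)$. For $i\leq\ell\leq d$ we have $\nu_p(i)\leq k$, so every summand has valuation at least $e-k$.
\item \emph{Falling factorials (fallback).} Use the equivalent identity $\ell!\binom{x}{\ell}=x^{\underline\ell}$. The difference $(t+m)^{\underline\ell}-t^{\underline\ell}$ is visibly divisible by $m$, but dividing by $\ell!$ loses too much, so the Vandermonde route is cleaner.
\end{itemize}

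The estimate $\nu_p\binom{m}{i}\geq\nu_p(m)-\nu_p(i)$ follows directly from $\binom{m}{i}=\frac{m}{i}\binom{m-1}{i-1}$. This is the main point, and it is elementary.

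To finish, combine over all primes $p$ to get $N\mid\binom{t+m}{\ell}-\binom{t}{\ell}$ for each $\ell$. Multiplying by the integers $c_\ell$ and summing over $\ell$ then gives $N\mid f(t+m)-f(t)$.

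The only subtlety I expect is bookkeeping. The inequality $\nu_p(i)\leq\nu_p(L_d)$ for $i\leq d$ is exactly the reason $L_d$ appears, and primes with $e\leq k$ impose no condition. No genuine obstacle is anticipated.
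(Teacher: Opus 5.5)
Your proposal is correct and is essentially the paper's proof: the Newton expansion with integer coefficients, Vandermonde's identity for $\binom{t+m}{\ell}-\binom{t}{\ell}$, and the identity $i\binom{m}{i}=m\binom{m-1}{i-1}$, which shows that each $\binom{m}{i}$ with $i\le d$ is divisible by $m/\gcd(m,L_d)$. The only difference is cosmetic: you check that divisibility one prime at a time using $\nu_p(i)\le\nu_p(L_d)$, whereas the paper argues directly that $m/\gcd(m,i)$ divides $\binom{m}{i}$ and that $\gcd(m,i)$ divides $\gcd(m,L_d)$.
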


\begin{proof}
Newton interpolation gives
$f(x)=\sum_{k=0}^d c_k\binom{x}{k}$ with
$c_k=\Delta^k f(0)\in\Z$; in particular, $f$ is integer-valued
at all nonnegative integers.  For $1\leq i\leq d$, the identity
$i\binom mi=m\binom{m-1}{i-1}$ shows that
$m/\gcd(m,i)$ divides $\binom mi$.
Since $i\mid L_d$, the integer $g=m/\gcd(m,L_d)$ divides each
$\binom mi$, $1\leq i\leq d$ (including zero coefficients when $i>m$).
Vandermonde's identity gives
\[
 \binom{t+m}{k}-\binom tk
 =\sum_{i=1}^k\binom mi\binom{t}{k-i}.
\]
Every summand is divisible by $g$; summing with the integer
coefficients $c_k$ proves the claim.
\end{proof}

\begin{lemma}\label{lem:discrete-range}
Let $1\leq\ell\leq N$, and let $F$ be a real polynomial of degree
$\ell$ with leading coefficient $c/\ell!$.  Its range $R$ on
$0,\ldots,N$ satisfies
\begin{equation*}
 R^2\geq\frac{4c^2}{N+1}
 \frac{\binom{N+\ell+1}{2\ell+1}}{\binom{2\ell}{\ell}}.
\end{equation*}
\end{lemma}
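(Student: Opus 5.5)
The plan is an $\ell^2$ argument on the nodes $0,\ldots,N$: bound the sum of squares above using the range, and below by the extremal property of the monic discrete Chebyshev (Gram) polynomial.

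\emph{Upper bound.} Put $b=(\max F+\min F)/2$ over the nodes and $g=F-b$. Then $|g(j)|\leq R/2$ for $0\leq j\leq N$, so
\[
 \sum_{j=0}^N g(j)^2\leq\frac{(N+1)R^2}{4}.
\]

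\emph{Lower bound.} The polynomial $g$ has degree $\ell$ and leading coefficient $c/\ell!$. Since $\ell\leq N$, the form $\langle u,v\rangle=\sum_{j=0}^N u(j)v(j)$ is an inner product on polynomials of degree at most $N$. Among monic polynomials of degree $\ell$, the norm is minimized by the monic orthogonal polynomial $t_\ell$ for this inner product: write $g/(c/\ell!)=t_\ell+r$ with $\deg r<\ell$, and note that $t_\ell\perp r$. Hence
\[
 \sum_j g(j)^2\geq\frac{c^2}{(\ell!)^2}\,h_\ell,\qquad h_\ell=\sum_{j=0}^N t_\ell(j)^2.
\]
The case $c=0$ is trivial, so assume $c\neq0$.

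\emph{Computing $h_\ell$.} This is the main step. I would use the discrete Rodrigues formula
\[
 t_\ell(x)=\kappa_\ell\,\Delta^\ell\!\left[x^{\underline\ell}\,(x-N-1)^{\underline\ell}\right],
\]
where the bracket is a polynomial of degree $2\ell$ vanishing at $x=0,\ldots,\ell-1$ and at $x=N+1,\ldots,N+\ell$. The normalizing constant is fixed by the leading coefficient: $\Delta^\ell$ maps a leading term $x^{2\ell}$ to $\frac{(2\ell)!}{\ell!}x^\ell+\cdots$, so $\kappa_\ell=\ell!/(2\ell)!$.

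Orthogonality and the norm both come from summation by parts $\ell$ times. The boundary terms vanish because of those zeros. This gives
\[
 h_\ell=\langle t_\ell,x^\ell\rangle=(-1)^\ell\kappa_\ell\,\ell!\sum_{j}j^{\underline\ell}(j-N-1)^{\underline\ell}.
\]
Here $j$ runs over an appropriately shifted range, and the sum is of the Chu--Vandermonde type. Evaluating it yields the classical formula
\[
 h_\ell=\frac{(\ell!)^4}{(2\ell)!\,(2\ell+1)!}\,\frac{(N+\ell+1)!}{(N-\ell)!}=\frac{(\ell!)^4}{(2\ell)!}\binom{N+\ell+1}{2\ell+1}.
\]
I expect the bookkeeping of shifts and signs in this summation by parts to be the only real obstacle. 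As a fallback, I would quote the standard norm of the Gram (discrete Chebyshev) polynomials, i.e.\ Hahn polynomials with $\alpha=\beta=0$. I would sanity-check the formula at $\ell=1$: there $h_1=\sum_j(j-N/2)^2=N(N+1)(N+2)/12$, which matches.

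\emph{Conclusion.} Combining the two bounds gives
\[
 \frac{(N+1)R^2}{4}\geq\frac{c^2(\ell!)^2}{(2\ell)!}\binom{N+\ell+1}{2\ell+1}=c^2\,\frac{\binom{N+\ell+1}{2\ell+1}}{\binom{2\ell}{\ell}},
\]
which is exactly the claimed inequality.
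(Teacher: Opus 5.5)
Your proposal is correct and follows essentially the same route as the paper: the midpoint shift with $\sum_j (F(j)-b)^2\le (N+1)R^2/4$, orthogonality of $G=\Delta^\ell\bigl[x^{\underline\ell}(x-N-1)^{\underline\ell}\bigr]$ to lower-degree polynomials by $\ell$-fold summation by parts using the zeros of the bracket, and the norm $\sum_{x=0}^N G(x)^2=(2\ell)!(\ell!)^2\binom{N+\ell+1}{2\ell+1}$ from Vandermonde's convolution. The only cosmetic difference is that you normalize to the monic polynomial $t_\ell=\ell!\,G/(2\ell)!$, and your value $h_\ell=(\ell!)^4\binom{N+\ell+1}{2\ell+1}/(2\ell)!$ agrees exactly with the paper's computation.
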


\begin{proof}
Set
\[
 u(x)=x^{\underline\ell}(x-N-1)^{\underline\ell},
 \qquad G(x)=\Delta^\ell u(x).
\]
The leading coefficient of $G$ is $(2\ell)!/\ell!$.
Since $u$ vanishes at $0,\ldots,\ell-1$ and
$N+1,\ldots,N+\ell$, shifting the finite sums gives, for every polynomial $h$,
\[
 \sum_{x=0}^N G(x)h(x)
 =(-1)^\ell\sum_{t=\ell}^N u(t)\Delta^\ell h(t-\ell).
\]
Thus $G$ is orthogonal to every polynomial of degree less than $\ell$
for the counting measure on $0,\ldots,N$.  Taking $h=G$, and using
$\Delta^\ell G=(2\ell)!$, gives
\begin{align*}
 \sum_{x=0}^N G(x)^2
 &=(2\ell)!(\ell!)^2
   \sum_{t=\ell}^N\binom t\ell\binom{N+\ell-t}\ell\\
 &=(2\ell)!(\ell!)^2\binom{N+\ell+1}{2\ell+1}.
\end{align*}
The last equality is Vandermonde's convolution.
For every real constant $b$, the difference
$F-b-cG/(2\ell)!$ has degree less than $\ell$.  Orthogonality yields
\[
 \sum_{x=0}^N(F(x)-b)^2
 \geq\frac{c^2}{\binom{2\ell}{\ell}}
             \binom{N+\ell+1}{2\ell+1}.
\]
Choose $b$ to be the midpoint of the extreme values of $F$ on the nodes.
The left side is at most $(N+1)R^2/4$, proving the claim.
\end{proof}

\begin{proof}[Proof of the arithmetic bound in \cref{thm:arithmetic-width}]
For $z\in\mathcal Z_d$, the nodes $\lambda_j=\theta+hj$ give
$z_j=f(j)$ for a polynomial $f$ of degree at most $d$, with
$f(j)\in\Z$ and $f(j)\equiv j\pmod2$.
For any odd $m\leq n$, \cref{lem:integer-shift} shows that
$m/\gcd(m,L_d)$ divides $z_m-z_0$.  This difference is odd, hence nonzero
and its absolute value is at least $m/\gcd(m,L_d)$.
This proves \eqref{eq:arithmetic-width}.
\end{proof}

\begin{lemma}\label{lem:nonlinear-range}
Let $f\in\R[x]$ have degree $k\geq2$, with $k\leq n$, and suppose
$f(j)\in\Z$ and $f(j)\equiv j\pmod2$ for $0\leq j\leq n$.
Set $s=\lfloor n/k\rfloor$.  Then its range on these nodes satisfies
\begin{equation}\label{eq:nonlinear-range}
 \max_{0\leq j\leq n}f(j)-\min_{0\leq j\leq n}f(j)
 \geq \frac{s^k}{2^{k-2}}.
\end{equation}
If $n\geq3k$, this range is strictly greater than $n$.
\end{lemma}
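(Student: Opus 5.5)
The plan is to turn the parity condition into a lower bound on the leading coefficient, and then read off the range from a $k$-th finite difference with step $s$. Put $g(x)=(f(x)-x)/2$. Then $g\in\R[x]$ has degree $k$ (since $k\geq2$) and takes integer values at $0,\ldots,n$. Newton interpolation, as in the proof of \cref{lem:integer-shift}, writes
\[
 g=\sum_{i=0}^k c_i\binom{x}{i},\qquad c_i\in\Z,\quad c_k\ne0.
\]
Hence the leading coefficient of $f=x+2g$ is $2c_k/k!$, and its absolute value is at least $2/k!$.

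Since $ks\leq n$, the nodes $0,s,2s,\ldots,ks$ all lie in $\{0,\ldots,n\}$. For a polynomial of degree $k$ with leading coefficient $\alpha$, one has $\Delta_s^kf(0)=k!\,\alpha\,s^k$. Therefore
\[
 |\Delta_s^kf(0)|\geq 2s^k.
\]
On the other hand,
\[
 \Delta_s^kf(0)=\sum_{i=0}^k(-1)^{k-i}\binom ki f(is).
\]
Let $b$ be the midpoint of the extreme values of $f$ on the nodes. The coefficients sum to zero, so we may replace $f(is)$ by $f(is)-b$. Each $|f(is)-b|\leq R/2$, and $\sum_i\binom ki=2^k$, so $|\Delta_s^kf(0)|\leq 2^{k-1}R$. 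Comparing the two bounds gives $2s^k\leq2^{k-1}R$, which is \eqref{eq:nonlinear-range}.

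For the final assertion, assume $n\geq3k$. Then $s\geq3$ and $n<k(s+1)$, so it suffices to show
\[
 \frac{s^k}{2^{k-2}}=4\Bigl(\frac s2\Bigr)^k\geq k(s+1)\qquad(s\geq3,\ k\geq2).
\]
I would prove this by double induction.
\begin{itemize}
\item Base case $k=2$: we need $s^2\geq 2s+2$, which holds for $s\geq3$ (at $s=3$ it reads $9\geq8$).
\item Induction in $k$: the left side is multiplied by $s/2\geq3/2$, while the right side is multiplied by $(k+1)/k\leq3/2$.
\item Induction in $s$: the left side is multiplied by $(1+1/s)^k\geq(1+1/s)^2$, which exceeds the factor $(s+2)/(s+1)$ on the right.
\end{itemize}
Then $R\geq s^k/2^{k-2}\geq k(s+1)>n$, so the range is strictly greater than $n$.

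The only delicate step is the integrality of the Newton coefficients of $g$, which is what forces $|c_k|\geq1$. It holds because $g$ is integer-valued at $0,\ldots,k\subseteq\{0,\ldots,n\}$. The rest is routine.
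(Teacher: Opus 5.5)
Your proposal is correct and follows the paper's proof essentially step by step. Both arguments show the top Newton coefficient is even and nonzero (you via $g=(f-x)/2$, the paper by reducing modulo two), bound the step-$s$ $k$th difference by $2^{k-1}R$, and induct on $k$ from $s^2\geq2(s+1)$; your extra induction in $s$ is harmless but unnecessary, because your base case $k=2$ already covers every $s\geq3$.
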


\begin{proof}
In the Newton expansion $f(x)=\sum_{r=0}^k c_r\binom xr$,
the coefficient $c_k=\Delta^k f(0)$ of $\binom{x}{k}$ is a
nonzero even integer.
Indeed, modulo two the values of $f$ agree with the linear function $j$,
whose $k$th difference vanishes.  Thus $|c_k|\geq2$.
The $k$th difference with step $s$ is
\[
 \Delta_s^k f(0)=c_ks^k
 =\sum_{i=0}^k(-1)^{k-i}\binom ki f(is).
\]
All nodes $is$ lie between $0$ and $n$.  The positive and negative
coefficient sums on the right both equal $2^{k-1}$.  If $R$ is the
range of the data, this gives $2s^k\leq2^{k-1}R$, proving
\eqref{eq:nonlinear-range}.

If $s\geq3$, then
\[
 \frac{s^k}{2^{k-2}}\geq k(s+1)>n.
\]
For the first inequality, start with $s^2\geq2(s+1)$ at $k=2$,
and note that increasing $k$ multiplies the left side by $s/2\geq3/2$
and the right side by $(k+1)/k\leq3/2$.
The second inequality follows from $n<k(s+1)$.
\end{proof}

\begin{proof}[Proof of the rigidity statements in \cref{thm:arithmetic-width}]
The linear data $z_j=j$ give $R_d\leq n$.
Suppose admissible data of range at most $n$ are interpolated by a
polynomial $f$ of exact degree $k\leq d$.  The parities exclude $k=0$.
If $2\leq k\leq n/3$,
\cref{lem:nonlinear-range} gives $R(f)>n$, a contradiction.
For every $k\geq2$, the highest Newton coefficient is a nonzero
even integer.  Applying \cref{lem:discrete-range} therefore gives
\[
 R(f)\geq2\sqrt{\mathcal D_{n,k}}.
\]
Direct cancellation yields
\begin{equation*}
 \frac{\mathcal D_{n,k+1}}{\mathcal D_{n,k}}
 =\frac{(n+1)^2-(k+1)^2}{16(k+1)^2-4}.
\end{equation*}
For $k\geq\lfloor n/3\rfloor$, one has $n+1\leq3(k+1)$,
so $\mathcal D_{n,k+1}/\mathcal D_{n,k}<1$.  Consequently, if $k>n/3$ and
\eqref{eq:cube-finite-rigidity} holds, then
$R(f)\geq2\sqrt{\mathcal D_{n,k}}\geq
2\sqrt{\mathcal D_{n,d}}>n$, again a contradiction.
Thus $k=1$ whenever $n\geq3d$ or \eqref{eq:cube-finite-rigidity} holds.

To obtain \eqref{eq:cube-log-rigidity}, Stirling's formula gives,
uniformly for $d/n$ in a fixed compact subinterval of $(0,1)$,
\[
 \log\mathcal D_{n,d}=n\psi(d/n)-\log n+O(1).
\]
Differentiation gives $\psi'(u)=\log((1-u^2)/(16u^2))$.
The function is strictly decreasing on $[1/2,1)$, with
$\psi(1/2)=\tfrac12\log(27/16)>0$ and $\psi(1)=-2\log2$,
where the endpoint value is understood by continuity.  This also
justifies the definition of $\rho_0$.
Set
\[
 d_n=\left\lfloor\rho_0n-\left(\frac3{a_0}+\eta\right)\log n
                                                    \right\rfloor.
\]
Taylor expansion at $\rho_0$ shows that
\[
 \log\frac{4\mathcal D_{n,d_n}}{n^2}
 =n\psi(d_n/n)-3\log n+O(1)
 =a_0\eta\log n+O(1)+O((\log n)^2/n)\longrightarrow\infty.
\]
The monotonicity of $\mathcal D_{n,d}$ for $d>n/3$ and
\cref{lem:nonlinear-range} cover every smaller degree.  The assertion for a fixed $\beta<\rho_0$ follows as well.

Since $f$ is affine, $f(0)=0$ and the odd integer $f(1)$ give $f(j)=uj$
with $u$ odd.  Its range is $|u|n$, so $u=\pm1$ and $R_d=n$.
At time $\tau=n\pi/W$, normalized phase differences of any minimizing
representative $q$ therefore satisfy
\[
 (\tau/\pi)(q(\theta+hj)-q(\theta))=\pm j.
\]
Interpolation on $n+1$ nodes gives exactly
\eqref{eq:affine-minimizers}.  Conversely, these polynomials have
width $W$ on the support and transfer at $n\pi/W$.
\end{proof}

\begin{proof}[Proof of \cref{cor:exact-degree-growth}]
The affine spectral parametrization reduces the phase condition to
$f(j)\in\Z$ and $f(j)\equiv j\pmod2$ for $0\leq j\leq n$.
Let $A_{n,k}$ be the minimum range of such data with exact degree $k$.
The polynomial $f(j)=j+2\binom jk$ shows that this class is nonempty.
As in \cref{thm:degree-width}, well-ordering of the positive integer
ranges and width normalization give $S_{n,k}(W)=\pi A_{n,k}/W$.
In the Newton expansion
\[
 f(x)=\sum_{r=0}^k c_r\binom xr,
\]
$c_1$ is odd and $c_r$ is even for $r\ne1$.
In particular, the ordinary leading coefficient is $c_k/k!$, with
$|c_k|\geq2$.

The polynomial $\T_k(2t-1)$ has leading coefficient $2^{2k-1}$ and takes
alternating values at
\[
 t_i=\frac{1-\cos(i\pi/k)}2,\qquad 0\leq i\leq k;
\]
see~\cite[Sections~18.3 and~18.5]{DLMF}.
Put $\omega_i=1/\prod_{h\ne i}(t_i-t_h)$.
The signs of $\omega_i$ agree with $\T_k(2t_i-1)$, so the
leading-coefficient formula for Lagrange interpolation gives
\[
 \sum_i|\omega_i|=2^{2k-1}.
\]
Choose integers $j_i$ nearest to $nt_i$.  For fixed $k$ these are
distinct when $n$ is sufficiently large, and
\[
 \sum_i\left|\frac{1}{\prod_{h\ne i}(j_i-j_h)}\right|
 =n^{-k}\bigl(2^{2k-1}+O_k(n^{-1})\bigr).
\]
The interpolation weights sum to zero.  Subtracting the midpoint of
the extreme node values of $f$ therefore gives
\[
 \frac{|c_k|}{k!}
 \leq\frac{R(f)}2\,n^{-k}
             \bigl(2^{2k-1}+O_k(n^{-1})\bigr).
\]
Since $|c_k|\geq2$, this proves the required lower bound for $A_{n,k}$.

For the upper bound, let
\[
 h_n(x)=\frac{n^k}{2^{2k-2}k!}\T_k(2x/n-1).
\]
Its leading Newton coefficient is exactly two.  Round each lower
Newton coefficient to the nearest integer of the prescribed parity:
odd in degree one and even in all other degrees.  Each coefficient
changes by at most one.  The resulting polynomial $f_n$ is admissible,
has degree exactly $k$, and satisfies on the integer nodes
\[
 |f_n(j)-h_n(j)|\leq\sum_{r=0}^{k-1}\binom jr=O_k(n^{k-1}).
\]
Thus $R(f_n)\leq n^k/(2^{2k-3}k!)+O_k(n^{k-1})$.
Subtracting the even constant $f_n(0)$, if needed, normalizes the phase
data without changing their range.  This proves
\eqref{eq:exact-degree-growth}.

Suppose $R_n\leq(1+\varepsilon)n^k/(2^{2k-3}k!)$, where
$0\leq\varepsilon\leq1/4$.  The lower-bound argument gives
$|c_k|\leq2(1+\varepsilon)(1+O_k(n^{-1}))<4$ for all sufficiently large $n$.
Thus $|c_k|=2$.  Let $\epsilon_n=\operatorname{sgn}(c_k)$ and
put $G_n=\epsilon_nF_n$, with $F_n$ as defined before \cref{cor:exact-degree-growth}.
The values $G_n(j_i/n)$ have absolute value at most one.
Since $j_i/n\to t_i$ and the $t_i$ are distinct, the associated
Vandermonde matrices converge to an invertible matrix and have
uniformly bounded inverses.
Consequently, the coefficients of $G_n$ and
$\|G_n'\|_{\infty,[0,1]}$ are bounded by constants depending only on $k$.  Comparing any point to
a nearest grid node gives $|G_n(t)|\leq M_n=1+C_k/n$ on $[0,1]$.
The leading coefficient $a_n$ of $G_n$ satisfies
$a_n\geq2^{2k-1}/(1+\varepsilon)$.

Put $\kappa_k=2^{2k-1}=\sum_i|\omega_i|$.
Since $M_n-\operatorname{sgn}(\omega_i)G_n(t_i)\geq0$,
\[
 \sum_i|\omega_i|
       \bigl(M_n-\operatorname{sgn}(\omega_i)G_n(t_i)\bigr)
 =M_n\kappa_k-a_n\leq \kappa_k(C_k/n+\varepsilon).
\]
Each node value of $G_n$ therefore differs from the corresponding
value of $\T_k(2t-1)$ by $O_k(\varepsilon+n^{-1})$.
Interpolation at the fixed nodes $t_i$ proves
\eqref{eq:chebyshev-stability}.  If $R_n\sim n^k/(2^{2k-3}k!)$,
take $\varepsilon=\max\{0,R_n/(C_k^*n^k)-1\}$;
then $\varepsilon\to0$, proving \eqref{eq:chebyshev-profile}.

Finally, for $k=2$ subtract the even integer $f(0)$ and write
$f(j)=aj+bj(j-1)$, where $a$ is odd and $b$ is a nonzero integer.
Changing the sign if necessary, assume $b>0$.
If $b=1$, then $f(j)=(j-t)^2-t^2$ for an integer $t$.
Its minimum possible range on $0,\ldots,n$ is
$\lceil n/2\rceil^2$, attained by an integer $t$ nearest to $n/2$;
integers outside $[0,n]$ give larger ranges.
If $b\geq2$, put $r=\lfloor n/2\rfloor$.  The identity
\[
 (1-r/n)f(0)+(r/n)f(n)-f(r)=br(n-r)
\]
gives $R(f)\geq2\lfloor n^2/4\rfloor\geq\lceil n/2\rceil^2$
for $n\geq2$.  This proves \eqref{eq:quadratic-exact-degree}.
\end{proof}

The arithmetic bound also gives an exact statement for odd $n\geq3$:
if $p$ is its smallest prime divisor and $1\leq d<p$, then
$\gcd(n,L_d)=1$, so $R_d=n$, attained by the linear data $z_j=j$.
For composite odd $n$, this degree range is already contained in
$n\geq3d$; the arithmetic argument additionally handles the prime case
up to degree $n-1$.

If $p$ is an odd prime, $a\geq1$, and $1\leq d<p^a\leq n$, then
\begin{equation*}
 R_d\geq p^{\,a-\lfloor\log_p d\rfloor}.
\end{equation*}
For $a=1$ this recovers the prime obstruction $R_d\geq p$ for $d<p$.
For instance, the case $n\geq3d$ of \cref{thm:arithmetic-width} gives $R_d=25$ for $Q_{25}$
through degree $8$, and $R_d=35$ for $Q_{35}$ through degree $11$.

\begin{corollary}\label{thm:prime-cube}
For antipodal vertices of $Q_p$, where $p$ is an odd prime,
\begin{equation}\label{eq:prime-cube-cost}
 T_d(W)=\begin{cases}
 +\infty,&d=0,\\
 p\pi/W,&1\leq d<p,\\
 \pi/W,&d=p.
 \end{cases}
\end{equation}
These optima hold under both local and global width constraints.
\end{corollary}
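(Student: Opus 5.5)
We would reduce everything to \cref{thm:degree-width}, using the explicit hypercube data from the start of \cref{sec:equally-spaced}: for antipodal vertices of $Q_p$ the support is the full spectrum $\lambda_j=p-2j$, $0\leq j\leq p$, with $\sigma_j=(-1)^j$. This is the equally spaced setting with $n=p$, $h=-2$, and $\sigma_j=\sigma_0(-1)^j$. Since $\Phi_a=\spec(A)$, local and global width coincide. Hence every optimum below holds under both constraints, and by \eqref{eq:degree-optimum} it suffices to compute $R_d$ for each $0\leq d\leq p$.

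\emph{The cases $d=0$ and $d=p$.} For $d=0$, the space $\mathcal V_0$ consists only of the zero vector. The parity condition in \eqref{eq:parity-lattice} requires $z_1$ to be odd, so $\mathcal Z_0=\varnothing$ and $T_0(W)=\infty$. For $d=p$, note that $\deg r_-=p$, as recorded at the start of \cref{sec:equally-spaced} (the antipodal distance is $p$, combined with \cref{cor:distance-degree}). Thus $d\geq\delta$, and \cref{thm:degree-width} gives $R_p=1$, so $T_p(W)=\pi/W$. Equivalently, one could cite \cref{thm:optimal-transfer-time} directly.

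\emph{The range $1\leq d<p$.} This is the main step. For the lower bound, apply the arithmetic bound \eqref{eq:arithmetic-width} of \cref{thm:arithmetic-width} with the odd integer $m=p\leq n$. Since $d<p$ and $p$ is prime, $p\nmid L_d=\operatorname{lcm}(1,\ldots,d)$, so $\gcd(p,L_d)=1$ and $R_d\geq p$. Unwinding \cref{lem:integer-shift}, the reason is that $p$ divides $\binom{p}{i}$ for $1\leq i\leq d$, so $p\mid z_p-z_0$, while $z_p-z_0$ is odd and hence nonzero.

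For the upper bound, take the linear data $z_j=j$. These lie in $\mathcal V_1\subseteq\mathcal V_d$, they satisfy $z_j\equiv(1-\sigma_j\sigma_0)/2\pmod 2$, and their range is $p$. Therefore $R_d=p$ and $T_d(W)=p\pi/W$.

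The only point needing care is confirming that the hypotheses of \cref{thm:arithmetic-width} (equally spaced support with alternating signs) match the hypercube data and that $L_d$ is coprime to $p$ exactly when $d<p$. Both follow immediately from the data above, so no serious obstacle is expected.
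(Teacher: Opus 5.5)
Your proposal is correct and follows the paper's proof essentially step for step. The arithmetic bound of \cref{thm:arithmetic-width} with $m=p$ gives $R_d\geq p$, the linear data $z_j=j$ from $(p-x)/2$ give the matching upper bound, $\deg r_-=p$ gives $R_p=1$, and \cref{thm:degree-width} converts ranges into times, with the only cosmetic difference at $d=0$: you show $\mathcal Z_0=\varnothing$ via the parity of $z_1$, while the paper simply notes that a constant Hamiltonian cannot transfer between distinct vertices.
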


\begin{proof}
For $n=p$, \cref{thm:arithmetic-width} gives $R_d\geq p$ for every
$1\leq d<p$.  The affine polynomial $(p-x)/2$ has values $z_j=j$
on the supported eigenvalues, so $R_d\leq p$ for all these degrees.
For $d=p$, the polynomial $r_-$ gives $R_p=1$; for $d=0$, a constant
Hamiltonian cannot transfer between distinct vertices.
\Cref{thm:degree-width} now proves \eqref{eq:prime-cube-cost}.
\end{proof}

For example, under width $W$, the affine Hamiltonian
$W(pI-A)/(2p)$ attains $p\pi/W$.  Increasing its allowed degree to any
value below $p$ cannot improve this time.  At degree $p$, the antipodal
permutation $A_p$ gives the Hamiltonian $W(I-A_p)/2$ and time $\pi/W$.
Thus the ratio $T_{p-1}(W)/T_p(W)=p$ is unbounded.
For $n\geq3$, the minimum transfer time for exact local degree two in
\cref{cor:exact-degree-growth} is at least $n\pi/W$.
Thus $R_1=R_2=n$: allowing quadratic terms does not improve the
optimum over affine Hamiltonians.

\section{Johnson graphs}\label{sec:johnson}

Let $A$ be the adjacency matrix of $J(2m,m)$, and let $a,b$ be complementary
$m$-subsets.  Their distance is $m$, and the support is the full
spectrum, with eigenvalues and signs
\begin{equation}\label{eq:johnson-data}
 \lambda_j=(m-j)^2-j,\qquad \sigma_j=(-1)^j,
 \qquad 0\leq j\leq m;
\end{equation}
see~\cite{VinetZhan2020} and \cite[Appendix~A.4]{SongPowersParity}.
Their weights are
\begin{equation}\label{eq:johnson-weights}
 w_j=\frac{\mu_j}{\binom{2m}{m}},\qquad
 \mu_j=\binom{2m}{j}-\binom{2m}{j-1},
\end{equation}
where $\binom{2m}{-1}=0$.
The one-dimensional nullspace of the transpose evaluation matrix for
degrees less than $m$ is generated by the vector with $j$th entry
$1/\prod_{k\ne j}(\lambda_j-\lambda_k)$, and direct multiplication gives
\[
 \prod_{k\ne j}(\lambda_j-\lambda_k)
 =(-1)^j\frac{j!(2m+1-j)!}{2m+1-2j}.
\]
The reciprocal of this product is $(-1)^j\mu_j/(2m)!$.  Thus the moment equations
in \cref{lem:one-moment}, together with
$\sum_j\mu_j=\binom{2m}{m}$, give \eqref{eq:johnson-weights}.

Put $r=m-j$, so that $\lambda_j=r(r+1)-m$.
Define polynomials in $x$ by
\[
 B_k(x)=\frac{1}{(2k)!}\prod_{i=0}^{k-1}(x-i(i+1)),
 \qquad B_0(x)=1.
\]
They satisfy $B_k(r(r+1))=\binom{r+k}{2k}$.
The evaluation matrix $(B_k(r(r+1)))_{0\leq r,k\leq m}$ is an
integer lower triangular matrix with unit diagonal.  Consequently, every polynomial of degree at most $m$
in $x$ that is integer-valued at $x=r(r+1)$, $0\leq r\leq m$,
has a unique expansion
$\sum c_kB_k$ with integer coefficients.

We use the following part of the Johnson-scheme classification
\cite[Theorems~3.6 and~3.7]{VinetZhan2020}: after subtracting its value
at $r=0$, integer phase data with parity $r$ are precisely
\begin{equation}\label{eq:johnson-phase-basis}
 f(r)=\sum_{k=1}^d c_k\binom{r+k}{2k},\quad c_k\in\Z,\qquad
 c_k\equiv
 \begin{cases}1,&k\text{ is a power of }2,\\
               0,&\text{otherwise}
 \end{cases}\pmod2,
\end{equation}
where $q\leq d\leq m$ and $q=2^{\lfloor\log_2m\rfloor}$.
There are no such phase data when $d<q$.
Reversing the index changes all signs $\sigma_j$ by the same factor,
so normalizing at $r=0$ gives $f(r)\equiv r\pmod2$.
In \eqref{eq:johnson-phase-basis}, $d$ bounds the degree in $x=r(r+1)$,
equivalently the local degree in $A$, rather than the degree in $r$.
In particular, under the degree bound $q$, the interpolating polynomial
in $x$ has degree exactly $q$, since $c_q$ is odd.

\begin{lemma}\label{lem:johnson-range-monotone}
For $\lceil m/2\rceil\leq k<m$,
$\Lambda_{m,k+1}\leq\Lambda_{m,k}/4$.
\end{lemma}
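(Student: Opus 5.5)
We need $\Lambda_{m,k+1}^2/\Lambda_{m,k}^2\le 1/16$. Since $\Lambda_{m,k}^2=\frac{2}{m+1}\binom{2m+2k+2}{4k+1}/\binom{4k}{2k}$, the plan is to compute the ratio of consecutive values in closed form by direct cancellation, in the same way as the ratio $\mathcal D_{n,k+1}/\mathcal D_{n,k}$ in the proof of \cref{thm:arithmetic-width}, and then bound it using $k\geq\lceil m/2\rceil$.

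First, write $N=2m+2k+2$. Then
\[
\frac{\binom{N+2}{4k+5}}{\binom{N}{4k+1}}
=\frac{(N+2)(N+1)}{(4k+5)(4k+4)(4k+3)(4k+2)}\,(N-4k)(N-4k-1)(N-4k-2)(N-4k-3).
\]
With $N-4k=2m-2k+2$ the last factor is $(2m-2k+2)(2m-2k+1)(2m-2k)(2m-2k-1)$. Second,
\[
\frac{\binom{4k}{2k}}{\binom{4k+4}{2k+2}}=\frac{(2k+2)^2(2k+1)^2}{(4k+4)(4k+3)(4k+2)(4k+1)}.
\]
Multiplying, the ratio $\rho_k=\Lambda_{m,k+1}^2/\Lambda_{m,k}^2$ becomes
\[
\rho_k=\frac{(2m+2k+4)(2m+2k+3)(2m-2k+2)(2m-2k+1)(2m-2k)(2m-2k-1)(2k+2)^2(2k+1)^2}{(4k+5)(4k+4)^2(4k+3)^2(4k+2)^2(4k+1)}.
\]
The factors $(2k+2)^2(2k+1)^2$ cancel against $(4k+4)^2(4k+2)^2/16$, leaving
\[
\rho_k=\frac{(2m+2k+4)(2m+2k+3)(2m-2k+2)(2m-2k+1)(2m-2k)(2m-2k-1)}{16\,(4k+5)(4k+3)^2(4k+1)}.
\]
If $k=m$ is excluded and $k=m-1$ occurs, the factor $2m-2k-1=1$ is still positive. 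So it suffices to show that the sextic numerator is at most the quartic-type denominator product $(4k+5)(4k+3)^2(4k+1)$.

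Next, use $k\geq m/2$, i.e. $2m\leq 4k$, so that $2m-2k\leq 2k$. Pair the factors as
\[
(2m+2k+4)(2m-2k)\leq(6k+4)\cdot 2k\quad\text{and}\quad (2m+2k+3)(2m-2k-1)\leq(6k+3)(2k-1),
\]
and bound the remaining $(2m-2k+2)(2m-2k+1)\leq(2k+2)(2k+1)$. Comparing this against $(4k+5)(4k+3)^2(4k+1)$ is awkward because the numerator has degree six in $k$ and the denominator degree four, so this crude pairing fails for large $k$. The fix is to treat $j=m-k$, which satisfies $1\leq j\leq m/2$, as the small parameter. The numerator is then $(4m-2j+4)(4m-2j+3)(2j+2)(2j+1)(2j)(2j-1)$ and the denominator is $16(4m-4j+5)(4m-4j+3)^2(4m-4j+1)$. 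Since $m\geq 2j$ we have $4m-4j\geq 2m$ and $4m-2j\leq 4m$, so the numerator is at most $\approx 16m^2\cdot 16j^4$ and the denominator is at least $\approx 16\cdot 16m^4$. This gives $\rho_k\lesssim j^4/m^2\cdot O(1)$, which is not $\le1/16$ when $j\sim m/2$.

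I therefore expect the main obstacle to be the regime $j$ near $m/2$, where the bound must be sharp. I would verify it directly at the extreme $k=\lceil m/2\rceil$, where $\rho_k\approx(3m)^2m^4/(16\cdot 256\, m^4/16)$. If the ratio is monotone decreasing in $k$, which can be checked by noting that each factor moves favorably as $k$ increases, then the extreme case suffices, and otherwise I would fall back to a careful estimate over $m/2\le k<m$. If the constant $1/16$ turns out to require exact algebra near $k=m/2$, I would expand the polynomial difference (denominator minus numerator) in $j$ and $m$ and check its nonnegativity termwise.
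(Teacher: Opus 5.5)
\textbf{There is a genuine gap: an arithmetic slip in your first ratio.} The correct value is
\[
 \frac{\binom{2m+2k+4}{4k+5}}{\binom{2m+2k+2}{4k+1}}
 =\frac{(2m+2k+4)(2m+2k+3)\,(2m-2k+1)(2m-2k)}{(4k+5)(4k+4)(4k+3)(4k+2)}.
\]
The quotient $(N-4k-1)!/(N-4k-3)!$, with your $N=2m+2k+2$, contributes only two factors. You wrote four, namely $(N-4k)(N-4k-1)(N-4k-2)(N-4k-3)$.

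Your $\rho_k$ therefore carries a spurious factor $(2m-2k+2)(2m-2k-1)\approx4(m-k)^2$. With it, the inequality you then try to prove is false. For example, at $m=10$, $k=5$ your formula gives $\rho_5\approx3$, while the true ratio is about $0.028$.

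This is what your own estimates are detecting. The bound $\rho_k\lesssim j^4/m^2$, and the value at $k=\lceil m/2\rceil$ that grows like $m^2$, do not come from a delicate sharp regime. None of your fallback plans (monotonicity in $k$, or expanding the difference termwise) can rescue a false inequality. As written, the proposal does not reach the conclusion.

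\textbf{The fix.} With the slip corrected,
\[
 \rho_k=\frac{(2m+2k+4)(2m+2k+3)(2m-2k+1)(2m-2k)}{16\,(4k+5)(4k+3)^2(4k+1)}.
\]
Your pairing idea now works immediately. Since $m\le2k$ and $k<m$ (so every factor is positive),
\[
 (2m+2k+3)(2m-2k+1)\le(6k+3)(2k+1)\le(4k+3)(4k+1),
\]
\[
 (2m+2k+4)(2m-2k)\le(6k+4)\,2k\le(4k+5)(4k+3).
\]
Hence $\rho_k\le1/16$.

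This is the paper's argument with different bookkeeping. The paper sets $N=2m+1$ and $D_\ell=\frac{4}{N+1}\binom{N+\ell+1}{2\ell+1}/\binom{2\ell}{\ell}$, so that $\Lambda_{m,k}^2=D_{2k}$. It bounds each one-step ratio
\[
 \frac{D_{\ell+1}}{D_\ell}=\frac{(N+1)^2-(\ell+1)^2}{16(\ell+1)^2-4}\le\frac14
 \qquad(m\le\ell<N),
\]
using $N+1\le2(\ell+1)$. Your two paired factors are exactly $D_{2k+1}/D_{2k}$ and $D_{2k+2}/D_{2k+1}$.
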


\begin{proof}
Put $N=2m+1$ and
\[
 D_\ell=\frac{4}{N+1}
        \frac{\binom{N+\ell+1}{2\ell+1}}{\binom{2\ell}{\ell}},
 \qquad 1\leq\ell\leq N.
\]
For $m\leq\ell<N$, direct cancellation gives
\[
 \frac{D_{\ell+1}}{D_\ell}
 =\frac{(N+\ell+2)(N-\ell)}{4(2\ell+3)(2\ell+1)}
 =\frac{(N+1)^2-(\ell+1)^2}{16(\ell+1)^2-4}
 \leq\frac14.
\]
Indeed, $N+1=2m+2\leq2(\ell+1)$, and
$3(\ell+1)^2\leq(16(\ell+1)^2-4)/4$.
Since $\Lambda_{m,k}^2=D_{2k}$, applying the ratio bound twice
proves the claim.
\end{proof}

\begin{proposition}\label{prop:johnson-rounding}
For $q=2^{\lfloor\log_2m\rfloor}$ there are admissible integer phase
data of degree $q$ and range at most $U_{m,q}$, where
\begin{equation}\label{eq:johnson-rounding-bound}
 U_{m,q}^2=\sum_{k=1}^q
       \frac{\binom{2m+2k+2}{4k+1}}{\binom{4k}{2k}}.
\end{equation}
They can be constructed by successive rounding of coefficients.
Moreover, uniformly for these values of $q$,
\begin{equation}\label{eq:johnson-upper-rate}
 \log U_{m,q}=m\log\Gamma+O(\log m),
 \qquad \Gamma=\frac{9+\sqrt{17}}8.
\end{equation}
\end{proposition}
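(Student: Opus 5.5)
The construction is a successive rounding of coefficients in the basis $B_k$, run from top degree downward, and the variance (orthogonality) argument of \cref{lem:discrete-range} controls how much range the rounding adds.

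Work in the variable $r\in\{0,\dots,m\}$ with the basis $g_k(r)=\binom{r+k}{2k}$. By \eqref{eq:johnson-phase-basis}, admissible data are $\sum_{k\le q}c_kg_k$ with $c_k$ odd exactly when $k$ is a power of two. Since $q$ is a power of two, we may take $c_q=\pm1$.

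The plan is to start from a real target $h(r)$ of degree $q$ in $x=r(r+1)$ whose coefficient of $B_q$ is $1$ and which has small sup-norm on the nodes. We then choose integer coefficients of prescribed parity for $k=q-1,\dots,1$, one at a time from the top down.

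At each step the rounding should be done against the discrete orthogonal polynomials rather than naively. Let $P_k$ be the monic-in-$B_k$ polynomial of degree $k$ in $x$ that is orthogonal to lower degrees. For the node set, use $x=r(r+1)$, which corresponds to $N=2m+1$ nodes $s=m+1\pm(r+\tfrac12)$ under the symmetric map. Expanding in this orthogonal system, the already fixed top part leaves a residual. At stage $k$ we choose $c_k$ within distance $1$ of the value that would cancel the $k$th orthogonal component. The resulting final error is then a combination $\sum_k\epsilon_kP_k$ with $|\epsilon_k|\le1$, plus the optimal target, which is arranged to vanish up to the free constant.

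By orthogonality, $\sum_r(f(r)-b)^2\le\sum_{k=1}^q\|P_k\|^2$. The norm computation in the proof of \cref{lem:discrete-range}, transported through the identification of $x=r(r+1)$ on $0,\dots,m$ with even-symmetric data on $2m+2$ points, gives $\|P_k\|^2$ proportional to $\binom{2m+2k+2}{4k+1}/\binom{4k}{2k}$. This is the same quantity as in $\Lambda_{m,k}$, and the normalizations should match those in $U_{m,q}$ once we account for the factor $(m+1)$ nodes versus $2m+2$.

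The main obstacle is converting this $\ell^2$ bound into a range bound. The plan is to use $\max-\min\le\sqrt{\ldots}$ via a sup-norm estimate $|f(r)-b|^2\le\sum_r(f-b)^2$, which is crude but suffices for a bound of the form $U_{m,q}$. I would check whether the exact constant requires instead a Cauchy–Schwarz against the Christoffel function, and if so adjust.

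For \eqref{eq:johnson-upper-rate}, the sum is dominated by its largest term, up to a factor $q\le m$. Applying Stirling's formula to $\binom{2m+2k+2}{4k+1}/\binom{4k}{2k}$ with $u=k/m$ gives an exponent $m\phi(u)+O(\log m)$. Here
\[
\phi(u)=(2+2u)\log(2+2u)-4u\log(4u)-(2-2u)\log(2-2u)-4u\log 2 .
\]
Maximizing $\phi$ over $u$ gives the critical equation $(1+u)(1-u)=16u^2\cdot(\text{const})$. This reduces to a quadratic whose root yields $\tfrac12\max\phi=\log\Gamma$ with $\Gamma=(9+\sqrt{17})/8$.

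Uniformity in $q$ requires that the maximizer $u^*$ satisfies $u^*m\le q$. Since $q>m/2$, this needs $u^*<1/2$, which I would verify numerically for the root. If it holds, every such $q$ includes the dominant term and excludes nothing larger, so the rate is $m\log\Gamma+O(\log m)$ uniformly.
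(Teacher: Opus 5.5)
Your construction is the paper's. Rounding each $c_k$ ($k=q,\dots,1$) to the prescribed parity within distance one of the value that cancels the $k$th orthogonal component leaves $F=\sum_{k=1}^q\alpha_kP_k$ with $|\alpha_k|\le1$, where $P_k(r)=G_{2k}(r+m+1)/(4k)!$. The sup-norm target $h$ plays no role. The orthogonality bound holds precisely because no other target term survives.

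Your asymptotic analysis also matches. Your $\phi$ equals $2\psi$, its maximizer is $u_*=1/\sqrt{17}<1/2<q/m$, and $\tfrac12\phi(u_*)=\log\frac{1+u_*}{1-u_*}=\log\Gamma$. For the upper bound you still need Stirling uniformly in $1\le k\le q$, e.g.\ $\log t!=t\log t-t+O(\log m)$ for $0\le t\le 6m+2$.

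One slip: the reflected grid is $r=-m-1,\dots,m$, i.e.\ $s=r+m+1\in\{0,\dots,2m+1\}$. It has $2m+2$ nodes symmetric about $m+\tfrac12$, not $2m+1$ nodes about $m+1$.

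The genuine gap is the step you flagged yourself. The statement claims range at most $U_{m,q}$ exactly, and this constant is reused in the upper bound of \cref{thm:johnson-exponential}. Your plan does not give it.

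On the reflected grid the norms are exactly $\sum_{r=-m-1}^mP_k(r)^2=\binom{2m+2k+2}{4k+1}/\binom{4k}{2k}$. Hence $\sum_{r=-m-1}^mF(r)^2\le U_{m,q}^2$, with $b=0$ because every $P_k$ is orthogonal to constants. Bounding $\max F-b$ and $b-\min F$ separately by a pointwise estimate and adding gives range at most $2U_{m,q}$. Using that each value on $0,\dots,m$ occurs twice on the reflected grid improves this only to $\sqrt2\,U_{m,q}$.

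A Christoffel-function refinement of the pointwise bound is not the natural fix. The loss comes from adding the two extreme deviations linearly; combine them in quadrature instead. Let $M$ and $L$ be the maximum and minimum of $F$ on $0,\dots,m$, attained at distinct nodes $r_1\ne r_2$ (otherwise the range is zero). Then $r_1,-1-r_1,r_2,-1-r_2$ are four distinct points of the reflected grid, so
\[
 (M-L)^2\le2(M^2+L^2)\le\sum_{r=-m-1}^mF(r)^2\le U_{m,q}^2 .
\]
Subtracting $F(0)$ does not change the range, so this gives the stated bound. The $\sqrt2$ loss would be invisible in \eqref{eq:johnson-upper-rate}, but not in the proposition itself.
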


\begin{proof}
Use the orthogonal polynomials $G_\ell$ from the proof of
\cref{lem:discrete-range}, with $N=2m+1$, and put
\[
 P_k(r)=\frac{G_{2k}(r+m+1)}{(4k)!},\qquad 1\leq k\leq q.
\]
Reflection in $N/2$ preserves the counting measure on $0,\ldots,N$.
Uniqueness of an orthogonal polynomial with prescribed leading
coefficient gives $G_{2k}(N-x)=G_{2k}(x)$.
Thus $P_k(-1-r)=P_k(r)$, and $P_k$ is a polynomial in $r(r+1)$.
Its leading coefficient as a polynomial in $r$ is $1/(2k)!$;
its coefficient of $B_k$ is therefore one.  The polynomials
$P_1,\ldots,P_q$ are mutually orthogonal on the reflected grid.
The norm calculation in \cref{lem:discrete-range} gives
\[
 \sum_{r=-m-1}^mP_k(r)^2
 =\frac{\binom{2m+2k+2}{4k+1}}{\binom{4k}{2k}}.
\]

Start with the zero polynomial and proceed through $k=q,q-1,\ldots,1$.
At step $k$, round the current coefficient of $B_k$ to the nearest
integer having the parity prescribed in \eqref{eq:johnson-phase-basis}.
Adding $\alpha_kP_k$, with $|\alpha_k|\leq1$, makes this change
without altering any coefficient already fixed.
The final polynomial is $F=\sum_{k=1}^q\alpha_kP_k$.
Subtract $F(0)$ to obtain admissible phase data $f$ with $f(0)=0$.
The degree is exactly $q$, since its coefficient of $B_q$ is odd.
Orthogonality gives
\[
 \sum_{r=-m-1}^m F(r)^2\leq U_{m,q}^2.
\]
Each value on $0,\ldots,m$ appears twice on the larger grid.
If $M$ and $L$ are the maximum and minimum of $F$ on $0,\ldots,m$, then
$(M-L)^2\leq2(M^2+L^2)\leq\sum_{r=-m-1}^mF(r)^2$.
Write $f(r)=P(r(r+1))$.  Since $\lambda_j=(m-j)(m-j+1)-m$,
the Hamiltonian $WP(A+mI)/R(f)$ has width $W$ and transfers
at time $\pi R(f)/W$.

For the asymptotic estimate, factorial bounds in
\eqref{eq:johnson-range-bound} give, uniformly for $1\leq k\leq m$,
\[
 \log\Lambda_{m,k}=m\psi(k/m)+O(\log m),
\]
with the continuous endpoint values $\psi(0)=0$ and $\psi(1)=-2\log2$.
Here one may use $\log(t!)=t\log t-t+O(\log(m+1))$ for $0\leq t\leq6m+2$,
with $0\log0=0$; fixed shifts of factorial arguments contribute
only $O(\log m)$.
The function $\psi$ has its unique maximum at $u_*=1/\sqrt{17}$, and
\[
 \psi(u_*)=\log\frac{1+u_*}{1-u_*}=\log\Gamma.
\]
Since $q>m/2$, the sum in \eqref{eq:johnson-rounding-bound} includes
an integer $k$ nearest to $mu_*$ for all sufficiently large $m$.
Bounding the sum above by $m$ times its largest term and below by
this one term proves \eqref{eq:johnson-upper-rate}.
\end{proof}

\begin{proposition}\label{prop:johnson-deficit}
Let $q=2^{\lfloor\log_2m\rfloor}$ and $1\leq t\leq m-q$.
There are integer phase data for complementary vertices of $J(2m,m)$
of degree at most $m-t$ and range at most
\begin{equation}\label{eq:johnson-deficit-upper}
 V_{m,t}:=6L_t\binom{2m+1}{t}\binom{4m+2}{t},
 \qquad L_t=\operatorname{lcm}(1,\ldots,t).
\end{equation}
They can be constructed from any integer phase data of degree at most
$m-t$ with $z_j\equiv j\pmod2$, by rounding coefficients in a real
basis of integer vectors satisfying \eqref{eq:moment-system}.
In particular,
\begin{equation}\label{eq:johnson-deficit-rate}
 \log R_{m-t}=O\!\left(t\log\frac{em}{t}\right)
\end{equation}
uniformly for $1\leq t\leq m-q$.  For fixed $t$ this gives
$R_{m-t}=O_t(m^{2t})$.  If $t=o(m)$ and $m-t\geq q$, then
$T_{m-t}(W)=(\pi/W)\exp(o(m))$.
\end{proposition}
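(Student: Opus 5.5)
The plan is to work entirely through \cref{lem:one-moment}, with $D=m$ and $d=m-t$. Admissible data of degree at most $m-t$ are exactly the integer vectors $y=(y_j)_{j=0}^m$ with $y_j\equiv j\pmod 2$ that satisfy the $t$ moment equations
\[
 \sum_{j=0}^m(-1)^j\mu_j\lambda_j^ky_j=0\qquad(0\leq k<t).
\]
Here I have cleared the common factor $\binom{2m}{m}^{-1}$ from $\sigma_jw_j$ using \eqref{eq:johnson-weights}. These are $t$ integer linear equations, so the integer solutions form a lattice $\mathcal L\subset\Z^{m+1}$ of rank $m+1-t$. Since $m-t\geq q$, the Vinet--Zhan description \eqref{eq:johnson-phase-basis} provides at least one admissible vector $z^{(0)}$. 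Every vector $z^{(0)}+2v$ with $v\in\mathcal L$ is again admissible. So it suffices to exhibit real-basis vectors $b^{(0)},\ldots,b^{(m-t)}\in\mathcal L$ with small entries and controlled overlaps, and then to reduce $z^{(0)}$ modulo $2\mathcal L$. The reduction works as follows. Write $z^{(0)}=\sum_i\alpha_ib^{(i)}$ with real $\alpha_i$, and subtract $2\sum_i\lfloor\alpha_i/2\rceil b^{(i)}$. The result $y=\sum_i\beta_ib^{(i)}$ is an integer vector with the right parities and $|\beta_i|\leq1$.

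For the basis I will use \emph{local} kernel vectors. For each window $W_i=\{i,i+1,\ldots,i+t\}$ with $0\leq i\leq m-t$, the moment system restricted to the coordinates in $W_i$ is $t$ equations in $t+1$ unknowns. These are exactly the conditions that a vector annihilates evaluation of polynomials of degree $<t$ on the nodes $\lambda_l$, $l\in W_i$, after the weights $(-1)^l\mu_l$ are multiplied in. By the Lagrange leading-coefficient argument used in \cref{prop:degree-product}, the restricted kernel is spanned by
\[
 b^{(i)}_l=\frac{K_i}{\mu_l\prod_{l'\in W_i\setminus\{l\}}(\lambda_l-\lambda_{l'})}\qquad(l\in W_i),
\]
extended by zero outside $W_i$, where $K_i$ is a positive integer chosen to clear denominators. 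The vector $b^{(i)}$ has leading support index $i$, so the $m+1-t$ vectors are triangular and hence a real basis of $\mathcal L\otimes\R$. Each coordinate lies in at most $t+1$ windows. Therefore the range of $y$ is at most $2(t+1)\max_{i,l}|b^{(i)}_l|$, and the problem reduces to bounding the entries of a single $b^{(i)}$.

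The main obstacle is the choice of $K_i$ together with the entry bound. With $r=m-l$ and $s=m-l'$ one has $\lambda_l-\lambda_{l'}=(r-s)(r+s+1)$. The factors $r-s$ range over the differences within a window of $t+1$ consecutive integers, so their products are factorial-type, $a!(t-a)!$. The factors $r+s+1$ are $t$ consecutive-ish integers below $2m+2$. I would first try
\[
 K_i=L_t\cdot\Bigl(\prod_{l\in W_i}\mu_l\Bigr)\Big/\gcd\text{-type savings}\cdot\prod(\text{the }r+s+1\text{ factors}),
\]
aiming at the following entry bound. The factor $\binom{t}{a}/t!$ from the first product combines with $L_t$ and contributes at most $L_t$ up to a constant. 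The products of the $r+s+1$ factors over the window are bounded by $t!\binom{4m+2}{t}$. Ratios of the $\mu_l$ within the window, all of which are ballot numbers of $2m$, are bounded by $t!\binom{2m+1}{t}$. This should yield $|b^{(i)}_l|\leq 3L_t\binom{2m+1}{t}\binom{4m+2}{t}/(t+1)$ after the $t!$ factors cancel. If the integrality of that clearing is too delicate, my fallback is to replace exact Cramer vectors by $\Z$-combinations of them; the rounding argument only needs a real basis of integer vectors, which leaves room to trade constants for the factor $6$.

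Finally, \eqref{eq:johnson-deficit-rate} follows from $\log L_t=O(t)$ (Chebyshev's bound) and $\log\binom{cm}{t}\leq t\log(ecm/t)$. The statement $O_t(m^{2t})$ for fixed $t$ is the product of the two binomials. The conclusion about $T_{m-t}(W)$ is then \cref{thm:degree-width}.
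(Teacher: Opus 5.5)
Your architecture matches the paper's: \cref{lem:one-moment} with $D=m$, one kernel vector per window $\{i,\ldots,i+t\}$ from the Lagrange leading-coefficient identity, triangularity for a real basis, and rounding the coefficients of a Vinet--Zhan vector to even integers.

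The gap is the step that carries all the quantitative content. You never fix an integer normalization $K_i$, never prove integrality, and never prove the entry bound. The sketch you give would also not work as written. A multiplier containing $\prod_{l\in W_i}\mu_l$ produces entries of size about $\prod_{l'\ne l}\mu_{l'}$, which is $\exp(\Theta(tm))$ for windows away from $j=0$. So everything hinges on the unspecified ``gcd-type savings''.

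The fallback of passing to $\Z$-combinations does not help either. The kernel restricted to a window is a rational line, so the integer vectors supported on $W_i$ are integer multiples of a single primitive vector. Combining different windows destroys the overlap count your range bound relies on.

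There is also a sign slip. The Lagrange identity makes $(-1)^l\mu_lb^{(i)}_l$, not $\mu_lb^{(i)}_l$, proportional to $1/\prod_{l'\ne l}(\lambda_l-\lambda_{l'})$. As written, your $b^{(i)}$ alternates in sign and is not in the kernel.

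The missing idea is that the Johnson data make this vector explicit. With $M=2m+1$ and $L=M-2i$, one has $\mu_j=(2m)!(M-2j)/(j!\,(M-j)!)$ and $\lambda_j-\lambda_k=(k-j)(M-j-k)$, so
\[
 \prod_{v\ne r}(\lambda_{i+r}-\lambda_{i+v})
 =(-1)^r\,r!\,(t-r)!\,\frac{(L-r)!}{(L-2r)\,(L-r-t-1)!},
\]
and the factor $L-2r$ cancels against $\mu_{i+r}$. Take $K_i=(-1)^i(2m)!\,L_tQ_i/(t!\,i!\,(M-i-t)!)$ with $Q_i=\prod_{u=0}^{2t}(L-u)$. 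This gives the positive entries
\[
 \binom tr^{-1}L_t\binom{i+r}{r}\binom{M-i-r}{t-r}\binom Lr\binom{L-r-t-1}{t-r},
\]
which are integers because $\binom tr\mid L_t$. Vandermonde then bounds the two binomial pairs by $\binom Mt$ and $\binom{2M}t$. The paper sums over the offsets $r$ at a fixed coordinate using $\sum_r\binom tr^{-1}<3$.

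Once this formula is in hand, your cruder count of $t+1$ overlapping windows also yields $V_{m,t}$. Interior entries carry $\binom tr^{-1}\le1/t$, and the two endpoint entries are at most $2^{-t}L_t\binom Mt\binom{2M}t$, so every entry is at most $3L_t\binom Mt\binom{2M}t/(t+1)$. Without the explicit formula, however, the proposal establishes neither the integrality of the basis nor the bound $V_{m,t}$. It therefore does not establish \eqref{eq:johnson-deficit-rate} either.
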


\begin{proof}
Put $M=2m+1$, and let $\mathcal M$ be the $t\times(m+1)$ moment matrix
\[
 \mathcal M_{s,j}=(-1)^j\mu_j\lambda_j^s,
 \qquad 0\leq s<t,\quad0\leq j\leq m,
\]
with the nodes and multiplicities in
\eqref{eq:johnson-data} and \eqref{eq:johnson-weights}.
The matrix $\mathcal M$ has rank $t$.  By \cref{lem:one-moment}, the required data are
exactly the integer vectors $z\in\ker\mathcal M$ with $z_j\equiv j\pmod2$.
We normalize $z_0$ only after constructing the data.

For $0\leq i\leq m-t$, put $L=M-2i$ and define a vector $w^{(i)}$
supported on $i,\ldots,i+t$ by
\begin{equation}\label{eq:johnson-local-kernel}
 w^{(i)}_{i+r}
 =\frac{L_t}{\binom tr}
   \binom{i+r}{r}\binom{M-i-r}{t-r}
   \binom Lr\binom{L-r-t-1}{t-r},
 \qquad0\leq r\leq t.
\end{equation}
Every entry on this support is a positive integer.
Here $v_p$ denotes the exponent of a prime $p$ in an integer.
The factorial formula expresses
$v_p\binom tr$ as a sum of terms equal to zero or one,
with at most $\lfloor\log_p t\rfloor$ nonzero terms.
Thus $\binom tr$ divides $L_t$.
All binomial coefficients in \eqref{eq:johnson-local-kernel}
have nonnegative admissible arguments, since $L\geq2t+1$.

To verify the moment equations, write
$Q_i=\prod_{u=0}^{2t}(L-u)$.
Using
\[
 \lambda_j-\lambda_k=(k-j)(M-j-k),\qquad
 \mu_j=\frac{(2m)!(M-2j)}{j!(M-j)!},
\]
direct cancellation gives, for $0\leq r\leq t$,
\[
 (-1)^{i+r}\mu_{i+r}w^{(i)}_{i+r}
 \prod_{\substack{0\leq v\leq t\\v\ne r}}
       (\lambda_{i+r}-\lambda_{i+v})
 =\frac{(-1)^i(2m)!L_tQ_i}{t!\,i!\,(M-i-t)!}.
\]
The right side is independent of $r$.
The leading-coefficient formula for Lagrange interpolation on these
$t+1$ nodes therefore gives $\mathcal M w^{(i)}=0$.
The first nonzero coordinate of $w^{(i)}$ has index $i$.
The $m-t+1$ vectors are consequently linearly independent and form a
basis of $\ker\mathcal M$ over $\R$; they need not generate the entire integer
kernel.

The condition $m-t\geq q$ guarantees integer phase data $z^0$
by \eqref{eq:johnson-phase-basis}, after reversing the index and
subtracting the value at $j=0$.
Expand $z^0=\sum_i\alpha_iw^{(i)}$ over the real basis and choose an
even integer $b_i$ nearest to each $\alpha_i$.
Then
\[
 z=z^0-\sum_i b_iw^{(i)}
   =\sum_i(\alpha_i-b_i)w^{(i)}
\]
is still an integer vector in $\ker\mathcal M$ with the prescribed parities,
and $|\alpha_i-b_i|\leq1$.
This construction uses only rational arithmetic and nearest-integer
rounding, since the basis and $z^0$ are integer.

Vandermonde's convolution bounds the two pairs of binomial factors by
\[
 \binom{i+r}{r}\binom{M-i-r}{t-r}\leq\binom Mt,
 \qquad
 \binom Lr\binom{L-r-t-1}{t-r}\leq\binom{2M}{t}.
\]
At any fixed coordinate $j$, at most one vector contributes for each
value $r=j-i$.  For $t\geq2$, the interior binomial coefficients
are at least $t$, so $\sum_{r=0}^t\binom tr^{-1}\leq2+(t-1)/t<3$;
the sum is $2$ for $t=1$.  Hence
\[
 |z_j|\leq\sum_i w^{(i)}_j
 \leq3L_t\binom Mt\binom{2M}{t}.
\]
The range bound \eqref{eq:johnson-deficit-upper} follows.
Subtracting the even integer $z_0$ normalizes the phase data without
altering their range or degree.

To bound $L_t$, observe that the quotient $L_{2n}/L_n$ divides $\binom{2n}{n}$: each prime
power newly appearing between $n$ and $2n$ contributes a factor to
that binomial coefficient.  Iterating gives
$L_{2^a}\leq4^{2^a-1}$, and comparison with the next power of two gives
$L_t\leq16^t$.
Using $\binom Nt\leq(eN/t)^t$ in
\eqref{eq:johnson-deficit-upper} proves
\eqref{eq:johnson-deficit-rate}.
Finally, $(t/m)\log(em/t)\to0$ when $t=o(m)$.
The transfer-time bounds follow from \cref{thm:degree-width}.
\end{proof}

\begin{proposition}\label{prop:johnson-linear}
If $m\geq3$ is not a power of two, then complementary vertices of
$J(2m,m)$ admit integer phase data of degree at most $m-1$ and range
at most $3m-2$.  Consequently,
\begin{equation}\label{eq:johnson-linear-cost}
 T_{m-1}(W)\leq\frac{(3m-2)\pi}{W}.
\end{equation}
If $m$ is a power of two, no polynomial of local degree at most $m-1$
admits PST between complementary vertices.
\end{proposition}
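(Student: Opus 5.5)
The statement has two parts. I would dispose of the power-of-two case first and then spend most of the effort on the construction.

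\emph{Power-of-two case.} If $m$ is a power of two, then $q=2^{\lfloor\log_2 m\rfloor}=m$. A local degree at most $m-1$ is then below $q$. By the classification \eqref{eq:johnson-phase-basis} of Vinet and Zhan, $\mathcal Z_{m-1}$ is empty. \Cref{thm:degree-width} then gives $T_{m-1}(W)=\infty$, so no such polynomial admits PST.

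\emph{Reduction for $m$ not a power of two.} Here $q\le m-1$, so feasibility is not in question; the task is the range bound. By \cref{lem:one-moment} with $D=m$ and $d=m-1$, I need integer data $z_0,\dots,z_m$ with $z_j\equiv j\pmod 2$ and the single moment equation $\sum_j(-1)^j\mu_j z_j=0$, where $\mu_j$ is as in \eqref{eq:johnson-weights}. I would rewrite this equation by Abel summation. The partial sums satisfy $\sum_{i\le k}\mu_i=\binom{2m}{k}$. Put $y_j=(-1)^jz_j$ and $y_{m+1}=0$. The equation then becomes
\[
 \sum_{k=0}^{m}\binom{2m}{k}\,\Delta_k=0,\qquad \Delta_k=y_k-y_{k+1}.
\]
The parity condition says $y_j\equiv j\pmod 2$. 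So $\Delta_k$ is odd for $k<m$, and $\Delta_m=y_m\equiv m\pmod 2$. Conversely, any such $\Delta$ determines $y$ by backward summation, hence $z$. The range of $z$ is then controlled by the partial sums $\sum_{k\ge j}\Delta_k$.

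\emph{Construction.} I would take $\Delta_k\in\{\pm1\}$ for most $k$, with a small number of entries $\pm3$ and a suitable $\Delta_m$. The aim is to balance the weighted sum while keeping every tail sum $|y_j|$ at most about $3m/2$; then $\max z-\min z\le 3m-2$. The natural first attempt is to alternate signs, which mirrors the affine data $z_j=j$ of range $m$. Then I would compute the defect $\sum\pm\binom{2m}{k}$ exactly, using the identity $\sum_{k\le j}(-1)^k\binom{2m}{k}=(-1)^j\binom{2m-1}{j}$. After that I would correct the defect by changing a few $\Delta_k$ from $\pm1$ to $\mp1$ or $\pm3$.

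\emph{Main obstacle.} The key difficulty is to show that the defect can be cancelled exactly, with integrality and the parity constraints respected, using corrections whose cumulative effect on the tail sums stays linear in $m$. I expect $m$ not being a power of two to enter exactly here. By Kummer's theorem, when $m$ is not a power of two some binomial coefficient $\binom{2m}{k}$ with $k<m$ has the right $2$-adic valuation to absorb the residual term $\binom{2m}{m}\Delta_m$. This is the same $2$-adic fact that underlies \eqref{eq:johnson-phase-basis}. If an explicit balancing proves awkward, the fallback is \cref{prop:johnson-deficit} with $t=1$. The basis vectors $w^{(i)}$ are then explicit two-term kernel vectors, and a sharper, telescoping choice of the rounding there may yield the $3m-2$ bound. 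Once the data are in hand, the estimate \eqref{eq:johnson-linear-cost} follows directly from \cref{thm:degree-width}.
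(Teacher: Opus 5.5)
Your treatment of the power-of-two case is fine. Invoking the Vinet--Zhan classification is legitimate here. The paper argues directly instead: with $z_j=(j\bmod 2)+2x_j$ the moment equation becomes $\sum_j(-1)^j\mu_jx_j=\binom{2m}{m}/4$, and $v_2\binom{2m}{m}=s_2(m)$ makes the right side non-integral exactly when $m$ is a power of two. Your Abel-summed form $\sum_k\binom{2m}{k}\Delta_k=0$ is also correct.

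\textbf{The gap.} You never produce integer data of range at most $3m-2$, which is the substance of the proposition. You outline a strategy, name the cancellation as the main obstacle, and stop. The strategy is also doubtful as stated:
\begin{enumerate}[label=(\alph*)]
\item After the alternating start $\Delta_k=(-1)^k$, the defect is $\binom{2m}{m}\bigl(\Delta_m+(-1)^{m-1}/2\bigr)$, of absolute value at least $\binom{2m}{m}/2$. Nothing shows that a few changes of size $2$ or $4$ can cancel it.
\item In your parametrization a change at index $k$ shifts every $y_j$ with $j\le k$, so corrections accumulate in the tail sums that control the range.
\item The Kummer remark only concerns solvability. The $2$-adic condition is just that the residual be even, which holds iff $4\mid\binom{2m}{m}$; then $\binom{2m}{0}=1$ absorbs it. This gives no control of size.
\item The fallback fails quantitatively. \cref{prop:johnson-deficit} with $t=1$ gives range at most $V_{m,1}=12(2m+1)^2$, and the hoped-for sharper rounding is not carried out.
\end{enumerate}

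\textbf{The missing idea.} Stay in the $\mu_j$ basis rather than the $\binom{2m}{k}$ basis.
\begin{enumerate}[label=(\roman*)]
\item Write $z_j=(j\bmod 2)+2(-1)^ja_j$. The equation becomes $\sum_j\mu_ja_j=\binom{2m}{m}/4$.
\item Expand the right side greedily by nearest-integer rounding, starting at an index $h$ maximizing $\mu_j$ and descending to $j=0$.
\item The ratios $\mu_{j+1}/\mu_j$ decrease in $j$, and $\mu_2/\mu_1=m-1-1/(2m-1)<m-1$. The residual after the step at $j+1$ has absolute value at most $\mu_{j+1}/2$, so $|a_j|<m/2$ for $1\le j<h$.
\item Also $|a_h|\le(m+3)/4$, because $\binom{2m}{m}\le(m+1)\mu_h$. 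And $|a_0|\le m-1$, because $\mu_1=2m-1$ and $\mu_0=1$.
\item The range bound is then pointwise: $z_j\in[-(m-1),m]$ for $j\ge1$ and $|z_0|\le2m-2$, so the range is at most $3m-2$.
\end{enumerate}
In your variables this solution is $\Delta_k=(-1)^k+2(a_k-a_{k+1})$ for $k<m$. The corrections occur at almost every index and may have size of order $m$, but they telescope, giving $y_j\in\{2a_j,2a_j-1\}$. Bounded tail sums come from taking the corrections to be consecutive differences of bounded digits, not from making only a few of them.
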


\begin{proof}
Write $N=\binom{2m}{m}$ and use the integer multiplicities $\mu_j$
in \eqref{eq:johnson-weights}.  By \cref{lem:one-moment},
integer phase data have degree at most $m-1$ exactly when
\[
 \sum_{j=0}^m(-1)^j\mu_j z_j=0,
 \qquad z_j\in\Z,\quad z_j\equiv j\pmod2.
\]
Subtracting the even integer $z_0$ afterwards preserves the parities
and normalizes the reference value to zero.
The two sign classes each have total multiplicity $N/2$.
Thus, writing $z_j=(j\bmod2)+2x_j$, the equation is equivalent to
\[
 \sum_{j=0}^m(-1)^j\mu_jx_j=N/4.
\]
The identity $v_2(t!)=t-s_2(t)$ gives $v_2(N)=s_2(m)$, where $s_2(m)$
is the number of ones in the binary expansion of $m$.
Thus $N/4$ is an integer exactly when $m$ is not a power of two.
If $m$ is a power of two, the moment equation has no integer solution.

Suppose now that $4\mid N$, and let $h$ be an index maximizing
$\mu_j$.  The ratios
\[
 \frac{\mu_{j+1}}{\mu_j}
 =\frac{2m-2j-1}{2m-2j+1}\frac{2m-j+1}{j+1}
 \qquad(0\leq j<m)
\]
are decreasing in $j$.  In particular, $h\geq1$, $\mu_0=1$,
$\mu_1=2m-1$, and, for $j\geq1$,
\[
 \frac{\mu_{j+1}}{\mu_j}\leq\frac{\mu_2}{\mu_1}
 =m-1-\frac{1}{2m-1}<m-1.
\]
Construct integers $a_h,\ldots,a_0$ by successive nearest-integer
rounding.  Start with residual $N/4$; at step $j$, choose $a_j$
nearest to the current residual divided by $\mu_j$, then subtract
$a_j\mu_j$.  Set $a_j=0$ for $j>h$.
After each step the residual has absolute value at most $\mu_j/2$.
Since $\mu_0=1$ and the residual is integer, the last step makes it
zero.  Hence
\[
 \sum_{j=0}^m\mu_ja_j=N/4.
\]
For the first coefficient,
\[
 |a_h|\leq\frac{N}{4\mu_h}+\frac12\leq\frac{m+3}{4},
\]
because $N=\sum_j\mu_j\leq(m+1)\mu_h$.
For $1\leq j<h$, the ratio bound gives $|a_j|<m/2$.
For $m\geq5$, $(m+3)/4\leq(m-1)/2$; for $m=3$,
the integer $a_h$ has absolute value at most $1$.  Therefore
\[
 |a_j|\leq b:=\left\lfloor\frac{m-1}{2}\right\rfloor
 \quad(j\geq1),\qquad |a_0|\leq m-1.
\]
Now set $x_j=(-1)^ja_j$ and $z_j=(j\bmod2)+2x_j$.
The integers $z_j$ satisfy $z_j\equiv j\pmod2$ and
$\sum_{j=0}^m(-1)^j\mu_jz_j=0$.
For $j\geq1$ their values lie in $[-2b,1+2b]$, while
$|z_0|\leq2m-2$.  Their range is at most
$2m-1+2b\leq3m-2$.  Interpolation and
\cref{thm:degree-width} prove \eqref{eq:johnson-linear-cost}.
\end{proof}

\begin{proof}[Proof of \cref{thm:johnson-degree-loss}]
The uniform bound and its logarithmic form follow from
\cref{prop:johnson-deficit}; the bound for $t=1$ is
\cref{prop:johnson-linear}.
\end{proof}

\begin{remark}\label{rem:johnson-relaxation}
Let $M_{m,k}$ be the infimum of the range on $r=0,\ldots,m$ over
real polynomials $\sum_{j=0}^k c_jB_j(r(r+1))$ with $c_k=1$,
without integer or parity restrictions on $c_0,\ldots,c_{k-1}$.
\Cref{lem:discrete-range} and the polynomial $P_k$ from the proof of
\cref{prop:johnson-rounding} give
\[
 \Lambda_{m,k}\leq M_{m,k}
 \leq\sqrt{\frac{m+1}{2}}\,\Lambda_{m,k}.
\]
For the upper bound, use the norm formula for $P_k$ and the fact that
each node value occurs twice on the reflected grid.  Hence, whenever
$k/m$ remains in a compact subinterval of $(0,1)$,
\[
 \log M_{m,k}=m\psi(k/m)+O(\log m).
\]
In particular, $M_{m,k}$ is exponentially small if
$k/m$ tends to a number greater than $\rho_0$.  Improving the degree
threshold for PST beyond $\rho_0$ therefore requires information
beyond the highest coefficient alone.  The real polynomials in this
relaxation need not satisfy the integer phase conditions.
\end{remark}

\begin{proof}[Proof of \cref{thm:johnson-exponential}]
The feasibility assertion follows from
\eqref{eq:johnson-phase-basis}.  Under a degree bound $d\geq q$,
let $k\in[q,d]$ be the exact degree of an admissible phase polynomial
in $r(r+1)$.  Its degree as a polynomial $f$ in $r$ is $2k$, with
leading coefficient $c_k/(2k)!$.  Moreover, $f(-1-r)=f(r)$, so its
range on $-m-1,\ldots,m$ is the same as on $0,\ldots,m$.
Translate the larger grid to $0,\ldots,2m+1$ and apply
\cref{lem:discrete-range} to obtain $R\geq|c_k|\Lambda_{m,k}$.

If $d=q$, then $k=q$ and $|c_q|\geq1$.
If $d>q$ and $k>q$, there is no power of two between $q$ and $m$,
so $c_k$ is nonzero and even.  Hence
$R\geq2\Lambda_{m,k}\geq2\Lambda_{m,d}$ by
\cref{lem:johnson-range-monotone}.
Finally, if $d>q$ but $k=q$, \cref{lem:johnson-range-monotone} gives
$R\geq\Lambda_{m,q}\geq4\Lambda_{m,d}$.
These cases prove the lower bound in \eqref{eq:johnson-general-bounds}.

For the upper bound, choose $c_k=1$ when $k$ is a power of two and
$c_k=0$ otherwise.  These coefficients satisfy
\eqref{eq:johnson-phase-basis}.  The resulting data are nondecreasing
on $0,\ldots,m$, start at zero, and have range
\[
 B=\sum_{r=0}^{\log_2q}\binom{m+2^r}{2^{r+1}}.
\]
The bound $R_d\leq U_{m,q}$ and its exponential rate follow from
\cref{prop:johnson-rounding}; a degree-$q$ construction is permitted
for every $d\geq q$.

When $m=2q-1$ and $d=q$, simplifying the lower bound gives
\eqref{eq:johnson-exponential}.  The estimate for $U_{2q-1,q}$
is \eqref{eq:johnson-upper-rate}.
Stirling's formula gives
\[
 \frac{\binom{6q}{2q}}{\binom{4q}{2q}}
 \sim\frac{\sqrt3}{2}\left(\frac{27}{16}\right)^{2q}.
\]
Hence $R_q=\exp(\Theta(m))$ when $m=2q-1$.  Finally, $R_m=1$ by
\cref{thm:degree-width}, since the pair has distance $m$ and full
support of size $m+1$.
\end{proof}

\begin{proof}[Proof of \cref{cor:johnson-window}]
Uniformly for $u=d/m\in[1/2,\beta]$, Stirling's formula in
\eqref{eq:johnson-range-bound} gives
\begin{equation}\label{eq:johnson-uniform-asymptotic}
 \log\Lambda_{m,d}
 =m\psi(u)-\tfrac12\log m+O_\beta(1).
\end{equation}
Here the factorial arguments are bounded above and below by positive
constant multiples of $m$, so the error is uniform.
A direct differentiation gives
\[
 \psi'(u)=\log\frac{1-u^2}{16u^2}<0\qquad(1/2\leq u<1).
\]
Moreover, $\psi(1/2)=\tfrac12\log(27/16)>0$ and
$\lim_{u\uparrow1}\psi(u)=-2\log2<0$.
This proves the existence and uniqueness of $\rho_0$ and gives
$\psi(u)\geq\psi(\beta)>0$ in the stated interval.
Equations~\eqref{eq:johnson-general-bounds} and
\eqref{eq:johnson-uniform-asymptotic} now prove
\eqref{eq:johnson-window}.  The upper bound
$R_d\leq U_{m,q}=\Gamma^{m+o(m)}$ supplies the uniform exponential
upper estimate.

For the final assertion, suppose instead that
$\liminf d_m/m<\rho_0$.  Feasibility gives $d_m\geq q>m/2$.
There are therefore $\beta\in(1/2,\rho_0)$ and an infinite
subsequence with $d_m\leq\beta m$.  On this subsequence,
\[
 \log(\tau_mW_m/\pi)\geq\log R_{d_m}
 \geq m\psi(\beta)-\tfrac12\log m+O_\beta(1),
\]
contradicting the assumed subexponential growth.
\end{proof}

\begin{proposition}\label{prop:johnson-optima}
For complementary vertices, under either local or global width $W$,
\[
 \begin{array}{c|cc}
  \text{graph}&\text{degree bound }m-1&\text{degree bound }m\\ \hline
  J(6,3)&T_2(W)=2\pi/W&T_3(W)=\pi/W\\
  J(10,5)&T_4(W)=4\pi/W&T_5(W)=\pi/W
 \end{array}
\]
\end{proposition}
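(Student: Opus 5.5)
The degree-$m$ entries follow from \cref{thm:degree-width}: complementary vertices are at distance $m$ and the support is the full spectrum of size $m+1$, so $\deg r_-=m$ and $R_m=1$. Since the support is the full spectrum, local and global width agree, so it suffices to compute $R_{m-1}$ in each case.

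\textbf{The case $J(6,3)$.} Here $m=3$, so the weights come from $\mu_j=\binom6j-\binom6{j-1}$, giving $(\mu_0,\dots,\mu_3)=(1,5,9,5)$ and $N=\binom63=20$. The sign classes are $\Phi_+=\{0,2\}$ with multiplicities $1,9$ and $\Phi_-=\{1,3\}$ with multiplicities $5,5$. By \cref{thm:weight-partition}(i) with $d=D-1$, we have $T_2(W)=2\pi/W$ exactly when some subset of one sign class has multiplicity $N/4=5$. The subset $\{1\}\subseteq\Phi_-$ works. Explicitly, following the proof of \cref{thm:weight-partition}, we take data $0$ on $\Phi_+$, $1$ at $j=1$ and $-1$ at $j=3$. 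This gives $(0,1,0,-1)$, and indeed $\sum(-1)^j\mu_jz_j=-5+5=0$.

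\textbf{The case $J(10,5)$, upper bound.} Here $\mu=(1,9,35,75,90,42)$ and $N=252$, so $N/4=63$. For the bound $R_4\le4$, I would exhibit explicit integer data $z_j\equiv j\pmod 2$ with range $4$ satisfying $\sum_j(-1)^j\mu_jz_j=0$, by \cref{lem:one-moment}. Writing $z_j=(j\bmod2)+2x_j$, this amounts to $\sum(-1)^j\mu_jx_j=63$, with the $z_j$ lying in a window of length $4$. One candidate is $x_2=2$ and $x_5=-1$ with all other $x_j=0$: then $2\cdot35+42=112$, which is not $63$, so this choice fails. I would instead search systematically, using combinations of $\pm35,\pm42,\pm75,\pm90,\pm9,\pm1$ with bounded coefficients. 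For example, $90-35+9-1=63$ corresponds to $x_4=1$, $x_2=-1$, $x_1=-1$, $x_0=1$. This gives $z=(2,-1,-2,1,2,1)$, whose range is $4$.

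\textbf{The case $J(10,5)$, lower bounds.} Range $2$ is excluded by \cref{thm:weight-partition}(i): the subset sums of $\{1,35,90\}$ and of $\{9,75,42\}$ never equal $63$. Odd range is excluded as follows. Data of range $3$ can be shifted by an even integer to lie in $[0,3]$ or $[-1,2]$. By \cref{thm:weight-partition}(ii), range at most $3$ requires subsets $S_+\subseteq\{1,35,90\}$ and $S_-\subseteq\{9,75,42\}$ with $\mu(S_+)-\mu(S_-)=63$. The possible values of $\mu(S_+)$ are $0,1,35,36,90,91,125,126$. Subtracting $63$ gives $-63,-62,-28,-27,27,28,62,63$. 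None of these lies in the set of values of $\mu(S_-)$, namely $0,9,42,51,75,84,117,126$. Hence $R_4\ge4$.

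The main obstacle is just this finite case check. I would organize it through the subset-sum lists above, so that the verification stays transparent.
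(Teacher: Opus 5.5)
Your strategy is the paper's. You get $R_m=1$ from $\deg r_-=m$. For $J(6,3)$ you apply \cref{thm:weight-partition}(i) to the single negative-sign node of multiplicity $5=20/4$. For $J(10,5)$ you use the same two subset-sum lists to show that no $\mu(S_+)-\mu(S_-)$ equals $63$, and then give an explicit range-four witness. The $J(6,3)$ case and the lower bound $R_4\geq4$ are correct as written.

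\textbf{The witness for $R_4\leq4$ is wrong.} The identity $90-35+9-1=63$ is correct. However, in $\sum_j(-1)^j\mu_jx_j$ the term $-1$ comes from $j=0$ with $x_0=-1$, not $x_0=1$. Your assignment gives $\sum_j(-1)^j\mu_jx_j=1+9-35+90=65$.

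Correspondingly, for $z=(2,-1,-2,1,2,1)$ you get $\sum_j(-1)^j\mu_jz_j=2+9-70-75+180-42=4\neq0$. By \cref{lem:one-moment}, its interpolant on the six supported eigenvalues therefore has degree exactly $5$. So this vector is not admissible phase data for the degree bound $4$, and the upper bound is not established by what you wrote.

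\textbf{The fix is one sign.} Take $x_0=-1$, which gives $z=(-2,-1,-2,1,2,1)$. This vector has range $4$ and satisfies $-2+9-70-75+180-42=0$. After subtracting $z_0$ it becomes $(0,1,0,3,4,3)$, the negative of the paper's witness $(0,-1,0,-3,-4,-3)$.

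With that correction your proof coincides with the paper's. You should also delete the abandoned trial $x_2=2$, $x_5=-1$. Since the only error slipped through an unchecked translation step, verify the final moment sum on $z$ itself, not just on $x$.
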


\begin{proof}
For $m=3$, the eigenvalues are $(9,3,-1,-3)$ and the multiplicities
$\mu_j$ are $(1,5,9,5)$.  Either negative-sign node has weight $5/20=1/4$,
so \cref{thm:weight-partition}(i) gives $R_2=2$.
Explicitly,
\[
 q(x)=\frac{-x^2+8x+9}{24}
\]
has values $(0,1,0,-1)$, and $(W/2)q(A)$ attains time $2\pi/W$.

For $m=5$, the eigenvalues and multiplicities are
\[
 (\lambda_j)=(25,15,7,1,-3,-5),\qquad
 (\mu_j)=(1,9,35,75,90,42).
\]
The total multiplicity is $252$.  The possible subset sums in the two
sign classes are, respectively,
\begin{align*}
 \mathcal A&=\{0,1,35,36,90,91,125,126\},\\
 \mathcal B&=\{0,9,42,51,75,84,117,126\}.
\end{align*}
No element of $\mathcal A$ differs from an element of
$\mathcal B$ by $63=252/4$.
By \cref{thm:weight-partition}(ii), $R_4>3$.
On the other hand, the data
\[
 (y_j)=(0,-1,0,-3,-4,-3)
\]
have the required parities, range four, and signed weighted sum zero,
since $9+225-360+126=0$.  By \cref{lem:one-moment} their interpolating
polynomial $q$ has degree at most four.  The Hamiltonian $(W/4)q(A)$
therefore attains $4\pi/W$, proving $R_4=4$.
Finally, in both cases $\deg r_-=m$, so $R_m=1$.
\end{proof}

\section{Further questions}\label{sec:conclusion}

For $m=2q-1$ with $q$ a power of two, $R_q$ grows exponentially
in $m$, while $R_{m-t}=\exp(o(m))$ for $t=o(m)$ and
$R_{m-1}=O(m)$.
Determining the smallest degree bound $d$ for which $R_d=\exp(o(m))$
when $m-d$ is proportional to $m$, the exact exponential growth
constants, and the full sequence $R_d$ remains open.
By \cref{rem:johnson-relaxation}, range estimates using only the
highest coefficient in the basis $B_k$ cannot improve the necessary
degree threshold $\rho_0$.  The bound
$R_{m-1}\leq3m-2$ does not decide whether $R_{m-1}=o(m)$.

For hypercubes, \cref{thm:prime-cube} determines the minimum transfer time for every degree bound in
odd prime dimensions.  In general dimensions, the affine rigidity range
extends to \eqref{eq:cube-log-rigidity}, but the first degree allowing
strict improvement over $n\pi/W$ is not determined.  

The optimization permits arbitrary real polynomial coefficients.
Requiring nonnegative off-diagonal entries or restricting individual
couplings leads to additional constraints on the same phase data.

\begingroup
\sloppy

\endgroup


\begin{thebibliography}{99}
\setlength{\itemsep}{0pt}

\bibitem{Bose2003}
S.~Bose,
Quantum communication through an unmodulated spin chain,
\emph{Phys. Rev. Lett.} \textbf{91} (2003), 207901.

\bibitem{ChristandlEtAl2004}
M.~Christandl, N.~Datta, A.~Ekert, and A.~J. Landahl,
Perfect state transfer in quantum spin networks,
\emph{Phys. Rev. Lett.} \textbf{92} (2004), 187902.

\bibitem{ChristandlEtAl2005}
M.~Christandl, N.~Datta, T.~C. Dorlas, A.~Ekert, A.~Kay, and A.~J. Landahl,
Perfect transfer of arbitrary states in quantum spin networks,
\emph{Phys. Rev. A} \textbf{71} (2005), 032312.

\bibitem{ChristandlVinetZhedanov2017}
M.~Christandl, L.~Vinet, and A.~Zhedanov,
Analytic next-to-nearest-neighbor $XX$ models with perfect state transfer and
fractional revival,
\emph{Phys. Rev. A} \textbf{96} (2017), 032335.

\bibitem{Coutinho2016Extremal}
G.~Coutinho,
Spectrally extremal vertices, strong cospectrality and state transfer,
\emph{Electron. J. Combin.} \textbf{23} (2016), Paper P1.46,
\href{https://doi.org/10.37236/5031}{doi:10.37236/5031}.

\bibitem{Godsil2011Periodic}
C.~Godsil,
Periodic graphs,
\emph{Electron. J. Combin.} \textbf{18} (2011), Paper P23, 15 pp.

\bibitem{Godsil2012}
C.~Godsil,
State transfer on graphs,
\emph{Discrete Math.} \textbf{312} (2012), 129--147.

\bibitem{Godsil2012When}
C.~Godsil,
When can perfect state transfer occur?,
\emph{Electron. J. Linear Algebra} \textbf{23} (2012), 877--890.

\bibitem{GodsilSmith2024}
C.~Godsil and J.~Smith,
Strongly cospectral vertices,
\emph{Australas. J. Combin.} \textbf{88} (2024), 1--21.

\bibitem{JafarizadehSufiani2008}
M.~A. Jafarizadeh and R.~Sufiani,
Perfect state transfer over distance-regular spin networks,
\emph{Phys. Rev. A} \textbf{77} (2008), 022315.

\bibitem{Kay2010}
A.~Kay,
Perfect, efficient, state transfer and its application as a constructive tool,
\emph{Int. J. Quantum Inf.} \textbf{8} (2010), 641--676.

\bibitem{Monterde2022}
H.~Monterde,
Strong cospectrality and twin vertices in weighted graphs,
\emph{Electron. J. Linear Algebra} \textbf{38} (2022), 494--518.

\bibitem{NessAlbertiSagi2022}
G.~Ness, A.~Alberti, and Y.~Sagi,
Quantum speed limit for states with a bounded energy spectrum,
\emph{Phys. Rev. Lett.} \textbf{129} (2022), 140403.

\bibitem{DLMF}
NIST Digital Library of Mathematical Functions,
Chapter~18: Orthogonal Polynomials, Sections~18.3 and~18.5,
\url{https://dlmf.nist.gov/18}, accessed 19 September 2026.

\bibitem{SongPowersParity}
X.~Song,
Perfect state transfer under matrix powers: parity and spectral arithmetic,
\href{https://arxiv.org/abs/2609.18018}{arXiv:2609.18018}, 2026.

\bibitem{VinetZhan2020}
L.~Vinet and H.~Zhan,
Perfect state transfer on weighted graphs of the Johnson scheme,
\emph{Lett. Math. Phys.} \textbf{110} (2020), 2491--2504.

\end{thebibliography}
\end{document}